\documentclass[12pt]{article}
\usepackage{amsmath}
\usepackage{graphicx}
\usepackage{enumerate}
\usepackage{natbib}
\usepackage{url} 
\usepackage{authblk}
\newcommand{\blind}{1}

\addtolength{\oddsidemargin}{-.5in}%
\addtolength{\evensidemargin}{-1in}%
\addtolength{\textwidth}{1in}%
\addtolength{\textheight}{1.7in}%
\addtolength{\topmargin}{-1in}%

\allowdisplaybreaks[4]
\usepackage{float}
\usepackage{mathrsfs}
\usepackage{amsfonts}   
\usepackage{scalerel}
\usepackage{times}
\usepackage{bm}
\usepackage{colonequals}
\usepackage[plain,noend]{algorithm2e}
\usepackage{caption}
\usepackage{cmll}
\usepackage{geometry}
\usepackage{setspace}
\usepackage{amsthm}
\usepackage{color}
\usepackage{multirow}
\usepackage{booktabs}
\usepackage{subcaption}
\makeatletter

\newtheorem{theorem}{\bf{Theorem}}
\newtheorem{remark}{\bf{Remark}}
\newtheorem{lemma}{\bf{Lemma}}

\newtheorem{example}{\bf{Example}}
\newtheorem{proposition}{\bf{Proposition}}
\newtheorem{condition}{\bf{Condition}}

\newcommand{\Var}{\rm Var}
\newcommand{\var}{{\rm var}}
\newcommand{\cF}{\mathcal{F}}
\newcommand{\cT}{\mathcal{T}}
\newcommand{\cL}{\mathcal{L}}
\newcommand{\cW}{\mathcal{W}}
\newcommand{\cP}{\mathcal{P}}
\newcommand{\bV}{V}
\newcommand{\bU}{U}
\newcommand{\bX}{X}
\newcommand{\bZ}{Z}

\newcommand{\bzeta}{\zeta}

\newcommand{\cS}{\scaleto{\mathcal{S}}{5pt}}
\newcommand{\cC}{\scaleto{\mathcal{C}}{5pt}}
\newcommand{\cG}{\scaleto{\mathcal{G}}{5pt}}
\makeatother

\graphicspath{{./figures/}}


\def\T{{ \mathrm{\scriptscriptstyle T} }}

\begin{document}

		\def\spacingset#1{\renewcommand{\baselinestretch}%
			{#1}\small\normalsize} \spacingset{1}

		
		\if1\blind
		{
			\title{\bf A Maximin Optimal Approach for Sampling Designs in Two-phase Studies}
			\author[a,b]{Ruoyu Wang}
			\author[a]{Qihua Wang\thanks{Email: qhwang@amss.ac.cn}}
			\author[c]{Wang Miao}
			\affil[a]{Academy of Mathematics and Systems Science, Chinese Academy of Sciences, Beijing, P.R. China}
			\affil[b]{Department of Biostatistics, Harvard University, Boston, USA}
			\affil[c]{Department of Probability and Statistics, Peking University, Beijing, P.R. China}
			\date{}
			\maketitle
		} \fi
		
		\if0\blind
		{
			\bigskip
			\bigskip
			\bigskip
			\begin{center}
				{\LARGE\bf A Maximin Optimal Approach for Sampling Designs in Two-phase Studies}
			\end{center}
			\medskip
		} \fi
		
		\bigskip
		\begin{abstract}
		Data collection costs can vary widely across variables in data science tasks. Two-phase designs can be employed to save data collection costs. This paper considers the two-phase studies where inexpensive variables are collected for all subjects in the first phase, and expensive variables are measured for a subsample of subjects in the second phase based on a predetermined sampling rule. The estimation efficiency under two-phase designs relies heavily on the sampling rule. Existing literature primarily focuses on designing sampling rules for estimating a scalar parameter in some parametric models or specific estimating problems. However, real-world scenarios are usually model-unknown and involve two-phase designs for model-free estimation of a scalar or multi-dimensional parameter. This paper proposes a maximin criterion to design an optimal sampling rule based on semiparametric efficiency bounds. The proposed method is model-free and applicable to general estimating problems. The resulting sampling rule can minimize the semiparametric efficiency bound when the parameter is scalar and improve the bound for every component when the parameter is multi-dimensional. Simulation studies demonstrate that the proposed designs reduce the variance of the resulting estimator in various settings. The implementation of the proposed design is illustrated in a real data analysis.
		\end{abstract}
		
		\noindent%
		{\it Keywords:}  Cost-effective sampling; Efficient influence function;  Incomplete data; Semiparametric efficiency, Subsample.
		\vfill
		
		\newpage
		\spacingset{1.5} 
	
	\section{Introduction} 
	The research of modern data science often involves variables with different collection costs. For example, the research objective in medical and genetic studies often involves analyzing the associations between disease status and biomarkers while adjusting for demographic factors such as age, gender, or race. 
	Measuring the biomarker can be fairly expensive compared to disease status and demographic factors that can be obtained through questionnaires or electronic health records. The two-phase design is commonly employed to achieve cost-effectiveness in ecological, epidemiological and genetic researches \citep{fattorini2017design, nab2021sampling,lin2013quantitative}, especially in large-scale studies such as the National Wilms' Tumor Study \citep{green1998comparison}, the Caribbean, Central, and South America network for HIV Epidemiology \citep{mcgowan2007cohort}, and the National Heart, Lung, and Blood Institute Exome Sequencing Project \citep{lin2013quantitative}.
	
	In two-phase studies considered in e.g., \cite{mcnamee2002optimal,mcisaac2014response,mcisaac2015adaptive} and \cite{tao2020optimal}, inexpensive variables are collected for all subjects in the first phase, while expensive variables are measured only on a subsample drawn according to some sampling rule in the second phase. The parameter of interest is estimated using observations from the two phases. 
	%
	The design of the second phase sampling rule is crucial to the efficiency of the resulting estimator. Many existing works develop sampling rules tailored for parameter estimation in conditional density models. \cite{zhou2014semiparametric} propose to fit a model to predict the expensive variables of different subjects and draw a sample from the subjects with extreme predicted values in the second phase. The proposed sampling rule performs well in the empirical studies in \citep{zhou2014semiparametric}. However, there is no theoretical guarantee for the optimality of the proposed sampling rules. Recently, researchers have developed optimal sampling rules that minimize the asymptotic variances of some specific estimators or semiparametric efficiency bound under the conditional density model. \cite{mcisaac2015adaptive} investigate the optimal sampling rule for the mean score estimator with discrete inexpensive variables. \cite{chen2022optimal} derive the optimal sampling rule for the inverse-probability weighted estimator and generalized raking estimator for parameter estimation in conditional density models. \cite{tao2020optimal} propose the optimal sampling rule that minimizes the semiparametric efficiency bound in a general
	conditional density estimation problem which may contain continuous inexpensive
	variables and nuisance parameters.
	A few existing works also consider other estimation problems than the conditional density model. \cite{mcnamee2002optimal} proposes the optimal rule for estimating sensitivity, specificity, and positive predictive value of a test with categorical test results and no covariate. \cite{gilbert2014optimal} study the optimal design that minimized the semiparametric efficiency bound for mean or mean difference estimation with discrete inexpensive variables and proposed to stratify the continuous inexpensive variables. \cite{zhang2024patient} explore a problem similar to \cite{gilbert2014optimal} while extending the scope by accounting for the selection bias in the first-phase data.
	
	Despite the fruitful works on sampling rule designs in the second phase, there still exist several challenges in practice. Firstly, current research on optimal design mainly focuses on parameter estimation in certain specific models, leaving many important problems unconsidered. Secondly, in many applications, there is a multi-dimensional parameter of interest; however, existing optimal designs work best for scalar parameters, and optimal designs for a multi-dimensional parameter vector are complicated and remain unstudied in the literature. The optimal sampling rule for one component does not necessarily lead to good estimation efficiency for other components. For instance, this issue arises when one estimate sensitivity, specificity, and positive predictive value simultaneously as illustrated by \citep{mcnamee2002optimal}. Therefore, it is crucial to strike a balance between estimation efficiency for different components to achieve efficiency gains for every component. 
	
	This paper proposes a model-free design approach for estimating a generic population parameter of interest, which can be either scalar or multi-dimensional. The proposed method applies to many important but understudied problems in two-phase sampling design, such as the estimation of quantiles, Pearson correlation coefficients, and average treatment effects. When designing the sampling rule, the proposed method focuses on optimizing the resulting semiparametric efficiency bound, which represents the smallest asymptotic variance that one can achieve for estimating the parameter of interest under regularity conditions. Once a sampling rule is established, various existing semiparametric methods can be employed to construct an efficient estimator whose asymptotic variance achieves the semiparametric efficiency bound. These methods include the targeted maximum likelihood method \citep{van2006targeted}, the estimating equation method \citep{tsiatis2007semiparametric}, and the one-step estimation method \citep{bickel1982adaptive}. Compared to existing works, our main contributions are as follows:
	\begin{itemize}
		\item  When the parameter is scalar, we derive the optimal sampling rule that minimizes the semiparametric efficiency bound subject to a budget constraint. In contrast to many existing methods, our approach does not make any parametric model assumptions and allows the	inexpensive variables to be continuous. Obtaining the optimal sampling rule is challenging	in this case because it involves an infinite-dimensional constrained optimization problem.
		\item We study the problem of two-phase sampling design with a multi-dimensional parameter of interest, which
		remains largely unexplored in the literature to our knowledge. The problem becomes more challenging when the parameter is multi-dimensional. Specifically, under the optimal sampling rule for one given component, the semiparametric efficiency bounds of other components may be larger than those under the uniform sampling rule that includes each subject with a constant probability. To resolve this issue, we define an objective function based on the semiparametric efficiency bound for each component and design the sampling rule by maximizing the objective function, which leads to a \emph{maximin optimal sampling rule}. The maximin optimal sampling rule outperforms the uniform rule regarding the semiparametric efficiency bound for every component under mild conditions. Nevertheless, it involves an intractable infinite-dimensional constrained maximin problem to obtain the proposed sampling rule. To tackle this, we design a novel finite-dimensional optimization problem that shares the same solution as the original infinite-dimensional problem, and obtain the solution by routine optimization algorithms.  
	\end{itemize}
	
	The proposed sampling rules depend on unknown quantities determined by the data generating process. 
	We adopt the standard ``pilot sample" approach to obtain the proposed sampling rules \citep{gilbert2014optimal,tao2020optimal}. To be specific, we select a simple random sample, referred to as a ``pilot sample", at the beginning of the second phase and then use 
	the pilot sample to estimate the unknown quantities. The remaining subjects are then selected based on the estimated sampling rule. 
	We establish the optimality of the proposed sampling rules and the consistency of their estimators. Simulation studies demonstrate that the proposed designs can achieve substantial efficiency gains compared to the uniform rule under various settings. The implementation of the proposed design is further illustrated in a real data example. 
	
	The proposed sampling rules are derived based on the efficient influence function of an estimation problem. The method to derive the efficient influence function is standard and thoroughly investigated in literature of semiparametric statistics. See, e.g., \cite{bickel1982adaptive,tsiatis2007semiparametric, van2012unified}. This implies that the proposed method is versatile and can be applied to a wide range of estimation problems as long as the corresponding efficient influence function is available. For instance, it can be employed in two-phase designs for data validation in electronic health records to address issues related to missing data and measurement errors \citep{lotspeich2022efficient, zhang2024patient}. Additionally, this method offers optimal designs for important but underexplored tasks, such as the selection of validation data in observational studies with missing confounders \citep{lin2014adjustment, yang2019combining}, and quantitative trait analysis with multiple traits of interest \citep{lin2013quantitative}.
	
	
	\section{Two-phase design and efficiency bound}\label{sec: efficiency bound}
	Let $U$ denote a vector of expensive variables, such as true disease status or biomarkers, and let $V$ denote a vector of inexpensive variables, such as age, gender, race, or surrogate measures for $U$.
	In a two-phase study, $\bV$  is collected in the first phase for all $n$ subjects in the study. In the second phase, a subsample of subjects is drawn from the $n$ subjects according to a sampling rule that may depend on $\bV$, and $\bU$ is measured for the subsample. Let $R$ be the sampling indicator in the second phase. $R = 1$ if the subject is included in the second stage and $\bU$ is observed; $R = 0$, otherwise. The inclusion probability for the second stage depends on the first-phase variable vector $\bV$, that is, $P(R = 1\mid \bV) = \rho(\bV)$ where $\rho(\cdot)$ is the sampling rule. We refer to $\bV$ and $\bU$ as the first-phase and second-phase variables, respectively. The sampling framework considered here is commonly adopted in the literature of two-phase studies \citep{chatterjee2003pseudoscore,gilbert2014optimal,tao2017efficient,tao2020optimal}, and has been widely used in epidemiological and genetic studies to investigate the effect of various risk factors and biomarkers on different diseases-related traits \citep{nab2021sampling, lin2013quantitative}.
	
	Suppose $\theta_{0}$ is the parameter of interest in the two-phase study, which is possibly a  multi-dimensional functional of the joint distribution of $(\bV,\bU)$. An important question is how to design the sampling rule $\rho(\cdot)$ in the second phase to minimize the asymptotic variance for estimating $\theta_{0}$. The optimal sampling rule that minimizes the asymptotic variance is usually model-dependent and estimator-specific \citep{chen2022optimal}. We here consider the model-free semiparametric efficient estimating problem and develop an optimal sample rule by minimizing the semiparametric efficient bound for estimating $\theta_{0}$. The semiparametric efficiency bound \citep{bickel1982adaptive,tsiatis2007semiparametric} is an extension of the Cramer-Rao bound to the semiparametric and nonparametric setting, which characterizes the smallest asymptotic variance one can achieve for estimating $\theta_{0}$ under the considered distribution class.
	Besides, researchers usually intend to control the sampling fraction $E[\rho(\bV)]$ in the second phase to manage the budget since $\bU$ is expensive to measure. This further motivates us to search for the sampling rule that minimizes the semiparametric efficiency bound under the constraint that $E[\rho(\bV)]$ is no larger than a given threshold.
	
	Next, we introduce the efficient influence function and the semiparametric efficiency bound under the two-phase design. The efficient influence function is an important concept in semiparametric theory which is useful in deriving the semiparametric efficiency bound and constructing efficient estimators that attain the bound \citep{tsiatis2007semiparametric}.
	Suppose $\psi(\bV,\bU)$ is the efficient influence function for estimating $\theta_{0}$ under a full data setting where $\bV$ and $\bU$ are observed for every subject. We write $\psi(\bV,\bU)$ as $\psi$ for short, wherever it does not cause confusion. Under two-phase designs, both the efficient influence function and the semiparametric efficiency bound are highly dependent on $\psi$. The expression of $\psi$ is well established in many important problems across survey sampling and modern epidemiological and clinical studies, including the outcome mean estimation, linear regression analysis, and average treatment effect estimation \citep{tsiatis2007semiparametric}. See Section \ref{sec: examples of EIF} in Supplementary Material for the expression of $\psi$ in the exemplified problems.
	
	Based on $\psi$, we can derive the efficient influence function and the semiparametric efficiency bound under the two-phase design without parametric model assumptions. 
	Assume throughout this paper that the distribution of $(\bV, \bU)$ has a density with respect to some dominance measure, e.g., the counting measure or the Lebesgue measure. We can derive semiparametric efficiency bound under the two-phase design in the following lemma by using techniques similar to those in \cite{tsiatis2007semiparametric} for missing data problems. 
	\begin{lemma}\label{lem: EIF}
		Let $\psi$ be the full data efficient influence function for estimating $\theta_{0}$ and $\rho(\cdot)$  the sampling rule in the second phase. Under the regularity conditions in Section \ref{subsec: model settings} in Supplementary Material,
		the efficient influence function under the two-phase design is 
		\begin{equation}\label{eq: EIF}
			\frac{R\psi}{\rho(\bV)} -\left(\frac{R}{\rho(\bV)} - 1\right)\Pi(\bV),
		\end{equation}
		and the semiparametric efficiency bound is
		\begin{equation*}
			E\left[\frac{\Sigma(\bV)}{\rho(\bV)}\right] + \Var[\Pi(\bV)],
		\end{equation*}
		where $\Pi(\bV) = E[\psi\mid \bV]$ and $\Sigma(\bV) = \Var[\psi\mid \bV]$ are the mean and variance of $\psi$ conditional on $\bV$, respectively.
	\end{lemma}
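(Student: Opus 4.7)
The proof plan is to identify the observed-data model as a classical missing-at-random (MAR) semiparametric model (since $R \perp U \mid V$ by construction and $\rho(\cdot)$ is design-known), derive the observed-data nuisance tangent space, verify that the proposed function \eqref{eq: EIF} is an influence function lying in that tangent space, and finally compute its variance.

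First, I would factor the observed-data density of $O = (V, R, RU)$ as
\begin{equation*}
f(O) = f(V)\,\rho(V)^{R}\bigl(1-\rho(V)\bigr)^{1-R}\,f(U\mid V)^{R},
\end{equation*}
and note that under the nonparametric model on $(V,U)$ with $\rho(\cdot)$ known, the observed-data tangent space decomposes orthogonally as
\begin{equation*}
\mathcal{T} = \bigl\{\, s_{1}(V) + R\, s_{2}(V,U) : E[s_{1}(V)] = 0,\; E[s_{2}(V,U)\mid V] = 0 \,\bigr\}.
\end{equation*}
Writing $\phi = R\psi/\rho(V) - (R/\rho(V)-1)\Pi(V) = R(\psi-\Pi(V))/\rho(V) + \Pi(V)$, I would identify $s_{1}^{*}(V) = \Pi(V)$ and $s_{2}^{*}(V,U) = (\psi - \Pi(V))/\rho(V)$, check that $E[s_{1}^{*}(V)] = E[\psi] = 0$ and $E[s_{2}^{*}(V,U)\mid V] = 0$, and conclude $\phi \in \mathcal{T}$.

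Next, to confirm $\phi$ is an influence function for $\theta_{0}$, I would take an arbitrary regular parametric submodel indexed by $\eta$ with observed-data score $s_{\mathrm{obs}} = s_{1}(V) + R s_{2}(V,U)$ and corresponding full-data score $s_{\mathrm{full}} = s_{1}(V) + s_{2}(V,U)$. Using $E[R\mid V,U] = \rho(V)$ together with the tower property, a short calculation gives $E[\phi\, s_{1}(V)] = E[\psi\, s_{1}(V)]$ and $E[\phi\, R\, s_{2}(V,U)] = E[\psi\, s_{2}(V,U)]$, so $E[\phi\, s_{\mathrm{obs}}] = E[\psi\, s_{\mathrm{full}}] = \partial \theta/\partial \eta\mid_{\eta=0}$ by the defining property of the full-data EIF $\psi$. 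Hence $\phi$ is a valid influence function in $\mathcal{T}$, which identifies it as the efficient influence function.

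Finally, the efficiency bound is $\Var[\phi]$. Using the decomposition $\phi = R(\psi-\Pi(V))/\rho(V) + \Pi(V)$, the cross term vanishes because $E[R(\psi-\Pi(V))\Pi(V)/\rho(V)] = E[(\psi-\Pi(V))\Pi(V)] = 0$ by iterated expectation, and $R^{2} = R$ gives
\begin{equation*}
E\!\left[\left(\frac{R(\psi-\Pi(V))}{\rho(V)}\right)^{\!2}\right] = E\!\left[\frac{(\psi-\Pi(V))^{2}}{\rho(V)}\right] = E\!\left[\frac{\Sigma(V)}{\rho(V)}\right],
\end{equation*}
while $\Var[\Pi(V)]$ provides the remaining piece. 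I expect the main obstacle to be the cleanest characterization of the observed-data tangent space in step one and the orthogonality bookkeeping needed to verify the pathwise differentiability in step two; the variance calculation itself is mechanical once the correct decomposition of $\phi$ is in hand. The regularity conditions deferred to the supplement are what guarantee that the parametric submodel calculations can be carried out and that $\psi$ is indeed the full-data EIF in a neighborhood of the truth.
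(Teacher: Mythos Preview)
Your proposal is correct and follows essentially the same route as the paper: factor the observed-data likelihood with $\rho(\cdot)$ known, identify the tangent space as $\{s_{1}(V)+R\,s_{2}(V,U):E[s_{1}]=0,\;E[s_{2}\mid V]=0\}$, verify that the candidate lies in it and satisfies the pathwise-derivative identity $E[\phi\,s_{\mathrm{obs}}]=E[\psi\,s_{\mathrm{full}}]$, and conclude it is the efficient influence function. The only difference is presentational: you explicitly decompose $\phi=R(\psi-\Pi(V))/\rho(V)+\Pi(V)$ and carry out the variance computation, whereas the paper leaves both the tangent-space membership check and the variance calculation as ``straightforward.''
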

	
	Several methods can be adopted to obtain estimators for $\theta_{0}$ that have the influence function \eqref{eq: EIF} and hence achieve the semiparametric efficiency bound. These methods include the targeted maximum likelihood method \citep{van2006targeted}, the estimating equation method \citep{tsiatis2007semiparametric}, and the one-step estimation method \citep{bickel1982adaptive}. Note that the semiparametric efficiency bound depends on $\rho(\cdot)$. Therefore, the efficiency of the efficient estimator can be further improved by employing a carefully designed sampling rule. 
	We delve into this topic in the subsequent sections.
	
	\section{Optimal design in two-phase studies}\label{sec: opt design}
	\subsection{Optimal design for a scalar parameter}\label{sec: scalar}
	In this section, we consider the optimal sampling rule when $\theta_{0}$ is scalar. Denote the  variance of $\psi$ conditional on $V$ as $\sigma^{2}(\bV)$ and assume $\sigma^{2}(V) > 0$. The semiparametric efficiency bound for  $\theta_{0}$ is
	\begin{equation}\label{eq: bound scalar}
		E\left[\frac{\sigma^{2}(\bV)}{\rho(\bV)}\right] + \Var[\Pi(\bV)].
	\end{equation}
	We search for the optimal sampling rule that minimizes the semiparametric efficiency bound \eqref{eq: bound scalar} subject to the constraint 
	\begin{equation}\label{eq: budget constraint second phase only}
		E[\rho(\bV)] \leq \varpi,
	\end{equation} 
	where $\varpi \in (0,1]$ is the maximal sampling fraction determined by study budgets. The problem is an infinite-dimensional optimization problem that is generally difficult to solve. 
	The constraint that the sampling probability belongs to $[0,1]$  further complicates the problem. Nonetheless, we find that the optimal sampling rule can be obtained via some constructive arguments. For any positive function $\lambda(\cdot)$ and positive $\tau$, let 
	\[
	\rho(\bV; \lambda, \tau) = 1\{\lambda(\bV) > \tau\} + \frac{1}{\tau}\lambda(\bV)1\{\lambda(\bV) \leq \tau\}.
	\]
	Note that $\rho(\bV;\sigma, \tau)$ is continuous and strictly decreasing with respect to $\tau$, where $\sigma(\cdot)$ is the conditional variance function of $\psi$ conditional on $V$. This implies that the equation $E[\rho(\bV; \sigma, \tau)] = \varpi$ has a unique solution, denoted by $\tau_{\cS}$.
	The form of the optimal sampling rule is obtained in the following theorem.
	\begin{theorem}\label{thm: optimal probability}
		The optimal sampling rule that minimizes the semiparametric efficiency bound \eqref{eq: bound scalar} under the constraint \eqref{eq: budget constraint second phase only} is   
		\[
		\rho_{\cS}(\cdot) = \rho(\cdot; \sigma, {\tau_{\cS}}) = 1\{\sigma(\cdot) > \tau_{\cS}\} + \frac{1}{\tau_{\cS}}\sigma(\cdot)1\{\sigma(\cdot) \leq \tau_{\cS}\},
		\]
		where $\tau_{\cS}$ is the unique solution of $E[\rho(\bV; \sigma, \tau)] = \varpi$ with respect to $\tau$.
	\end{theorem}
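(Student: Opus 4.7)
The plan is to reduce this constrained infinite-dimensional problem to a scalar, pointwise minimization by treating $\tau_{\cS}$ as a Lagrange multiplier. Since $\Var[\Pi(\bV)]$ is independent of the sampling rule, the task is equivalent to minimizing $E[\sigma^{2}(\bV)/\rho(\bV)]$ subject to $E[\rho(\bV)]\leq \varpi$ and $\rho(\bV) \in (0,1]$.

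First I would establish that the equation $E[\rho(\bV;\sigma,\tau)]=\varpi$ admits a unique positive solution $\tau_{\cS}$. Because $\rho(\bV;\sigma,\tau)=\min\{1,\sigma(\bV)/\tau\}$ is continuous and strictly decreasing in $\tau$ wherever $\sigma(\bV)>0$ (which holds almost surely by assumption), dominated convergence makes $\tau \mapsto E[\rho(\bV;\sigma,\tau)]$ continuous and strictly decreasing from $1$ (as $\tau \downarrow 0$) down to $0$ (as $\tau \uparrow \infty$), yielding a unique root at the target level $\varpi \in (0,1]$.

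The central step is a pointwise Lagrangian comparison. For each $\bV$, the scalar function $g(r)=\sigma^{2}(\bV)/r + \tau_{\cS}^{2} r$ is strictly convex on $(0,1]$ with unconstrained minimizer $\sigma(\bV)/\tau_{\cS}$, so its minimum on $(0,1]$ is attained at $\min\{1,\sigma(\bV)/\tau_{\cS}\}=\rho_{\cS}(\bV)$. Consequently, for any feasible rule $\rho(\cdot)$,
\begin{equation*}
\frac{\sigma^{2}(\bV)}{\rho(\bV)}+\tau_{\cS}^{2}\rho(\bV) \;\geq\; \frac{\sigma^{2}(\bV)}{\rho_{\cS}(\bV)}+\tau_{\cS}^{2}\rho_{\cS}(\bV)
\end{equation*}
almost surely. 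Taking expectations and using $E[\rho_{\cS}(\bV)]=\varpi$ gives
\begin{equation*}
E\!\left[\frac{\sigma^{2}(\bV)}{\rho(\bV)}\right]-E\!\left[\frac{\sigma^{2}(\bV)}{\rho_{\cS}(\bV)}\right]\;\geq\; \tau_{\cS}^{2}\bigl(\varpi-E[\rho(\bV)]\bigr)\;\geq\; 0,
\end{equation*}
which is the desired optimality.

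The main obstacle I anticipate is handling the boundary of the constraint set $\rho(\bV)\in[0,1]$ carefully: values at which $\rho$ hits $0$ must be excluded since they render the bound infinite, and $\rho_{\cS}$ saturates at $1$ on the event $\{\sigma(\bV)>\tau_{\cS}\}$, where the unconstrained first-order condition fails. The Lagrangian-plus-clipping argument absorbs both issues cleanly: strict convexity of $g$ guarantees the piecewise formula for the constrained minimizer, and the linear term $\tau_{\cS}^{2}\rho$ is exactly the multiplier that converts the budget slack $\varpi-E[\rho]$ into the non-negative surplus in the final display.
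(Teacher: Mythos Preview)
Your proposal is correct and follows essentially the same route as the paper: both arguments verify optimality by showing the Lagrangian-type inequality $E[\sigma^{2}(\bV)/\rho(\bV)] - E[\sigma^{2}(\bV)/\rho_{\cS}(\bV)] \geq \tau_{\cS}^{2}\bigl(\varpi - E[\rho(\bV)]\bigr) \geq 0$. The only cosmetic difference is that you obtain the pointwise comparison by directly minimizing the convex penalty $r\mapsto \sigma^{2}(\bV)/r + \tau_{\cS}^{2}r$ over $(0,1]$, while the paper reaches the same conclusion via the tangent-line inequality $1/z_{1}-1/z_{2}\geq (z_{2}-z_{1})/z_{2}^{2}$ followed by a short case split on $\{\sigma(\bV)\gtrless\tau_{\cS}\}$.
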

	Please refer to Section \ref{subsec: proof opt prob scalar} in Supplementary Material for the proof of Theorem \ref{thm: optimal probability}. Let us provide some intuitions for Theorem \ref{thm: optimal probability}. Note that minimizing the semiparametric efficiency bound \eqref{eq: bound scalar} is equivalent to minimizing $E[\sigma^{2}(\bV)/\rho(\bV)]$. Notice that $E[\rho(\bV)] \leq \varpi$. We have 
	$\varpi E[\sigma^{2}(\bV)/\rho(\bV)]\geq E[\rho(\bV)]E[\sigma^{2}(\bV)/\rho(\bV)] \geq (E[\sigma(\bV)])^{2}$ according to the Cauchy--Schwarz inequality and the equality hold only if $\rho(\bV) = \varpi\sigma(\bV) / E[\sigma(\bV)]$. Let $\tau^{*} = E[\sigma(\bV)] / \varpi$. Then one may conjecture that $\sigma(\bV)/ \tau^{*}$ is the optimal sampling rule that minimizes the semiparametric efficiency bound. However, $\sigma(\bV)/ \tau^{*}$ may not be a feasible sampling rule as it can be larger than one. Thus we consider the truncated version $\rho(\bV; \sigma, \tau) = 1\{\sigma(\bV) > \tau\} + \sigma(\bV)1\{\sigma(\bV) \leq \tau\}/\tau = \min\{\sigma(\bV)/\tau,1\}$ and show that it is the desired optimal rule if we determine $\tau$ by solving the equation $E[\rho(\bV; \sigma, \tau)] = \varpi$.
	
	Theorem \ref{thm: optimal probability} gives the optimal sampling rule $\rho_{\cS}(\cdot)$ for estimating a general population parameter.
	The optimal sampling rule $\rho_{\cS}(\cdot)$ is determined by $\sigma(\cdot) = \var(\psi\mid V = \cdot)$. Thus, to estimate the optimal sampling rule, one only needs to estimate the conditional variance function, which boils down to a well-studied problem in statistics.
	The form of $\rho_{\cS}(\cdot)$ is analogous to the Neyman allocation \citep{cochran2007sampling} for mean estimation problems in survey sampling which has been extended to scalar mean and regression coefficient estimation problems in two-phase studies \citep{reilly1995mean,gilbert2014optimal,mcisaac2014response,chen2022optimal}. Our result further extends the Neyman allocation to the problem of estimating a generic parameter. In addition, existing results often assume that the first-phase variables $V$ are discrete and propose to discretize the continuous variables regardless of the information loss caused by discretization \citep{mcisaac2014response}. When $\bV$ is a discrete variable, minimizing the semiparametric efficiency bound under the budget constraint is a finite-dimensional optimization problem and can be solved by the Lagrange multiplier method.
	When $\bV$ is a continuous variable, it becomes more challenging to find the optimal sampling rule, as it requires optimization over an infinite-dimensional function space. We overcome this difficulty by constructively suggesting the sampling rule $\rho_{\cS}(\cdot)$ based on the ideas discussed in the last paragraph. 
	
	It should be pointed out that \cite{tao2020optimal} also explore the minimization problem of semiparametric efficiency bound with budget constraint. They focus on the parameter estimation problem under a semiparametric conditional density model. Their approach allows the first-phase variable to be continuous and simplifies the infinite-dimensional optimization problem by invoking the Neyman--Pearson lemma. However, the simplified problem generally remains infinite-dimensional and intractable, except under the logistic model or the linear model with Gaussian error. In contrast, this paper considers the estimation of a generic population parameter. Building on the constructive ideas outlined in previous paragraphs, Theorem \ref{thm: optimal probability} effectively converts the sampling rule design problem into a manageable conditional variance estimation problem and is applicable to a wide range of estimation tasks.

	\subsection{Maximin design for a multi-dimensional parameter}\label{subsec: multivariate}
	Theorem \ref{thm: optimal probability} obtains the optimal design for a scalar parameter. In practice, the parameter of interest $\theta_{0}$ may well be multi-dimensional, especially when analyzing multiple traits and biomarkers or evaluating multi-level treatments.
	In this section, we study the two-phase design for a multi-dimensional parameter.
	Let $\theta_{0} = (\theta_{01},\dots, \theta_{0d})^{\T}$ and $\psi = (\psi_{1}, \dots, \psi_{d})^{\T}$ be the $d$-dimensional parameter and the full data influence function, respectively. The semiparametric efficiency bound for the $j$th component $\theta_{0j}$ is 
	\[E\left[\frac{\sigma_{j}^{2}(\bV)}{\rho(\bV)}\right] + \Var[\Pi_{j}(\bV)],\]
	where $\sigma_{j}(\bV) = \sqrt{\Var[\psi_{j}\mid \bV]}$ and $\Pi_{j}(\bV)$ is the $j$th component of $\Pi(\bV)$ for $j = 1,\dots,d$. The optimal sampling rule that minimizes the semiparametric efficiency bound for $\theta_{0j}$ can be obtained according to Theorem \ref{thm: optimal probability}. Let us denote the optimal sampling rule for $\theta_{0j}$ by $\rho_{j}(\cdot)$. 
	Compared to the scalar-parameter case, the main difficulty for the multi-dimensional-parameter problems is that the optimal rules for different components may be different. Thus, there usually does not exist a sampling rule that  simultaneously minimizes the semiparametric efficiency bound for different parameter components.  
	
	One intuitive approach to determining a sampling rule for a multi-dimensional parameter is to minimize the sum of the semiparametric efficiency bounds for each component
	\begin{equation}\label{eq: sum cri}
		\sum_{j=1}^{d}E\left[\frac{\sigma_{j}^{2}(\bV)}{\rho(\bV)}\right] + \sum_{j=1}^{d}\Var[\Pi_{j}(\bV)].
	\end{equation}
	This criterion is analogous to the widely-used A-optimality in the experiment design literature \citep{kiefer1959optimum}.
	Arguments similar to those in the proof of Theorem \ref{thm: optimal probability} can show that the sampling rule minimizing \eqref{eq: sum cri} is 
	\[
	\begin{aligned}
		\rho_{\rm sum}(\cdot) & = \rho(\cdot; \sigma_{\rm sum}, \tau_{\rm sum}) \\
		& = 1\{\sigma_{\rm sum}(\cdot) >  \tau_{\rm sum}\} + \frac{1}{\tau_{\rm sum}}\sigma_{\rm sum}(\cdot)1\{\sigma_{\rm sum}(\cdot) \leq  \tau_{\rm sum}\},
	\end{aligned}
	\]
	where $\sigma_{\rm sum}(\cdot) = \sqrt{\sum_{j=1}^{d}\sigma_{j}^{2}(\cdot)}$ and $ \tau_{\rm sum}$ denotes the solution of the equation $E[\rho(\bV; \sigma_{\rm sum}, \tau)] = \varpi$. However, the sum of semiparametric efficiency bounds may not be suitable in practice when the interest lies on each component. For instance, it could be obscure to sum up the semiparametric efficiency bound for per capita income and sex ratio in a survey study.
	
	The most straightforward sampling rule that satisfies the budget constraint in Equation \eqref{eq: budget constraint second phase only} is the uniform sampling rule. Since the uniform rule is intuitive and easy to implement, a designed sampling rule is expected to perform better than the uniform rule. Unfortunately, neither $\rho_{\rm sum}(\cdot)$ nor $\rho_{j}(\cdot)$ ($j \in {1, \dots, d}$) fulfills this requirement.
	Specifically, the semiparametric efficiency bound for some component of $\theta_{0}$ may be larger under $\rho_{\rm sum}(\cdot)$ or $\rho_{j}(\cdot)$ than that under the uniform rule. A simple example illustrating this issue is provided in Section \ref{subsec: multidim}  in Supplementary Material.	Thus, it is important to design a sampling rule that performs better than the uniform rule. This problem, however, remains largely unexplored in the literature to our knowledge.
	
	In the same spirit, we consider a more general problem. Given a benchmark sampling rule $\rho_{0}(\cdot)$ that satisfies $E[\rho_{0}(\bV)] = \varpi$, such as the uniform rule, our objective is to search for a sampling rule under which the semiparametric efficiency bound for each parameter is smaller or at least no larger than the corresponding semiparametric efficiency bound under $\rho_{0}(\cdot)$. Hereafter, a sampling rule $\rho(\cdot)$ is considered to \emph{dominates} $\rho_{0}(\cdot)$ if the semiparametric efficiency bound for each parameter component under $\rho(\cdot)$ is smaller than that under $\rho_{0}(\cdot)$.
	
	Let $\xi_{j} = E[\sigma_{j}^{2}(\bV) / \rho_{0}(\bV)]$. Define $b_{j} = E\left[\sigma_{j}^{2}(\bV)/\rho_{0}(\bV)\right] + \Var[\Pi_{j}(\bV)]$ as the semiparametric efficiency bound for the $j$th component under the benchmark sampling rule $\rho_{0}(\cdot)$ for $j = 1,\dots, d$. For any $\rho(\cdot)$, the expression
	\[
	b_{j}^{-1}\left(\xi_{j} - E\left[\frac{\sigma_{j}^{2}(\bV)}{\rho(\bV)}\right]\right)
	\]
	measures the relative improvement in the semiparametric efficiency bound for the $j$th component compared to the benchmark rule $\rho_{0}(\cdot)$. We focus on relative improvements to provide a normalized measure that accounts for different scales in the efficiency bounds across components.
	Consider the minimum of the difference over all components 
	\[M(\rho) = \min_{j=1,\dots,d}\left\{b_{j}^{-1}\left(\xi_{j} - E\left[\frac{\sigma_{j}^{2}(\bV)}{\rho(\bV)}\right]\right)\right\}.\]
	When maximizing $M(\rho)$ over a candidate set of sampling rules $\cP$ that contains $\rho_{0}(\cdot)$,
	the resulting maximin optimal sampling rule $\rho_{*}(\cdot)$ satisfies $M(\rho_{*}) = \max_{\rho\in\cP}M(\rho) \geq M(\rho_{0}) = 0$. This ensures that the semiparametric efficiency bound for every component under $\rho_{*}(\cdot)$ is no larger than that under $\rho_{0}(\cdot)$.
	The idea to derive the sampling rule by solving the maximin problem $\max_{\rho\in\cP}M(\rho)$ resembles the minimax regret strategy used in decision theory \citep{cox1979theoretical}, prediction \citep{meinshausen2015maximin}, experimental design \citep{dette1997designing}, and hypothesis testing \citep{liu2022minimax} for various purposes. Here, the strategy is employed to ensure that the resulting sampling rule can improve the benchmark sampling rule in terms of the semiparametric efficiency bound for every parameter component, which is not precedented in the context of two-phase sampling design to our knowledge. 
	
	Although guaranteed to be no worse than $\rho_{0}(\cdot)$, the actual improvement that $\rho_{*}(\cdot)$ can achieve depends on the choice of $\cP$. For example, if $\cP$ is the singleton set $\{\rho_{0}(\cdot)\}$, then $\rho_{*}(\cdot) = \rho_{0}(\cdot)$ and no improvement can be achieved.
	On the other hand, if there exists a sampling rule $\rho(\cdot) \in \cP$ that dominates $\rho_{0}(\cdot)$, then we have $M(\rho_{*}) \geq M(\rho) > 0$ which implies $\rho_{*}(\cdot)$ also dominates $\rho_{0}(\cdot)$. Hence, a careful specification of the set $\cP$ is necessary to achieve significant efficiency improvement over $\rho_{0}(\cdot)$.
	
	We provide some sensible candidates for $\cP$. Recall that $\rho_{j}(\cdot)$ is the optimal sampling rule for $\theta_{0j}$ for $j=1,\dots,d$. A reasonable choice is to take $\cP = \cP_{\cC} \colonequals \{\rho_{0}(\cdot) + \sum_{j=1}^{d}w_{j}(\rho_{j}(\cdot) - \rho_{0}(\cdot))\}$ to be the set which consists of all convex combinations of the component-wise optimal sampling rule $\rho_{j}(\cdot)$ and the benchmark rule $\rho_{0}(\cdot)$. This results in the following constrained maximin problem
	\begin{equation}\label{eq: maximin constrained}
		\begin{aligned}
			&\max_{w\in \cW}\min_{j=1,\dots,d} \left\{b_{j}^{-1}\left(\xi_{j} -E\left[\frac{\sigma_{j}^{2}(\bV)}{\rho_{0}(\bV) + \sum_{k=1}^{d}w_{k}(\rho_{k}(\bV) - \rho_{0}(\bV))}\right]\right)\right\},
		\end{aligned}
	\end{equation}
	where $\cW=\{w=(w_{1}, \dots, w_{d}) :  \sum_{j=1}^{d}w_{j} \leq 1, \ 0\leq w_{j} \leq 1, \ \text{for} \ j=1,\dots,d\}$. In problem \eqref{eq: maximin constrained}, by taking the feasible set to be the set $\cP_{\cC}$, we ensure that the optimization problem is a finite-dimensional convex problem and hence allows for effective computation. Let $w_{\cC} = (w_{{\cC},1}, \dots, w_{{\cC},d})$ be the solution of the maximin problem \eqref{eq: maximin constrained}. Then we obtain the constrained maximin optimal sampling rule  
	\[\rho_{\cC}(\cdot) = \rho_{0}(\cdot) + \sum_{j=1}^{d}w_{{\cC},j}\{\rho_{j}(\cdot) - \rho_{0}(\cdot)\}\]
	that solves the constrained maximin problem \eqref{eq: maximin constrained}.
	By using the optimal sampling rule of each component as basis functions, this method takes the accuracy of each component into consideration. By taking a convex combination and solving the maximin problem \eqref{eq: maximin constrained}, we achieve a balance in the estimation efficiency of different components. The resulting sampling rule $\rho_{\cC}(\cdot)$ dominates $\rho_{0}(\cdot)$ as long as $(0,\dots, 0)$ is not a solution of problem \eqref{eq: maximin constrained}.  
	
	In problem \eqref{eq: maximin constrained}, the optimization problem is simplified by constraining the feasible set $\cP$ to be the finite-dimensional set $\cP_{\cC}$. However, the resulting sampling rule $\rho_{\cC}(\cdot)$ might be suboptimal because many promising sampling rules may not be in $\cP_{\cC}$.
	To address this concern, one may consider the problem of maximizing $M(\rho)$ over all sampling rules under the budget constraint, i.e., taking $\cP = \cP_{\cG} \colonequals \{\rho(\cdot): 0\leq \rho(\cdot)\leq 1, \ E[\rho(\bV)] \leq \varpi\}$. Then we arrive at the maximin problem
	\begin{equation}\label{eq: maximin}
		\max_{\rho \in \cP_{\cG}} \min_{j=1,\dots,d} \left\{b_{j}^{-1}\left(\xi_{j} - E\left[\frac{\sigma_{j}^{2}(\bV)}{\rho(\bV)}\right]\right)\right\}.
	\end{equation}
	Solving the problem \eqref{eq: maximin} directly is challenging due to the functional nature of $\rho(\cdot)$ and the fact that it involves an infinite-dimensional maximin problem.
	To resolve this issue, by invoking the result in Theorem \ref{thm: optimal probability}, we design a novel finite-dimensional optimization problem that shares the same solution as \eqref{eq: maximin} .    
	\begin{theorem}\label{thm: finite-dimensional equiv}
		The global maximin optimal sampling rule that solves problem \eqref{eq: maximin} is 
		\[
		\begin{aligned}
			\rho_{\cG}(\cdot) = \rho(\cdot; \sigma_{w_{\cG}}, \tau_{w_{\cG}})
			= 1\{\sigma_{w_{\cG}}(\cdot) >  \tau_{w_{\cG}}\} + \frac{1}{\tau_{w_{\cG}}}\sigma_{w_{\cG}}(\cdot)1\{\sigma_{w_{\cG}}(\cdot) \leq  \tau_{w_{\cG}}\},
		\end{aligned}
		\]
		where $w_{\cG} = (w_{{\cG}, 1}, \dots, w_{{\cG}, d})^{\T}$ is the solution of the problem
		\begin{equation}\label{eq: min-max}
			\min_{w\in \cW^{\dag}} \left\{\sum_{j=1}^{d}w_{j}b_{j}^{-1}\xi_{j} - E\left[\sigma_{w}(\bV)\max\{\sigma_{w}(\bV), \tau_{w} \}\right]\right\},
		\end{equation}
		$\cW^{\dag}=\{w=(w_{1}, \dots, w_{d}) :  \sum_{j=1}^{d}w_{j} = 1, \ 0\leq w_{j} \leq 1, \ \text{for} \ j=1,\dots,d\}$, $\sigma_{w}(\cdot) = \sqrt{\sum_{j=1}^{d}w_{j}b_{j}^{-1}\sigma_{j}^{2}(\cdot)}$
		and $\tau_{w}$ is the unique solution of $E[\rho(\bV;\sigma_{w},\tau)] = \varpi$ with respect to $\tau$.
	\end{theorem}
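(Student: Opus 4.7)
The plan is to recast the infinite-dimensional maximin \eqref{eq: maximin} as a two-player zero-sum game on $\cP_{\cG}\times\cW^{\dag}$, apply weak duality, solve the inner maximization via Theorem \ref{thm: optimal probability}, and close the duality gap by a subgradient/KKT argument on the resulting finite-dimensional convex program \eqref{eq: min-max}.

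First I would exploit the simplex identity $\min_{j=1,\dots,d}a_j=\min_{w\in\cW^{\dag}}\sum_{j=1}^{d}w_j a_j$ to rewrite $M(\rho)=\min_{w\in\cW^{\dag}}L(\rho,w)$, where
\[
L(\rho,w)\colonequals\sum_{j=1}^{d}w_j b_j^{-1}\Bigl(\xi_j-E\bigl[\sigma_j^{2}(\bV)/\rho(\bV)\bigr]\Bigr),
\]
so that \eqref{eq: maximin} becomes $\sup_{\rho\in\cP_{\cG}}\inf_{w\in\cW^{\dag}}L(\rho,w)$ and weak duality immediately yields $\sup\inf L\leq\inf\sup L$. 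For each fixed $w\in\cW^{\dag}$, collecting the $\rho$-dependent terms gives $\sup_{\rho}L(\rho,w)=\sum_{j}w_j b_j^{-1}\xi_j-\inf_{\rho\in\cP_{\cG}}E[\sigma_w^{2}(\bV)/\rho(\bV)]$, and Theorem \ref{thm: optimal probability}, applied with the positive function $\sigma_w$ in place of $\sigma$, identifies the minimizer as $\rho(\cdot;\sigma_w,\tau_w)$ and the minimum value as $E[\sigma_w(\bV)\max\{\sigma_w(\bV),\tau_w\}]$. Therefore $\inf_{w}\sup_{\rho}L$ is precisely \eqref{eq: min-max}, attained at $w_{\cG}$ with corresponding inner maximizer $\rho_{\cG}$.

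The hard part is closing the duality gap, i.e., proving $M(\rho_{\cG})=F(w_{\cG})$, where $F$ denotes the objective of \eqref{eq: min-max}. My plan is a subgradient computation: since $\rho_{\cG}$ is feasible for every $w$, we have $G(w)\colonequals E[\sigma_w\max\{\sigma_w,\tau_w\}]=\inf_{\rho\in\cP_{\cG}}E[\sigma_w^{2}/\rho]\leq E[\sigma_w^{2}/\rho_{\cG}]=\sum_{j}w_j b_j^{-1}E[\sigma_j^{2}/\rho_{\cG}]$, with equality at $w_{\cG}$ by Theorem \ref{thm: optimal probability}. Hence $(b_j^{-1}E[\sigma_j^{2}/\rho_{\cG}])_{j=1}^{d}$ is a supergradient of the concave $G$ at $w_{\cG}$, and $g_F\colonequals(b_j^{-1}(\xi_j-E[\sigma_j^{2}/\rho_{\cG}]))_{j=1}^{d}$ is a subgradient of the convex $F(w)=\sum_j w_j b_j^{-1}\xi_j-G(w)$ at $w_{\cG}$. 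Under the regularity of Section \ref{subsec: model settings} the minimizer $\rho_{\cG}$ in the definition of $G(w_{\cG})$ is essentially unique, so $F$ is differentiable at $w_{\cG}$ with gradient $g_F$, and the KKT conditions for $\min_{w\in\cW^{\dag}}F(w)$ produce a Lagrange multiplier $\lambda$ with $g_{F,j}\geq\lambda$ for all $j$ and equality whenever $w_{{\cG},j}>0$. This forces $M(\rho_{\cG})=\min_j g_{F,j}=\lambda=\sum_j w_{{\cG},j}g_{F,j}=F(w_{\cG})$, and combined with weak duality gives $M(\rho_{\cG})=\sup_{\rho\in\cP_{\cG}}M(\rho)$, so $\rho_{\cG}$ solves \eqref{eq: maximin}.
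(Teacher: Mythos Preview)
Your proposal is correct. The opening moves---replacing $\min_{j}$ by $\min_{w\in\cW^{\dag}}$ and then solving the inner $\rho$-problem via Theorem~\ref{thm: optimal probability}---coincide exactly with the paper's proof. The difference is in how the duality gap is closed. The paper invokes Sion's minimax theorem directly: since $\cW^{\dag}$ is compact convex and $L(\rho,w)$ is linear in $w$ and concave in $\rho$, $\max_{\rho}\min_{w}L=\min_{w}\max_{\rho}L$ in one line, and the argument is finished. You instead establish weak duality first and then close the gap by hand, using a Danskin-type argument (uniqueness of the inner minimizer $\rho(\cdot;\sigma_{w_{\cG}},\tau_{w_{\cG}})$ forces $G$, hence $F$, to be differentiable at $w_{\cG}$) together with the simplex KKT conditions to conclude $M(\rho_{\cG})=\min_{j}g_{F,j}=F(w_{\cG})$. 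Your route is more elementary in that it avoids citing Sion, but it is longer and leans on the differentiability step, which in turn requires the essential uniqueness of the Theorem~\ref{thm: optimal probability} minimizer (this does hold, by strict convexity of $z\mapsto 1/z$ on the set where $\sigma_{w_{\cG}}>0$, but you should state it explicitly rather than appealing vaguely to the regularity conditions of Section~\ref{subsec: model settings}). The paper's approach is the more economical one; yours has the modest advantage of being self-contained and of making the saddle-point structure visible through the KKT multiplier~$\lambda$.
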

	The proof of Theorem \ref{thm: finite-dimensional equiv} can be found in Section \ref{subsec: proof finite equiv}  in Supplementary Material. Similar arguments as in the proof of Theorem \ref{thm: optimal probability} can show that $\rho_{\cG}$ actually minimizes a weighted sum of the semiparametric efficiency bound for parameter component
	\begin{equation}\label{eq: weighted efficiency bound}
		\sum_{j=1}^{d}w_{{\cG}, j}b_{j}^{-1}\left\{E\left[\frac{\sigma_{j}^{2}(\bV)}{\rho^(\bV)}\right] + \Var[\Pi_{j}(\bV)]\right\}.
	\end{equation}
	From the proof of Theorem \ref{thm: finite-dimensional equiv}, it can be seen that the objective function in \eqref{eq: min-max} is actually a dual problem of \eqref{eq: maximin}, which can be viewed as a criterion to adaptively determine the weight of the efficiency bound for each parameter component. Then, with $w_{\cG}$ selected according to \eqref{eq: min-max}, the optimal rule that minimizes the weighted sum of semiparametric efficiency bounds in \eqref{eq: weighted efficiency bound} is also the solution of the maximin problem \eqref{eq: maximin}.
	Theorem \ref{thm: finite-dimensional equiv} contributes to the literature by providing a tractable way to solve the complex infinite-dimensional maximin problem \eqref{eq: maximin}. By applying Theorem \ref{thm: finite-dimensional equiv}, the problem \eqref{eq: maximin} reduces to the more manageable finite-dimensional minimization problem \eqref{eq: min-max}. This simplification significantly eases the optimization process and facilities the data-based estimation. 
	
	Both $\rho_{\cC}(\cdot)$ and $\rho_{\cG}(\cdot)$ reduce to the optimal sampling rule $\rho_{\cS}(\cdot)$ in Theorem \ref{thm: optimal probability} when $d = 1$, but they differ when $d > 1$. As $\rho_{\cG}(\cdot)$ solves the maximin problem \eqref{eq: maximin}, it dominates $\rho_{0}(\cdot)$ whenever there exists a sampling rule that dominates $\rho_{0}(\cdot)$.
	The sampling rule $\rho_{\cG}(\cdot)$ maximizes the minimal relative  improvement compared to the benchmark sampling rule, which makes it promising for estimating a multi-dimensional parameter.
	
	In contrast to sampling rules based on other optimality criteria such as A-optimality or D-optimality \citep{kiefer1959optimum}, the maximin sampling rules $\rho_{\cC}(\cdot)$ and $\rho_{\cG}(\cdot)$ ensure improved efficiency for each parameter component relative to the benchmark rule $\rho_{0}(\cdot)$. This property is particularly desirable when it is hard to determine the importance of the estimation efficiency for each parameter component at the design stage, making $\rho_{\cC}(\cdot)$ and $\rho_{\cG}(\cdot)$ appealing choices in epidemiological and genetic studies with multiple biomarkers, traits, or a multi-level treatment.
	
	\begin{remark}\label{remark: priority}
		The formulation $M(\rho)$ and \eqref{eq: maximin} the case where all the parameters are equally important. In practice, different parameters can have different levels of importance or interest. The proposed
		framework can be extended to take the different importance into consideration.  A method is to add weights to the objective function (3.6). That is, we consider the optimization problem
		\begin{equation}\label{eq: maximin modify}
			\max_{\rho \in \cP_{\cG}} \min_{j=1,\dots,d} \left\{a_{j}^{-1}b_{j}^{-1}\left(\xi_{j} - E\left[\frac{\sigma_{j}^{2}(\bV)}{\rho(\bV)}\right]\right)\right\}
		\end{equation}
		instead of (3.6), where $a_{j} > 0$ is a specified weight to reflect the importance of the $j$th parameter component for $j = 1, \dots, d$. Note that the optimization problem is unchanged if we multiply each of $\{a_{j}\}_{j}^{d}$ by a positive constant. Thus, without loss of generality, we can assume that $\sum_{j=1}^{d}a_{j} = 1$. Let 
		\[
		\cP_{\rm NL} = \left\{
		\begin{aligned}
			&\rho(\cdot):  0\leq \rho(\cdot)\leq 1, \ E[\rho(\bV)] \leq \varpi, \ \text{the semiparametric efficiency bound for}\\
			&\text{every parameter component under $\rho(\cdot)$ is no larger than that under $\rho_{0}(\cdot)$}
		\end{aligned}\right\}.
		\] 
		Then, it is not hard to verify that the solution of \eqref{eq: maximin modify} belongs to $\cP_{\rm NL}$, which is a desirable property. Moreover, this implies that the solutions of \eqref{eq: maximin modify} and		\begin{equation}\label{eq: maximin modify constraint}
			\max_{\rho \in \cP_{\rm NL}} \min_{j=1,\dots,d} \left\{a_{j}^{-1}b_{j}^{-1}\left(\xi_{j} - E\left[\frac{\sigma_{j}^{2}(\bV)}{\rho(\bV)}\right]\right)\right\}
		\end{equation}
		are identical.
		For any $\rho(\cdot) \in \cP_{\rm NL}$, a larger $a_{j}$ makes it more likely that 
		\[
		\min_{j^{\prime}=1,\dots,d} \left\{a_{j^{\prime}}^{-1}b_{j^{\prime}}^{-1}\left(\xi_{j^{\prime}} - E\left[\frac{\sigma_{j^{\prime}}^{2}(\bV)}{\rho(\bV)}\right]\right)\right\} = a_{j}^{-1}b_{j}^{-1}\left(\xi_{j} - E\left[\frac{\sigma_{j}^{2}(\bV)}{\rho(\bV)}\right]\right)
		\]
		and hence prioritizes the efficiency for estimating the $j$th parameter component in the max-min optimization \eqref{eq: maximin modify constraint}. Thus, if the $j$th parameter component is important in a research, one can set $a_{j}$ to be large to prioritize the efficiency of the estimator for the $j$th parameter component. According to such intuition, users have the flexibility to set the parameters $\{a_{j}\}_{j = 1}^{d}$ to prioritize some parameter components based on their research interest. 
		Similar arguments as in the proof of Theorem 2 can show that  the optimal sampling rule that minimizes \eqref{eq: maximin modify} is 
		\[
		\begin{aligned}
			\rho(\cdot; \sigma_{w_{a}}, \tau_{w_{a}})
			= 1\{\sigma_{w_{a}}(\cdot) >  \tau_{w_{a}}\} + \frac{1}{\tau_{w_{a}}}\sigma_{w_{a}}(\cdot)1\{\sigma_{w_{a}}(\cdot) \leq  \tau_{w_{a}}\},
		\end{aligned}
		\]
		where $w_{a} = (w_{a, 1}, \dots, w_{a, d})^{\T}$ is the solution of the problem
		\begin{equation*}
			\min_{w\in \cW_{a}^{\dag}} \left\{\sum_{j=1}^{d}w_{j}b_{j}^{-1}\xi_{j} - E\left[\sigma_{w}(\bV)\max\{\sigma_{w}(\bV), \tau_{w} \}\right]\right\},
		\end{equation*}
		$\sigma_{w}(\cdot) = \sqrt{\sum_{j=1}^{d}w_{j}b_{j}^{-1}\sigma_{j}^{2}(\cdot)}$, $\tau_{w}$ is the unique solution of $E[\rho(\bV;\sigma_{w},\tau)] = \varpi$ with respect to $\tau$, and $\cW_{a}^{\dag}=\{w=(w_{1}, \dots, w_{d}) :  \sum_{j=1}^{d}a_{j}w_{j} = 1, \ 0\leq w_{j} \leq a_{j}^{-1}, \ \text{for} \ j=1,\dots,d\}$. See Section \ref{subsec: sim prioritize} in Supplementary Material for numerical demonstrations on the effect of $\{a_{j}\}_{j = 1}^{d}$ in prioritizing the efficiency of some parameter component.
	\end{remark}
	
	%
	
	\section{Estimation of the optimal sampling rule}\label{sec: est}
	
	The optimal sampling rules proposed in Section \ref{sec: scalar} depend on the conditional variance of the full data efficient influence function, which is unknown.   
	In this section, we consider the estimation of the optimal sampling rule, while the parameter of interest can be either scalar or multi-dimensional.  
	We draw a simple random subsample at the beginning of the second phase, and use the pilot sample to estimate the optimal sampling rule. And then we draw the remaining subjects according to the estimated rule. 	
	Specifically, let $(\bV_{1},\bU_{1}),\dots,(\bV_{n},\bU_{n})$ be independent and identically distributed (i.i.d.) copies of $(\bV, \bU)$. For $i=1,\dots,n$, $\bV_{i}$ is observed in the first phase. Let $R_{1i}$, $i=1, \dots, n$, be independent Bernoulli random variables which equal to one with probability $\kappa_{n}$, where $\kappa_{n}$ is the predetermined expected proportion of the pilot sample that may change with $n$. The $i$th subject is included in the pilot sample, and $\bU_{i}$ is measured if $R_{1i} = 1$, and not otherwise. Then the optimal sampling rules can be estimated based on the pilot sample. Subsequently, for each subject with $R_{1i}=0$, we generate an independent Bernoulli random variables $R_{2i}$ which equals one with probability $\tilde{\rho}(\bV_{i})$, where $\tilde{\rho}(\cdot)$ is the estimated sampling rule. Afterwards, $\bU_{i}$ is measured for subjects with $R_{2i} = 1$. Based on the data collected in the two phases and the pilot sample, $\theta_{0}$ can be estimated using standard semiparametric methods, such as the targeted maximum likelihood \citep{van2006targeted}, the estimating equation method \citep{tsiatis2007semiparametric}, and the one-step estimation method \citep{bickel1982adaptive}.  For illustration, we state the one-step estimation procedure under two-phase designs in Section \ref{subsec: one-step} in Supplementary Material. Next, we get down to the estimation of the sampling rule. The sampling rule for $U$ in the above procedure is a mixture of the uniform rule and the estimated optimal rule due to the presence of the pilot sample. This sampling rule is nearly optimal provided the $\tilde{\rho}(\cdot)$ is consistent for the optimal sampling rule and the proportion of pilot sample is small in the sense that $\kappa_{n} / \varpi \to 0$.
	
	We first consider the estimation of the optimal sampling rule $\rho_{j}(\cdot)$ for each component. The optimal sampling rule $\rho_{\cS}(\cdot)$ for a scalar parameter given in Theorem \ref{thm: optimal probability} can be regarded as a special case of $\rho_{j}(\cdot)$ with $j = d= 1$. Recall that $\rho_{j}(\cdot) = \rho(\cdot; \sigma_{j}, \tau_{j})$ where $\tau_{j}$ satisfies $E[\rho(V; \sigma_{j}, \tau_{j})] = \varpi$. The unknown quantity $\sigma_{j}(\cdot)$ can be estimated by well-established statistical methods based on the observations in the first phase and the pilot sample for $j = 1,\dots,d$. For more details, see Section \ref{subsec: MVR} in Supplementary Material.
	Denote the estimate for $\Pi_{j}(\cdot)$ and $\sigma_{j}(\cdot)$ by $\widetilde{\Pi}_{j}(\cdot)$ and $\tilde{\sigma}_{j}(\cdot)$, respectively, for $j = 1,\dots, d$. The threshold $\tau_{j}$ is another quantity requiring estimating. Note that we aim to take $\varpi$ as the expected proportion of subjects included in the second phase, and the pilot sample occupies an expected proportion of $\kappa_{n}$. Thus, the expected proportion of subjects included in the second phase with $R_{1i} = 0$ should be $\varpi - \kappa_{n}$. Hence we estimate the threshold $\tau_{j}$ by the solution $\tilde{\tau}_{j}$ of the equation
	\begin{equation}\label{eq: threshold estimating equation}
		\frac{1}{n}\sum_{i=1}^{n}(1 -R_{1i})\left(1\{\tilde{\sigma}_{j}(\bV_{i}) > \tau\} + \frac{\tilde{\sigma}_{j}(\bV_{i})}{\tau}1\{\tilde{\sigma}_{j}(\bV_{i}) \leq \tau\}\right) = \varpi - \kappa_{n}.
	\end{equation}
	Then $\rho_{j}(\cdot)$ can be estimated by
	\[\tilde{\rho}_{j}(\cdot) = 1\{\tilde{\sigma}_{j}(\cdot) > \tilde{\tau}_{j}\} + \frac{\tilde{\sigma}_{j}(\cdot)}{\tilde{\tau}_{j}}1\{\tilde{\sigma}_{j}(\cdot) \leq \tilde{\tau}_{j}\},\]
	for $j = 1,\dots, d$. 
	The sampling rule $\rho_{\cC}(\cdot)$ for a multi-dimensional parameter can be estimated by $\tilde{\rho}_{\cC}(\cdot) = \rho_{0}(\cdot) + \sum_{j=1}^{d}\widetilde{w}_{{\cC},j}(\tilde{\rho}_{j}(\cdot) - \rho_{0}(\cdot))$, where 
	\[
	\begin{aligned}
		&\widetilde{w}_{\cC} = (\widetilde{w}_{{\cC}, 0},\dots, \widetilde{w}_{{\cC}, d})\\
		& = \mathop{\arg\max}_{w \in \cW} 
		\min_{j=1,\dots,d} \left\{\tilde{b}_{j}^{-1}\left(\tilde{\xi}_{j}-\frac{1}{n}\sum_{i=1}^{n} \frac{\tilde{\sigma}_{j}^{2}(\bV_{i})}{\rho_{0}(\bV_{i}) + \sum_{j=1}^{d}w_{j}(\tilde{\rho}_{j}(\bV_{i}) - \rho_{0}(\bV_{i}))}\right)\right\}, 
	\end{aligned}
	\]
	$\tilde{\xi}_{j} = n^{-1}\sum_{i=1}^{n}\tilde{\sigma}_{j}^{2}(\bV_{i})/\rho_{0}(\bV_{i})$ and 
	$\tilde{b}_{j} = n^{-1}\sum_{i=1}^{n}\tilde{\sigma}_{j}^{2}(\bV_{i})/\rho_{0}(\bV_{i}) + n^{-1}\sum_{i=1}^{n}\{\widetilde{\Pi}_{j}(V_{i}) - n^{-1}\sum_{i=1}^{n}\widetilde{\Pi}_{j}(V_{i})\}^{2}$ are the estimates of $\xi_{j}$ and $b_{j}$, for $j = 1,\dots, d$.
	According to Theorem \ref{thm: finite-dimensional equiv}, to estimate $\rho_{\cG}(\cdot)$, we first estimate $w_{\cG}$ by
	\[\widetilde{w}_{\cG} = \mathop{\arg\min_{w\in \cW^{\dag}}} \left\{\sum_{j=1}^{d}w_{j}\tilde{b}_{j}^{-1}\tilde{\xi}_{j} - \frac{1}{n}\sum_{i=1}^{n}\tilde{\sigma}_{w}(\bV_{i})\max\{\tilde{\sigma}_{w}(\bV_{i}), \tilde{\tau}_{w} \}\right\},\]
	where $\tilde{\sigma}_{w}(\cdot) = \sqrt{\sum_{j=1}^{d}w_{j}\tilde{b}_{j}^{-1}\tilde{\sigma}_{j}^{2}(\cdot)}$ and $\tilde{\tau}_{w}$ is the solution of \eqref{eq: threshold estimating equation} with $\tilde{\sigma}_{j}(\cdot)$ replaced by $\tilde{\sigma}_{w}(\cdot)$. Then the estimator for $\rho_{\cG}(\cdot)$ is given by
	\[\tilde{\rho}_{\cG}(\cdot) = 1\{\tilde{\sigma}_{\widetilde{w}_{\cG}}(\cdot) > \tilde{\tau}_{\widetilde{w}_{\cG}}\} + \frac{\tilde{\sigma}_{\widetilde{w}_{\cG}}(\cdot)}{\tilde{\tau}_{\widetilde{w}_{\cG}}}1\{\tilde{\sigma}_{\widetilde{w}_{\cG}}(\cdot) \leq \tilde{\tau}_{\widetilde{w}_{\cG}}\},\]
	where $\tilde{\sigma}_{\widetilde{w}_{\cG}}(\cdot)$ is $\tilde{\sigma}_{w}(\cdot)$ with $w$ equaling to $\widetilde{w}_{\cG}$.
	The following theorem shows that the above sampling procedure satisfies the exact budget constraint, i.e., the expected proportion of subjects included in the second phase is exactly $\varpi$.
	\begin{theorem}\label{thm: exact budget constraint}
		If $\tilde{\rho}(\cdot) = \tilde{\rho}_{\cC}(\cdot)$, $\tilde{\rho}_{\cG}(\cdot)$, or $\tilde{\rho}_{j}(\cdot)$ ($j\in\{1, \dots, d\}$), we have 
		\[E\left[\frac{1}{n}\sum_{i=1}^{n}(R_{1i} + R_{2i})\right] = \varpi.\]
	\end{theorem}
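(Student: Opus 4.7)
My plan is to decompose the target quantity into the contributions of the two phases and then reduce each case to the defining equation of the estimated threshold. Start by splitting
\begin{equation*}
E\left[\frac{1}{n}\sum_{i=1}^{n}(R_{1i}+R_{2i})\right] = E\left[\frac{1}{n}\sum_{i=1}^{n}R_{1i}\right] + E\left[\frac{1}{n}\sum_{i=1}^{n}R_{2i}\right].
\end{equation*}
The first term equals $\kappa_{n}$ since each $R_{1i}$ is an independent $\mathrm{Bernoulli}(\kappa_{n})$ variable. For the second term, observe that by construction $R_{2i}=0$ whenever $R_{1i}=1$, while conditional on $R_{1i}$, $V_{i}$ and the realised estimate $\tilde{\rho}(\cdot)$, the variable $R_{2i}$ is $\mathrm{Bernoulli}((1-R_{1i})\tilde{\rho}(V_{i}))$. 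Thus the problem reduces to establishing
\begin{equation*}
E\left[\frac{1}{n}\sum_{i=1}^{n}(1-R_{1i})\tilde{\rho}(V_{i})\right] = \varpi - \kappa_{n}
\end{equation*}
for each of the three candidates $\tilde{\rho}$.

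The case $\tilde{\rho}=\tilde{\rho}_{j}$ is immediate: by the very definition of $\tilde{\tau}_{j}$ as the solution of \eqref{eq: threshold estimating equation}, the displayed identity holds almost surely, hence in expectation. For $\tilde{\rho}=\tilde{\rho}_{\cG}$ the argument is parallel: $\tilde{\tau}_{\widetilde{w}_{\cG}}$ is defined as the solution of the analogue of \eqref{eq: threshold estimating equation} with $\tilde{\sigma}_{j}$ replaced by $\tilde{\sigma}_{\widetilde{w}_{\cG}}$, so the same almost-sure identity holds for $\tilde{\rho}_{\cG}$. Combining with $\kappa_{n}$ from the first term gives $\varpi$ in both cases.

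The remaining case $\tilde{\rho}=\tilde{\rho}_{\cC}$ is where the real work lies. Here I would use the linear structure
\begin{equation*}
\tilde{\rho}_{\cC}(V) = \Bigl(1-\sum_{j=1}^{d}\widetilde{w}_{\cC,j}\Bigr)\rho_{0}(V) + \sum_{j=1}^{d}\widetilde{w}_{\cC,j}\,\tilde{\rho}_{j}(V),
\end{equation*}
which expresses $\tilde{\rho}_{\cC}$ as a combination whose coefficients sum to one. Substituting into the empirical average and invoking the almost-sure identity already established for each $\tilde{\rho}_{j}$ yields
\begin{equation*}
\frac{1}{n}\sum_{i=1}^{n}(1-R_{1i})\tilde{\rho}_{\cC}(V_{i}) = \Bigl(1-\sum_{j=1}^{d}\widetilde{w}_{\cC,j}\Bigr)\frac{1}{n}\sum_{i=1}^{n}(1-R_{1i})\rho_{0}(V_{i}) + \Bigl(\sum_{j=1}^{d}\widetilde{w}_{\cC,j}\Bigr)(\varpi-\kappa_{n}).
\end{equation*}

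The main obstacle is to handle the benchmark contribution, i.e.\ to show $E[n^{-1}\sum_{i}(1-R_{1i})\rho_{0}(V_{i})] = \varpi-\kappa_{n}$. The clean route is to use the convention that the empirical benchmark is calibrated so that its average over the non-pilot subjects matches the residual budget $\varpi-\kappa_{n}$ almost surely (this is consistent with how $\tilde{\tau}_{j}$ and $\tilde{\tau}_{\widetilde{w}_{\cG}}$ are defined and with the theorem asserting an \emph{exact} rather than approximate budget). Under this calibration the two contributions collapse to $\varpi-\kappa_{n}$, giving the almost-sure identity for $\tilde{\rho}_{\cC}$ and hence, after taking expectation and adding $\kappa_{n}$, the total $\varpi$. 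The delicate bookkeeping here—ensuring that the benchmark used inside $\tilde{\rho}_{\cC}$ indeed satisfies the empirical budget, rather than merely the population-level $E[\rho_{0}(V)]=\varpi$—is the step I expect to require the most care.
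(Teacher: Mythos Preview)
Your treatment of $\tilde{\rho}_{j}$ and $\tilde{\rho}_{\cG}$ is essentially the paper's proof: condition on the pilot indicators, the first-phase variables, and the pilot $U$'s, observe that the conditional mean of $R_{2i}$ is $(1-R_{1i})\tilde{\rho}(V_i)$, and then invoke the defining threshold equation \eqref{eq: threshold estimating equation} (or its analogue for $\tilde{\sigma}_{\widetilde{w}_{\cG}}$) to get the almost-sure value $\varpi-\kappa_n$. The paper conditions slightly more explicitly, but the content is identical.

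For $\tilde{\rho}_{\cC}$ you go further than the paper, which simply asserts that the argument is ``similar'' and gives no details. Your decomposition into the benchmark piece and the $\tilde{\rho}_{j}$ pieces is correct, and you have put your finger on a genuine issue: in the paper $\rho_{0}$ satisfies only the \emph{population} constraint $E[\rho_{0}(V)]=\varpi$, so $E\bigl[n^{-1}\sum_i (1-R_{1i})\rho_{0}(V_i)\bigr]=(1-\kappa_n)\varpi$, not $\varpi-\kappa_n$; moreover the random weights $\widetilde{w}_{\cC,j}$ are correlated with this empirical average, so one cannot simply factor them out when taking expectations. Your proposed remedy---empirically recalibrating $\rho_{0}$ so that its non-pilot average equals $\varpi-\kappa_n$---would indeed close the gap, but it is not something the paper does. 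So the obstacle you anticipate is real, and the paper's ``similarly'' does not resolve it; either an empirical calibration of the benchmark is being implicitly assumed, or the exact-budget claim for $\tilde{\rho}_{\cC}$ holds only approximately.
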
	
	The proof of Theorem \ref{thm: exact budget constraint} is provided in Section \ref{subsec: proof exact budget} in Supplementary Material.
	We next investigate the convergence rate of the proposed sampling rule estimators, starting with the component-wise optimal sampling rule $\tilde{\rho}_{j}(\cdot)$ for $j = 1, \dots, d$. For any function $f(\cdot)$, let $\|f\|_{\infty} = \sup_{v}|f(v)|$ be the infinity norm of $f$. To establish the theoretical result, it is required that the size of the pilot sample and the conditional standard deviation estimator satisfy the following condition.
	\begin{condition}\label{cond: rate sigma}
		$\kappa_{n}\to 0$, $n\kappa_{n} \to \infty$, $\|\widetilde{\Pi}_{j} - \Pi_{j}\|_{\infty} = O_{P}\left\{(n\kappa_{n})^{-\delta}\right\}$ and 
		$\|\tilde{\sigma}_{j} - \sigma_{j}\|_{\infty} = O_{P}\left\{(n\kappa_{n})^{-\delta}\right\}$ for $j = 1,\dots,d$ and some $0<\delta \leq 1/2$.
	\end{condition}
	The first two rate conditions can be satisfied by properly chosen $\kappa_{n}$. In practice, we recommend to take $\kappa_{n}$ as a sequence that converges slowly to zero, e.g., $\kappa_{n} = \varpi / \log(\varpi n)$, to ensure the stability of the estimates based on the pilot sample. 
	Convergence rates required by Condition \ref{cond: rate sigma} are well established in the nonparametric estimation literature. Specifically, the convergence rate is readily available under certain regularity conditions if the conditional variance is estimated using the kernel method \citep{hardle1988strong} or the sieve method \citep{chen2015optimal}. Under suitable regularity conditions, we have the following theorem. See Section \ref{subsec: proof of convergence rule comp} of Supplementary Material for the proof of the theorem.
	\begin{theorem}\label{thm: convergence rule comp}
		Suppose the parameter $\theta_{0}$ is scalar. Under Condition \ref{cond: rate sigma} and Conditions \ref{cond: id comp}--\ref{cond: bound cond mean-var} in Supplementary Material, we have
		\[\|\tilde{\rho}_{\cS} - \rho_{\cS}\|_{\infty} = O_{P}\left\{(n\kappa_{n})^{-\delta} + \kappa_{n}\right\}.
		\]
	\end{theorem}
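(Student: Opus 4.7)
The plan is to decompose the error $\|\tilde{\rho}_{\cS} - \rho_{\cS}\|_{\infty}$ into (i) the error from estimating the conditional variance function $\sigma(\cdot)$ and (ii) the error from estimating the threshold $\tau_{\cS}$, then bound each using Condition~\ref{cond: rate sigma} and a stochastic-equicontinuity argument applied to the defining equation \eqref{eq: threshold estimating equation}. Since both $\tilde{\rho}_{\cS}(v) = \min\{\tilde{\sigma}(v)/\tilde{\tau}_{\cS},1\}$ and $\rho_{\cS}(v) = \min\{\sigma(v)/\tau_{\cS},1\}$, and $x\mapsto \min\{x,1\}$ is $1$-Lipschitz, the decomposition
\[
|\tilde{\rho}_{\cS}(v) - \rho_{\cS}(v)| \;\leq\; \frac{|\tilde{\sigma}(v) - \sigma(v)|}{\tilde{\tau}_{\cS}} \;+\; \frac{\sigma(v)\,|\tilde{\tau}_{\cS} - \tau_{\cS}|}{\tilde{\tau}_{\cS}\,\tau_{\cS}}
\]
reduces the theorem to (a) consistency of $\tilde{\tau}_{\cS}$ away from zero (so the denominator is bounded with probability approaching one) and (b) the rate $|\tilde{\tau}_{\cS} - \tau_{\cS}| = O_{P}\{(n\kappa_{n})^{-\delta} + \kappa_{n}\}$, using that $\sigma(\cdot)$ is assumed bounded and Condition~\ref{cond: rate sigma} gives $\|\tilde{\sigma} - \sigma\|_{\infty} = O_{P}\{(n\kappa_{n})^{-\delta}\}$.

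\noindent The core step is to obtain the rate for $\tilde{\tau}_{\cS}$. Define $H_{n}(\tau) = n^{-1}\sum_{i=1}^{n}(1-R_{1i})\,\rho(V_{i};\tilde{\sigma},\tau)$ and $G(\tau) = E[\rho(V;\sigma,\tau)]$, so that \eqref{eq: threshold estimating equation} reads $H_{n}(\tilde{\tau}_{\cS}) = \varpi - \kappa_{n}$ and by construction $G(\tau_{\cS}) = \varpi$. First I would show the uniform approximation
\[
\sup_{\tau \in [a,b]} \bigl|H_{n}(\tau) - (1-\kappa_{n}) G(\tau)\bigr| \;=\; O_{P}\!\left\{(n\kappa_{n})^{-\delta} + n^{-1/2}\right\}
\]
on a compact interval containing $\tau_{\cS}$, by splitting into the plug-in bias term
\[
\bigl|E[\rho(V;\tilde{\sigma},\tau)] - E[\rho(V;\sigma,\tau)]\bigr| \;\leq\; \|\tilde{\sigma}-\sigma\|_{\infty}/\tau,
\]
which uses the $1$-Lipschitz property in the first argument, and a uniform law of large numbers for the class $\{v\mapsto \min\{s(v)/\tau,1\}: \tau\in[a,b], \|s\|_{\infty} \le C\}$, which is a bounded monotone class in $\tau$ and therefore Glivenko--Cantelli.

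\noindent Next, invoking Conditions \ref{cond: id comp}--\ref{cond: bound cond mean-var} in the Supplementary Material, $G$ is continuously differentiable and strictly decreasing in a neighborhood of $\tau_{\cS}$ with $|G'(\tau_{\cS})|$ bounded away from zero (which requires the pushforward law of $\sigma(V)$ to admit a density at $\tau_{\cS}$; this is where the regularity conditions enter). Plugging $\tilde{\tau}_{\cS}$ into the uniform bound gives
\[
(1-\kappa_{n})\,G(\tilde{\tau}_{\cS}) \;=\; \varpi - \kappa_{n} + O_{P}\!\left\{(n\kappa_{n})^{-\delta}\right\},
\]
hence $G(\tilde{\tau}_{\cS}) = \varpi + O_{P}\{(n\kappa_{n})^{-\delta} + \kappa_{n}\}$ after absorbing the $\kappa_{n}(1-\varpi)/(1-\kappa_{n})$ term. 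Comparing with $G(\tau_{\cS}) = \varpi$ and applying the mean value theorem with $|G'|$ bounded below yields $|\tilde{\tau}_{\cS} - \tau_{\cS}| = O_{P}\{(n\kappa_{n})^{-\delta} + \kappa_{n}\}$; substituting into the Lipschitz decomposition above completes the proof.

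\noindent\textbf{Main obstacle.} The delicate part is the uniform approximation of $H_{n}$ and the accompanying lower bound on $|G'(\tau_{\cS})|$: $\tilde{\sigma}$ is itself data-dependent (built from the pilot sample), so the empirical process is indexed by a random element, and a clean route is to condition on the pilot sample and invoke an envelope bound together with the Lipschitz-in-$s$ property uniformly in $\tau$ on a compact set. Ensuring that $\tilde{\tau}_{\cS}$ stays in such a compact interval bounded away from $0$ with high probability is what lets the plug-in factor $1/\tilde{\tau}_{\cS}$ remain controllable; this is delivered by the strict monotonicity of $G$ combined with an initial rough consistency argument.
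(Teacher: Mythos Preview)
Your proposal is correct and follows essentially the same route as the paper: a Lipschitz decomposition of $\min\{\cdot,1\}$ reduces matters to the rate for $|\tilde{\tau}_{\cS}-\tau_{\cS}|$, which is obtained by uniformly approximating the empirical threshold equation and then inverting via the local identifiability condition. One small correction: Condition~\ref{cond: id comp} in the supplement directly assumes the bi-Lipschitz lower bound $|G(\tau_1)-G(\tau_2)|\ge L|\tau_1-\tau_2|$ on a neighborhood of $\tau_{\cS}$, so you do not need differentiability of $G$ or a density for the law of $\sigma(V)$; the paper accordingly uses a monotonicity/sandwich argument rather than the mean value theorem, but this is a cosmetic difference.
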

	Theorem \ref{thm: convergence rule comp} can be directly extended to prove $\|\tilde{\rho}_{j} - \rho_{j}\|_{\infty} = O_{P}\left\{(n\kappa_{n})^{-\delta} + \kappa_{n}\right\}$ for $j = 1,\dots,d$ when the parameter is multi-dimensional.
	From the proof of Theorem \ref{thm: convergence rule comp}, it can be seen that the term $\kappa_{n}$ in the convergence rate can be removed if $(1 - \kappa_{n})\varpi$ is used instead of $\varpi - \kappa_{n}$ on the right-hand side of \eqref{eq: threshold estimating equation}. However, we use $\varpi - \kappa_{n}$ in \eqref{eq: threshold estimating equation} to ensure the exact budget constraint stated in Theorem \ref{thm: exact budget constraint}.
	For the estimation of $\rho_{\cC}(\cdot)$ and $\rho_{\cG}(\cdot)$, we have the following convergence result.
	\begin{theorem}\label{thm: convergence rule mv}
		Suppose $\rho_{0}(\cdot)$ is bounded away from zero. Then under Condition \ref{cond: rate sigma} and Conditions \ref{cond: id comp}--\ref{cond: local convexity Gm} in Supplementary Material, we have 
		\[\|\tilde{\rho}_{\cC} - \rho_{\cC}\|_{\infty} = O_{P}\left\{\sqrt{(n\kappa_{n})^{-\delta} + \kappa_{n}}\right\} \ \text{and} \ \|\tilde{\rho}_{\cG} - \rho_{\cG}\|_{\infty} = O_{P}\left\{\sqrt{(n\kappa_{n})^{-\delta} + \kappa_{n}}\right\}.\]
	\end{theorem}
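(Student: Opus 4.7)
The plan is to reduce both statements to two ingredients: a rate for the estimated weights $\tilde w_{\cC}$ and $\tilde w_{\cG}$, and Lipschitz continuity, in both $w$ and the plug-in nuisances, of the maps that turn weights into sampling rules. The square-root appearing in the conclusion will come from the standard argmax/argmin rate for M-estimators once I have uniform convergence of the relevant objective at rate $(n\kappa_n)^{-\delta}+\kappa_n$ together with a local strong-convexity / well-separation hypothesis, which is precisely what Condition \ref{cond: local convexity Gm} supplies.

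For $\tilde\rho_{\cC}$ I decompose
\[
\tilde\rho_{\cC}(\cdot)-\rho_{\cC}(\cdot)=\sum_{j=1}^{d}\tilde w_{\cC,j}\{\tilde\rho_{j}(\cdot)-\rho_{j}(\cdot)\}+\sum_{j=1}^{d}(\tilde w_{\cC,j}-w_{\cC,j})\{\rho_{j}(\cdot)-\rho_{0}(\cdot)\}.
\]
The first sum is bounded in sup norm by the component-wise extension of Theorem \ref{thm: convergence rule comp} (noted immediately after its statement), since $0\le\tilde w_{\cC,j}\le1$; it contributes $O_P\{(n\kappa_n)^{-\delta}+\kappa_n\}$. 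The second sum is bounded by $\|\tilde w_{\cC}-w_{\cC}\|_{1}$ times a constant, so it remains to bound the weight error. View $w_{\cC}$ as the argmax over the compact convex set $\cW$ of the population objective in \eqref{eq: maximin constrained} and $\tilde w_{\cC}$ as the argmax of its plug-in empirical analog. Because $\rho_{0}(\cdot)$ is bounded away from zero, the denominator $\rho_{0}+\sum_{k}w_{k}(\rho_{k}-\rho_{0})$ stays uniformly bounded away from zero on $\cW$, so uniform consistency of $\tilde\sigma_{j}^{2}$, $\tilde\rho_{j}$, $\tilde b_{j}$ and $\tilde\xi_{j}$ yields uniform convergence of the objective at rate $(n\kappa_n)^{-\delta}+\kappa_n$. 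Under the local concavity/well-separation part of Condition \ref{cond: local convexity Gm}, the argmax converges at the square-root of this rate, giving the claimed bound for $\tilde\rho_{\cC}$.

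For $\tilde\rho_{\cG}$ I decompose
\[
\tilde\rho_{\cG}-\rho_{\cG}=\bigl\{\rho(\cdot;\tilde\sigma_{\tilde w_{\cG}},\tilde\tau_{\tilde w_{\cG}})-\rho(\cdot;\sigma_{\tilde w_{\cG}},\tau_{\tilde w_{\cG}})\bigr\}+\bigl\{\rho(\cdot;\sigma_{\tilde w_{\cG}},\tau_{\tilde w_{\cG}})-\rho(\cdot;\sigma_{w_{\cG}},\tau_{w_{\cG}})\bigr\}.
\]
The first bracket is a pure nuisance plug-in error at a fixed weight, handled by repeating the proof of Theorem \ref{thm: convergence rule comp} with the composite $\sigma_{w}$ in place of $\sigma$; both $\tilde\sigma_{w}$ and the implicitly defined $\tilde\tau_{w}$ inherit the rate $(n\kappa_n)^{-\delta}+\kappa_n$ from the component estimators, uniformly in $w\in\cW^{\dag}$. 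The second bracket is controlled by Lipschitz continuity of $w\mapsto\rho(\cdot;\sigma_{w},\tau_{w})$, which I would obtain from smoothness of $\sigma_{w}^{2}=\sum_{j}w_{j}b_{j}^{-1}\sigma_{j}^{2}$ in $w$ and continuity of $\tau_{w}$ via the implicit function theorem applied to $E[\rho(V;\sigma_{w},\tau)]=\varpi$, giving a bound of order $\|\tilde w_{\cG}-w_{\cG}\|$. The weight error is then obtained by the M-estimator rate theorem applied to problem \eqref{eq: min-max}, whose empirical analog converges uniformly at rate $(n\kappa_n)^{-\delta}+\kappa_n$ and whose population version is locally strongly convex at $w_{\cG}$ by Condition \ref{cond: local convexity Gm}; this yields $\|\tilde w_{\cG}-w_{\cG}\|=O_{P}\{\sqrt{(n\kappa_n)^{-\delta}+\kappa_n}\}$.

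The main obstacles will be, first, establishing uniform control of the implicitly defined threshold, i.e.\ a one-off lemma giving $\sup_{w\in\cW^{\dag}}|\tilde\tau_{w}-\tau_{w}|=O_{P}\{(n\kappa_n)^{-\delta}+\kappa_n\}$, which should follow from the strict monotonicity of $\tau\mapsto E[\rho(V;\sigma_{w},\tau)]$ and a monotone-inversion argument paralleling the one used in the proof of Theorem \ref{thm: convergence rule comp}, and second, converting the local strong-convexity hypothesis at $w_{\cC}$ and $w_{\cG}$ into the claimed square-root rate through the standard curvature-plus-uniform-convergence template for M-estimators. Once these two ingredients are in place, the two decompositions above deliver the stated sup-norm bounds.
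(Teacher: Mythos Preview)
Your proposal is correct and follows essentially the same route as the paper: the same decomposition of $\tilde\rho_{\cC}-\rho_{\cC}$ into a plug-in piece controlled by Theorem \ref{thm: convergence rule comp} and a weight piece, followed by the standard M-estimator argument (uniform convergence of the objective at rate $(n\kappa_n)^{-\delta}+\kappa_n$ plus local strong convexity $\Rightarrow$ square-root rate for the argmin). One small bookkeeping point: the curvature hypothesis for the $\cC$ problem is Condition \ref{cond: local convexity Gc}, not Condition \ref{cond: local convexity Gm}; the latter is reserved for $H_{\cG}$ and the $\cG$ case, and the paper invokes them separately. Your outline for $\tilde\rho_{\cG}$, including the uniform-in-$w$ control of $\tilde\tau_w$ via monotone inversion, is more explicit than the paper, which simply states that the $\cG$ case ``can be established similarly.''
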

	The proof of Theorem \ref{thm: convergence rule mv} can be found in Section \ref{subsec: proof of thm rule mv} of Supplementary Material.

		\section{Simulation study}\label{sec: sim}
		In this section, we use ``uni" to denote the uniform rule, ``S-opt" to denote the optimal sampling rule for a scalar parameter $\tilde{\rho}_{\cS}$, ``C-opt" to denote the sampling rule $\tilde{\rho}_{\cC}$, and ``G-opt" to denote the sampling rule $\tilde{\rho}_{\cG}$. 
		We consider the average treatment effect estimation problem introduced in Example \ref{eg: ATE}. The problem is akin to the causal effect estimation problem in observational studies with missing confounders discussed in \cite{lin2014adjustment} and \cite{yang2019combining}. Let $\bZ$ be a $q$-dimensional covariate vector  with independent $U[-2.5,2.5]$ components, where $U[-2.5, 2.5]$ is the uniform distribution on $[-2.5, 2.5]$. Suppose the confounder $X$ follows the model
		\[X = \bzeta_{q}^{\T}\bZ + \epsilon_{x},\]
		where $\bzeta_{q} = (0.5/\sqrt{q}, \dots, 0.5/\sqrt{q})^{\T}$ and $\epsilon_{x}$ follows a normal distribution with mean zero and variance $0.25$.
		The potential outcomes $Y_{0}$ and $Y_{1}$ follow the model
		\[
		\begin{aligned}
			Y_{0} &= 0.5\bzeta_{q}^{\T}\bZ + X + \exp(2\bzeta_{q}^{\T}\bZ)\epsilon_{y},\\
			Y_{1} &= \theta_{0} + 0.5\bzeta_{q}^{\T}\bZ - X + \exp(2\bzeta_{q}^{\T}\bZ)\epsilon_{y}.
		\end{aligned}
		\]
		where $\epsilon_{y}$ is a standard normal error. Under the above model, $\theta_{0} = E[Y_{1} - Y_{0}]$ is the average treatment effect. We take $\theta_{0} = 1.5$ in the simulation. The treatment indicator $T$ is a Bernoulli variable that is equal to one with probability $1/\{1 + \exp(0.1\bzeta_{q}^{\T}\bZ - 0.5X)\}$. Then the outcome variable is $Y = TY_{1} + (1-T)Y_{0}$. 
		Let $\bV = (Y, T, \bZ^{\T})^{\T}$ be the vector of first-phase variables and $\bU = X$ the second-phase variable in this example. In the simulation, we generate $n = 2000$ or $5000$  i.i.d. copies of $(V, U)$, draw a pilot sample and estimate the sampling rules based on the nonparametric method suggested in Section \ref{subsec: MVR} in Supplementary Material. We apply different sampling rules to draw the sample in the second phase. Then, the one-step estimation method is adopted to estimate $\theta_{0}$. The one-step estimation method \citep{bickel1982adaptive} is a classic semiparametric method to construct an efficient estimator based on the efficient influence function. Please refer to Section \ref{subsec: one-step} in Supplementary Material for more details. We consider $q = 1$ and $5$ and take the proportion of pilot sample $\kappa_{n} = \varpi/[1 + \log\{\varpi n/(q + 1)\}]$. The quantity $\varpi/[1 + \log\{\varpi n/(q + 1)\}]$ increases as $q$ grows and decreases to zero slowly as $n\to \infty$. Thus, the choice of $\kappa_{n}$ ensures Condition \ref{cond: rate sigma} while taking the dimension of the first-phase variable vector into consideration. 
		Figure~\ref{fig: ATE boxplot} presents boxplots of the average treatment effect estimator over $500$ simulations with $n = 2000, 5000$ and $q = 1,5$.
		\begin{figure}[h]
			\centering
			\subcaptionbox{}{\includegraphics[scale=0.35]{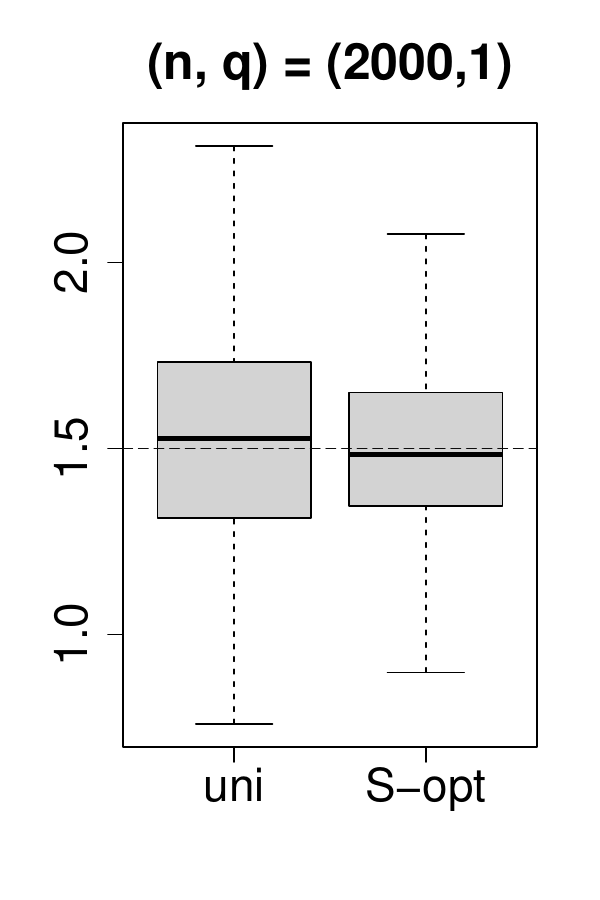}}
			\subcaptionbox{}{\includegraphics[scale=0.35]{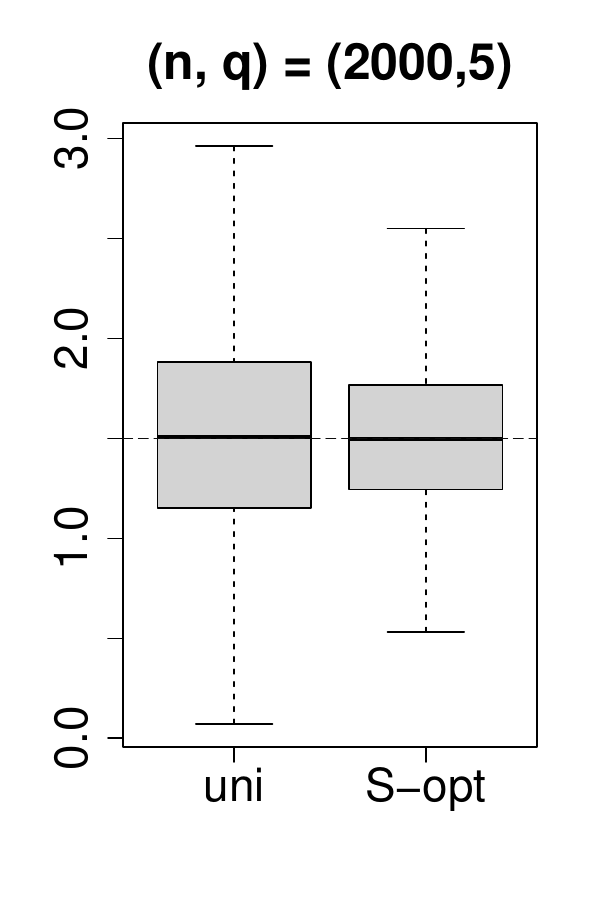}}
			\subcaptionbox{}{\includegraphics[scale=0.35]{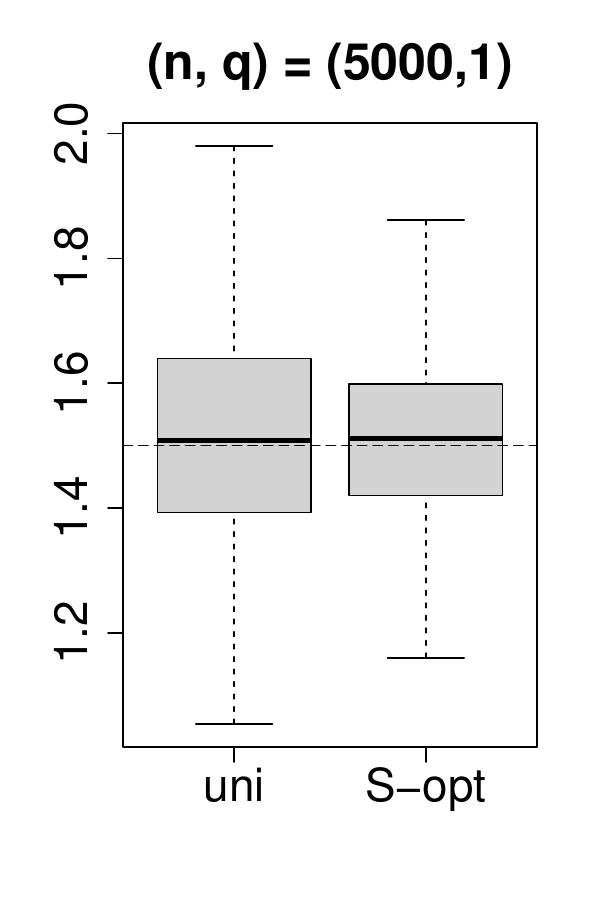}}
			\subcaptionbox{}{\includegraphics[scale=0.35]{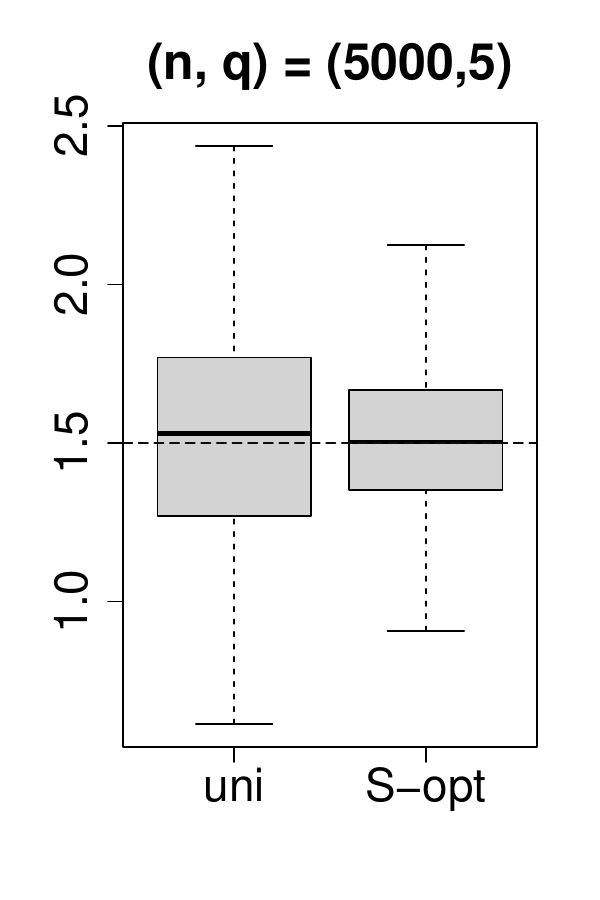}}
			\caption{Boxplots for the average treatment effect estimation with different combinations of $n$ and $q$; dashed lines are the true values.}\label{fig: ATE boxplot}
		\end{figure}
		Figure~\ref{fig: ATE boxplot} shows that the efficiency of the average treatment effect estimator is improved under $\tilde{\rho}_{\cS}$ compared to the uniform rule. The improvement is consistently observed across different combinations of $n$ and $q$. We compute the relative efficiency (RE) under $\tilde{\rho}_{\cS}$ with respect to the uniform rule. The RE of a sampling rule is the variance ratio of the estimator under the uniform rule to that under the considered sampling rule. Higher RE indicates greater improvement than the uniform rule. The REs are $1.6995$, $2.1835$, $1.8907$ and $2.3239$ for $(n, q) = (2000, 1), (2000, 5), (5000, 1)$, and $(5000, 5)$, respectively, which indicates significant efficiency improvements across different scenarios.
		
		In practice, there may be multiple treatments. In this case, the average treatment effects of different treatments compared to the control are of interest. Next, we consider the average treatment effect estimation with two different treatments (denoted by ``$1$'', and ``$2$'') and a control group (denoted by ``$0$"). Define $\bZ$ in the same way as in the scalar case. Suppose $X$ follows the model
		\[X = \bzeta_{q}^{\T}\bZ + \epsilon_{x},\]
		where $\epsilon_{x}$ follows a normal distribution with mean zero and variance $0.25$.
		The potential outcomes are generated from
		\begin{align*}
			&Y_{0} = 0.5\bzeta_{q}^{\T}\bZ + X + \{0.5 + \nu_{2}(\bZ)\}\epsilon_{y}, \\
			&Y_{1} = \theta_{01} + 0.5\bzeta_{q}^{\T}\bZ - X + \{\nu_{1}(\bZ) + \nu_{2}(\bZ)\}\epsilon_{y},\\
			&Y_{2} = \theta_{02} - 0.5\bzeta_{q}^{\T}\bZ - 0.5X + \nu_{2}(\bZ)\epsilon_{y},
		\end{align*} 
		where $\theta_{01} = 1$, $\theta_ {02} = 0.5$, $\epsilon_{y}$ follows a standard normal distribution, 
		$\nu_{1}(z) = \sqrt{0.1 + (2\bzeta_{q}^{\T}z)^{4}}$, and $\nu_{2}(z) = \exp(2\bzeta_{q}^{\T}z)$.
		It is easy to verify that the average treatment effects $E[Y_{1} - Y_{0}]$ and $E[Y_{2} - Y_{0}]$ equal to $\theta_{01}$ and $\theta_{02}$, respectively.
		Suppose the treatment indicator $T$ equals to $0$, $1$, and $2$ with probabilities $1 - p_{1}(X, \bZ) - p_{2}(X, \bZ)$, $p_{1}(X, \bZ)$ and $p_{2}(X, \bZ)$, where
		\[
		\begin{aligned}
			p_{1}(X, \bZ) & = \frac{\exp(-0.1\bzeta_{q}^{\T}\bZ + 0.25X)}{1+\exp(-0.1\bzeta_{q}^{\T}\bZ + 0.25X) + \exp(0.1\bzeta_{q}^{\T}\bZ - 0.25X)},\\
			p_{2}(X, \bZ) &= \frac{\exp(0.1\bzeta_{q}^{\T}\bZ - 0.25X)}{1+\exp(-0.1\bzeta_{q}^{\T}\bZ + 0.25X) + \exp(0.1\bzeta_{q}^{\T}\bZ - 0.25X)}.
		\end{aligned}
		\]
		The observed outcome $Y$ equals to $Y_{t}$ if $T = t$ for $t = 0, 1,$ and $2$.
		Table~\ref{table: mv ATE} presents the bias, standard error (SE) of the estimator, and the RE compared to the uniform rule based on $500$ simulations.
		\begin{table}
			\centering
			\def~{\hphantom{0}}
			\caption{Bias, SE, and RE in two-dimensional average treatment effect estimation}
			\begin{tabular}{llcccccc}
				\multirow{2}{*}{$(n, q)$}&\multirow{2}{*}{Rule} & \multicolumn{3}{c}{Estimate of $\theta_{01}$}  & \multicolumn{3}{c}{Estimate of $\theta_{02}$}\\
				& & Bias & SE & RE & Bias & SE& RE \\
				\specialrule{0em}{-4pt}{-4pt}\\
				\multirow{3}{*}{$(2000, 1)$}
				&uni & -0.0055 & 0.5630 & 1.0000 & 0.0150 & 0.4755 & 1.0000\\
				&C-opt & 0.0022 & 0.4558 & 1.5257 & 0.0220 & 0.3620 & 1.7254\\
				&G-opt & 0.0157 & 0.4597 & 1.4999 & 0.0237 & 0.3596 & 1.7485\\
				\specialrule{0em}{-4pt}{-4pt}\\
				\multirow{3}{*}{$(2000, 5)$}
				&uni &  0.0220 & 0.9095 & 1.0000 & -0.0417 & 0.8307 & 1.0000\\
				&C-opt & -0.0171 & 0.7734 & 1.3829 & -0.0545 & 0.6452 & 1.6577\\
				&G-opt & -0.0206 & 0.7384 & 1.5171 & -0.0354 & 0.6199 & 1.7957\\
				\specialrule{0em}{-4pt}{-4pt}\\
				\multirow{3}{*}{$(5000, 1)$}
				&uni & 0.0200 & 0.3349 & 1.0000 & -0.0043 & 0.2765 & 1.0000\\
				&C-opt & 0.0289 & 0.2564 & 1.7061 & -0.0010 & 0.2179 & 1.6102\\
				&G-opt & 0.0189 & 0.2459 & 1.8549 & -0.0043 & 0.2113 & 1.7123\\
				\specialrule{0em}{-4pt}{-4pt}\\
				\multirow{3}{*}{$(5000, 5)$}
				&uni & -0.0353 & 0.5363 & 1.0000 & -0.0119 & 0.4667 & 1.0000\\
				&C-opt & -0.0161 & 0.3970 & 1.8249 & -0.0181 & 0.3332 & 1.9618\\
				&G-opt & -0.0006 & 0.3836 & 1.9546 & -0.0110 & 0.3115 & 2.2447\\
			\end{tabular}\label{table: mv ATE}
		\end{table}
		Table~\ref{table: mv ATE} shows that the SEs are smaller under the sampling rules $\tilde{\rho}_{\cC}$ and $\tilde{\rho}_{\cG}$ than under the uniform rule. Moreover, the improvement is consistent and significant across different parameter components and values of $(n,q)$. In some cases, the RE is close to or even larger than two. The improvement of $\tilde{\rho}_{\cG}$ tends to be greater than $\tilde{\rho}_{\cC}$ in most cases, probably because $\tilde{\rho}_{\cG}$ estimates the optimal solution over the full space, while $\tilde{\rho}_{\cC}$ estimates the optimal solution over only a subspace. 
		In addition, the simulation results in this section indicate that the proposed method's performance remains stable even as the dimension of the first-phase variable increases. This suggests that the proposed method is not highly sensitive to the dimensionality of the first-phase variable despite the involvement of nonparametric methods in the estimation procedure.
		

		\section{Real data analysis}\label{sec: real data}
		In this section, we illustrate the proposed approach using AIDS Clinical Trials Group (ACTG) 175 data \citep{hammer1996}, which can be found in the R package \emph{speff2trial}. The ACTG175 clinical trial evaluates four therapies for treating human immunodeficiency virus infection. The four treatments are: $1$, zidovudine; $2$, zidovudine and didanosine; $3$, zidovudine and zalcitabine; and $4$, didanosine. In the trial, $2139$ subjects were assigned randomly to the four treatment groups. Covariates such as age, gender, weight, medication history, and symptom indicator were recorded before the treatment, and CD4 and CD8 cell counts were measured after $20$ weeks of the treatment. The CD4/CD8 ratio describes the overall immune function of an individual. A low value of CD4/CD8 ratio indicates a severe immune dysfunction. We use the CD4/CD8 ratio as the outcome variable and consider the average treatment effects of the three zidovudine-treated groups (groups $1$, $2$, and $3$) compared to the non-zidovudine-treated group (group $4$). Let $\theta_{0j}$ be the average treatment effect of the treatment $j$ compared to the  treatment $4$ for $j = 1,\dots, 3$.
		
		The original experiment measured CD4 and CD8 cell counts for all $2139$ subjects after $20$ weeks of the treatment. However, it can be costly to measure the CD4 and CD8 cell counts for more than two thousand people. The cost of the trials would be drastically reduced if a two-phase design were adopted and the CD4 and CD8 cell counts were measured for only a subset of patients. We applied the proposed method to design the second-phase sampling rule and obtained results as if the experiment was conducted under the two-phase design.
		We use six variables as the first-phase variables $\bV$, including age, gender, weight, medication history, symptom indicator, and treatment assignment. The outcome is considered as the second-phase variable $\bU$. In the implementation, we take $\varpi = 0.3$ and $\kappa_{n} = \varpi/[1 + \log(0.1n\varpi)]$.
		A pilot sample is randomly drawn with inclusion probability $\kappa_{n}$. The sampling rules $\tilde{\rho}_{\cC}$ and $\tilde{\rho}_{\cG}$ are constructed based on the pilot sample and the full data efficient influence function in Example \ref{eg: ATE} in Section \ref{sec: examples of EIF}  in Supplementary Material. Then, a sample is drawn according to the resulting sampling rule. The full data efficient influence function involves the propensity score and the outcome regression function of each group. The propensity score for each group is $1/4$ because subjects were randomly assigned to each group with equal probability. We adopt a linear model for the outcome regression function.  
		
		We employ the one-step estimation method to estimate the average treatment effects. In line with the simulation, ``uni", ``C-opt", and ``G-opt" represent the uniform rule, the sampling rule $\tilde{\rho}_{\cC}$, and the sampling rule $\tilde{\rho}_{\cG}$, respectively. Table~\ref{table: mv ACTG175} provides the estimates for $\theta_{01}$, $\theta_{02}$, and $\theta_{03}$ under different sampling rules. Additionally, it includes the estimated SEs based on the asymptotic variance of the estimator and p-values associated with the Wald tests for testing whether each average treatment effect equals to zero.
		\begin{table}[h]
			\centering
			\caption{Point estimates (Est), estimated SEs, and p-values associated with the estimation of $\theta_{01}$, $\theta_{02}$, and $\theta_{03}$ in the ACTG175 trial under different sampling rules}
			\resizebox{\columnwidth}{!}{
				\begin{tabular}{l*{9}{c}}
					\multirow{2}{*}{Rule} & \multicolumn{3}{c}{Estimation of $\theta_{01}$}  & \multicolumn{3}{c}{Estimation of $\theta_{02}$}& \multicolumn{3}{c}{Estimation of $\theta_{03}$}\\
					& Est & SE & p-value & Est & SE & p-value & Est & SE & p-value\\
					\specialrule{0em}{-4pt}{-4pt}\\
					uni & -0.0489 & 0.0223 & 0.0283 & 0.0166 & 0.0247 & 0.5013 & -0.0034 & 0.0228 & 0.8805\\
					C-opt & -0.0606 & 0.0211 & 0.0041 & 0.0182 & 0.0235 & 0.4398 & -0.0012 & 0.0216 & 0.9574\\
					G-opt & -0.0630 & 0.0211 & 0.0029 & 0.0147 & 0.0231 & 0.5227 & -0.0006 & 0.0216 & 0.9794\\
				\end{tabular}
			}\label{table: mv ACTG175}
		\end{table} 
		Table~\ref{table: mv ACTG175} shows that the estimators for $\theta_{01}$, $\theta_{02}$, and $\theta_{03}$ all have smaller estimated SEs under $\tilde{\rho}_{\cC}$ and $\tilde{\rho}_{\cG}$ compared to the uniform rule. The p-values under $\tilde{\rho}_{\cC}$ and $\tilde{\rho}_{\cG}$ suggest that, at the $0.01$ significance level, treatment $1$ performs differently than treatment $4$ in increasing the CD4/CD8 ratio, while there is no significant difference between treatments $2$, $3$, and treatment $4$. In contrast, the results under the uniform rule discover no significant difference among treatments $1$, $2$, $3$, and treatment $4$ under the significance level $0.01$.
		For comparison,	we calculate the full-cohort augmented inverse probability weighted estimator for $\theta_{01}$, $\theta_{02}$, and $\theta_{03}$ based on the first-phase variables and the outcome of all subjects. It is worth noting that the full-cohort estimator requires measuring the outcome of all $2139$ subjects, while the estimates in Table~\ref{table: mv ACTG175} only needs the outcome of approximately $30\%$ of the subjects. The all data-based estimates are $-0.0500$,  $0.0166$, and $0.0054$ for $\theta_{01}$, $\theta_{02}$ and $\theta_{03}$, respectively, with p-values $0.0005$, $0.2851$, and $0.7140$. These p-values support the same conclusions as those suggested by the estimators under $\tilde{\rho}_{\cC}$ and $\tilde{\rho}_{\cG}$.
		
		\section{Discussion}
		This paper proposes a unified framework for designing sampling rules in two-phase studies. The proposed method offers promising guarantees in estimation efficiency and budget control. It exhibits broad applicability across diverse research domains, such as the electronic health record data analysis \citep{lotspeich2022efficient, zhang2024patient}, genetic association analysis \citep{lin2013quantitative}, causal inference \citep{lin2014adjustment, yang2019combining}, and measurement error problems. Our framework accommodates multi-dimensional parameters, making it well-suited for association studies involving multiple traits and the estimation of causal effects with multiple treatments. The current framework relies on the semiparametric efficiency bound for a fixed-dimensional parameter. An important topic for future research is the development of suitable criteria for high-dimensional problems and the corresponding optimal designs.
		
		The implementation of the proposed design method requires the full-data efficient influence function.
		The efficient influence functions are available from the literature and easy to calculate in many important examples such as the examples presented in Section S1.1 in Supplementary Material, as well as in a lot of other examples \citep{NEWEY1993419, tsiatis2007semiparametric}. 
		However, the derivation and calculation of the full-data efficient influence function may be 
		challenging under some complicated semiparametric models such as those with missing covariates or complex censoring. Additional theoretical or computational efforts are required to implement the proposed method in these scenarios.
		
		\section*{Acknowledgement}
		Wang's research was supported by the National Natural Science Foundation of China (General program 12271510, General
		program 11871460, and program for Innovative Research Group 61621003), and a grant from the Key Lab of Random Complex Structure and Data Science, CAS. Miao's research was supported by the National Key R\&D Program (2022YFA1008100) and the National Natural Science Foundation of China (Genral program 12071015).
		
		\newpage
		\setcounter{section}{0}
		\setcounter{condition}{0}
		\setcounter{theorem}{0}
		\setcounter{table}{0}
		\setcounter{figure}{0}
		\setcounter{example}{0}
		\setcounter{proposition}{0}
		\setcounter{lemma}{0}
		\renewcommand{\thesection}{S\arabic{section}}
		\renewcommand{\thecondition}{S\arabic{condition}}
		\renewcommand{\thetheorem}{S\arabic{theorem}}
		\renewcommand{\thetable}{S\arabic{table}}
		\renewcommand{\thefigure}{S\arabic{figure}}
		\renewcommand{\theexample}{S\arabic{example}}
		\renewcommand{\theproposition}{S\arabic{proposition}}
		\renewcommand{\thelemma}{S\arabic{lemma}}
		{\noindent\LARGE \bf Supplementary Materials}
		\section{Examples}\label{sec: examples}
		\subsection{Examples of full data efficient influence functions}\label{sec: examples of EIF}
		We introduce some examples of full data efficient influence functions for illustration. 
		\begin{example}\label{eg: outcome mean}
			Let $Y$ be a vector of outcomes which is hard to obtain. Suppose the parameter of interest is the outcome mean $\theta_{0} = E[Y]$. Let $\bZ$ be a vector of inexpensive covariates that is predictive to $Y$ and hence useful in estimating $\theta_{0}$. In two-phase studies, one can collect $\bV = \bZ$ in the first phase and measure $\bU = Y$ for a subset of subjects in the second phase. In this example, the full data efficient influence function is $\psi = Y - \theta_{0}$. 
		\end{example}
		
		\begin{example}\label{eg: least squares}
			Let $Y$ be a scalar outcome which is easy to obtain, $\bZ$ a vector of inexpensive covariates, and $\bX$ a vector of expensive covariates. Suppose the parameter of interest is the least squares regression coefficient $\theta_{0}$ of $\bX$ in the regression of $Y$ on $\bZ,\bX$, which is determined by the estimating equation $E[(\bX^{\T},\bZ^{\T})^{\T}(Y - \bX^{\T}\theta_{0} - \bZ^{\T}\beta_{0})] = 0$ where $\beta_{0}$ is the nuisance parameter.
			In two-phase studies, $\bV = (Y,\bZ)$ is collected in the first phase and $\bU = \bX$ is measured for a subset of subjects in the second phase.  In this case, the full data efficient influence function is $\psi = (E[(\bX - \alpha_{0}\bZ)(\bX - \alpha_{0}\bZ)^{\T}])^{-1}(\bX - \alpha_{0}\bZ)(Y - \bX^{\T}\theta_{0} - \bZ^{\T}\beta_{0})$ where $\alpha_{0} = E[\bX \bZ^{\T}](E[\bZ\bZ^{\T}])^{-1}$ is the population linear regression coefficient of $\bX$ on $\bZ$. 
		\end{example}
		
		\begin{example}\label{eg: ATE}
			Let $T\in \{0,1\}$ be a binary treatment indicator, and $Y$ the outcome. Suppose the parameter of interest is the average treatment effect, i.e., $\theta_{0} = E[Y_{1} - Y_{0}]$, where $Y_{1}$ and $Y_{0}$ are the potential outcomes under treatments $``1"$ and $``0"$, respectively. In observational studies, one needs to properly adjust for confounders to estimate $\theta_{0}$ consistently. In practice, some confounders $\bX$ may be hard to measure, while $Y$, $T$, and other confounders $\bZ$ can be easily accessible. Then, a two-phase study can be conducted, where $\bV = (Y, T,\bZ)$ is collected in the first phase, and $\bU = \bX$ is measured for a subset of subjects in the second phase. Under the unconfoundness condition $(Y_{1}, Y_{0}) \Perp T \mid (\bX, \bZ)$, the full data efficient influence function is
			\begin{equation*}
				\begin{aligned}
					\psi& = \frac{TY}{\pi(\bX,\bZ)} - \frac{(1-T)Y}{1-\pi(\bX,\bZ)} - \left\{\frac{T}{\pi(\bX,\bZ)} - 1 \right\}m_{1}(\bX,\bZ) \\
					&\quad +  \left\{\frac{1-T}{1-\pi(\bX,\bZ)} - 1 \right\}m_{0}(\bX,\bZ) - \theta_{0},
			\end{aligned}\end{equation*}
			where $\pi(x,z) = P(T = 1\mid \bX=x, \bZ=z)$ is the propensity score, and $m_{t}(x,z) = E[Y\mid \bX=x, \bZ = z,T=t]$ is the outcome regression function for $t = 0,1$. 
		\end{example}
		The outcome mean estimation in Example \ref{eg: outcome mean} is an important problem in survey sampling \citep{cochran2007sampling} and epidemiological studies \citep{mcnamee2002optimal,gilbert2014optimal}.
		Regression problems with expensive covariates in Example \ref{eg: least squares} are of great interest in modern epidemiological and clinical studies \citep{zeng2014efficient,zhou2014semiparametric,tao2017efficient}, because the determination of a disease's risk factor can often boil down to such a regression problem. Example \ref{eg: ATE} is of practical importance in observational studies \citep{yang2019combining}. Previous works, e.g., \cite{lin2014adjustment} and \cite{yang2019combining},  focus on the estimation in Example \ref{eg: ATE} without exploring the sampling rule design. We contribute by establishing the optimal sampling rule for a wide range of problems including Example \ref{eg: ATE}.
		
		\subsection{Example for the issue with a multi-dimensional parameter}\label{subsec: multidim}
		\begin{example}\label{ex: classification}
			Suppose $Y \in \{0,1\}$ is an indicator of some disease status and $X \in \{0,1\}$ is the test result of some fallible test for disease status. Suppose $\bV=X$ and $\bU=Y$. The prevalence of the disease $\theta_{01} = P(Y=1)$, sensitivity $\theta_{02} = P(X=1\mid Y=1)$ and specificity $\theta_{03} = P(X=0\mid Y=0)$ of the test are often of primary interest in epidemiological studies. Let $\theta_{0} = (\theta_{01}, \theta_{02}, \theta_{03})^{\T}$ be the parameter of interest. It is not hard to show the efficient influence functions of $\theta_{01}$, $\theta_{02}$ and $\theta_{03}$ are $Y - \theta_{01}$, $\theta_{01}^{-1}(X-\theta_{02})Y$ and $(1-\theta_{01})^{-1}(1-X-\theta_{03})(1-Y)$, respectively. Let $P(X) = P(Y=1\mid X)$. The conditional variances $\sigma_{1}^{2}(\bV)$, $\sigma_{2}^{2}(\bV)$ and $\sigma_{3}^{2}(\bV)$ are $P(X)(1-P(X))$, $\theta_{01}^{-2}P(X)(1-P(X))(X-\theta_{02})^{2}$ and $(1-\theta_{01})^{-2}P(X)(1-P(X))(1 - X-\theta_{03})^{2}$, which are different from each other. According to Theorem \ref{thm: optimal probability}, the optimal sampling rule for $\theta_{0j}$ is determined by $\sigma_{j}^{2}(\cdot)$. This implies that the optimal sampling rules for different parameters are different from each other. Hence, there is no sampling rule that minimizes the semiparametric efficiency bound for different parameters simultaneously in general.
			
			Suppose $\theta_{01} = 0.2$, $\theta_{02} = 0.8$, $\theta_{03} = 0.6$ and $\varpi = 0.3$. Then some numerical calculations can show that the semiparametric efficiency bound for $\theta_{03}$ under $\rho_{\rm sum}(\cdot; \tau_{\rm sum})$ and the optimal sampling rule for $\theta_{01}$ are approximately $0.35$ and $0.43$, which are both larger than that under the uniform rule ($\approx 0.30$).
		\end{example}
		
		\section{Technical Details}\label{sec: technical}
		\subsection{Regularity Conditions}\label{subsec: model settings}
		Let $F_{0}$ be the distribution of $(\bV,\bU)$. We consider the case where the parameter of interest is a general functional of $F_{0}$. Throughout this paper, we assume $\rho(\cdot)$ is bounded away from zero and $E[\|\psi\|^{2}] < \infty$ where $\|\cdot\|$ denotes the Euclidean norm.
		
		As in \cite{newey1994asymptotic}, we consider inference of a \emph{pathwise differentiable} parameter within a \emph{locally nonparametric} distribution class. Here we briefly review the definitions of ``pathwise differentiable" and ``locally nonparametric". See \cite{bickel1982adaptive,van2012unified,tsiatis2007semiparametric} for more background on semiparametric theory. 
		Let $\cF$ be a set of joint distributions of $(\bV,\bU)$ whose specific definition depends on the problem we consider. Suppose $F_{0} \in \cF$. A class of distributions $\{F_{t} : t\in [-1, 1]\}$ is called a  submodel of $\cF$ if it is contained in $\cF$ and the distribution $F_{t}$ equals to $F_{0}$ when $t = 0$. Suppose $F_{t}$ has a density $f_{t}(v,u)$ and let $S(v,u) = d \log f_{t}(v,u) / dt \big|_{t = 0}$ be the score function under the submodel. Suppose the parameter $\theta_{0} = \theta(F_{0})$ is a functional of $F_{0}$ where $\theta(\cdot)$ is a functional defined on $\cF$. Then the parameter is pathwise differentiable if there is some function $\phi(\bV,\bU)$  with zero mean and finite variance such that $d\theta(F_{t})/dt \big |_{t = 0} = E[\phi(\bV,\bU)S(\bV,\bU)]$ for any regular submodel.
		
		Pathwise differentiability is a commonly used regularity condition in semiparametric theory \citep{bickel1982adaptive}. Here, a regular submodel is a submodel that satisfies certain regularity conditions. See \cite{bickel1982adaptive} for more discussions and the formal definition of a regular submodel. Typical examples of pathwise differentiable parameters including the mean or quantile of a variable, the solution of many commonly used estimating equations among lots of other parameters.
		
		``Locally nonparametric" is a property of the distribution class $\cF$. Because all the submodels are required to belong to $\cF$, the fewer the restrictions on $\cF$, the more submodels, and hence the larger the set of score functions.
		Here, ``locally nonparametric" requires $\cF$ to be ``general" or ``unrestricted" in the sense that the set of score functions can approximate any function of $(\bV,\bU)$ with zero mean and finite variance. In a locally nonparametric distribution class, general misspecification is allowed and few restrictions are imposed except for regularity conditions \citep{newey1994asymptotic}. For example, the  distribution class which consists of all the distributions with a finite second moment is a locally nonparametric distribution class. For a missing data problem, all the observation distributions with response missing at random also consists of a locally nonparametric class.

		\subsection{Proof of Lemma \ref{lem: EIF}}
		This lemma can be obtained utilizing the techniques in the semiparametric theory for missing data problems \citep{tsiatis2007semiparametric}. To be self-contained, we provide its proof here.
		\begin{proof}
			We show the efficient influence function is
			\[h = \frac{R\psi}{\rho(\bV)} -\left(\frac{R}{\rho(\bV)} - 1\right)\Pi(\bV)\]
			and the semiparametric efficiency bound follows by straightforward calculation. The observed likelihood of $(U,V,R)$ is
			\[
			\begin{aligned}
				&f(u\mid v)^{r} f(v)\rho(v)^{r} (1 - \rho(v))^{1-r},
			\end{aligned}
			\]
			where $f(v)$ is the density of $\bV$ and $f(u\mid v)$ is the distribution of $\bU$ conditional on $\bV = v$. For any regular submodel $f_{t}(u\mid v)f_{t}(v)\rho(v)^{r} (1 - \rho(v))^{1-r}$ whose distribution is denoted by $F_{t}$, the score function is
			\begin{equation}\label{eq: score function}
				rS(u\mid v) + S(v),
			\end{equation}
			where
			\[
			\begin{aligned}
				&S(u\mid v) = \frac{d}{dt}\log f_{t}(u\mid v), \\
				&\text{and}\\
				&S(v) = \frac{d}{dt}\log f_{t}(v).\\
			\end{aligned}
			\]
			We do not consider a submodel for $\rho(v)$ since the sampling rule is determined by the researcher and hence is known in this problem. 
			Because $\psi$ is the full data influence function and $E[S(U\mid V)\mid V] = 0$, we have
			\begin{equation}\label{eq: efficiency representation}
				\begin{aligned}
					\frac{d\theta(F_{t})}{dt} 
					& = E[\psi S(\bU\mid \bV)] + E[\psi S(\bV))] \\
					& = E\left[hRS(\bU\mid \bV)\right] + E\left[hS(\bV)\right]\\
					& = E\left[h\{RS(\bU\mid \bV) + S(\bV)\}\right].
				\end{aligned}
			\end{equation}
			According to \eqref{eq: score function}, the tangent space under the two-phase design consists of all functions of the form $rS(u\mid v) + S(v)$, where $S(u\mid v)$ and $S(v)$ are the score function of $f(u\mid v)$ and $f(v)$ under some full data submodel.
			Since the full data model is locally nonparametric, the closure of the tangent space under the two-phase design consists of all score functions of the form \eqref{eq: score function}, which is
			\[
			\cT = \{rs(u,v) + s(v)\ :\  E[s(\bU,\bV)\mid \bV] = 0, \ E[s(\bV)] = 0\}.
			\]
			It is easy to verify that $h$ belongs to $\cT$. This and \eqref{eq: efficiency representation} implies $h$ is the efficient influence function according to the characterization of the efficient influence function which can be found behind Lemma 25.14 in \cite{vanderVaart2000AS}. 
		\end{proof}
		
		\subsection{Proof of Theorem \ref{thm: optimal probability}}\label{subsec: proof opt prob scalar}
		\begin{proof}
			Recall that $\rho_{\cS}(\cdot) = \rho(\cdot; \sigma, \tau_{\cS})$.
			By the definition of $\tau_{\cS}$, the sampling rule $\rho_{\cS}(\cdot)$ satisfies the constraint $E[\rho_{\cS}(\bV)] = E[\rho(\bV; \sigma, \tau_{\cS})] \leq \varpi$.
			Because the second term in the efficiency bound \eqref{eq: bound scalar} is irrelevant to the sampling rule, to show $\rho(\cdot; \sigma, \tau_{\cS})$ is the optimal sampling rule, it suffices to prove 
			\[E\left[\frac{\sigma^{2}(\bV)}{\rho^{\star}(\bV)}\right] \geq E\left[\frac{\sigma^{2}(\bV)}{\rho(\bV; \sigma, \tau_{\cS})}\right]\]
			for any sampling rule $\rho^{\star}(\cdot)$ satisfying $E[\rho^{\star}(\bV)] \leq \varpi$.
			Note that
			\[
			\begin{aligned}
				E\left[\frac{\sigma^{2}(\bV)}{\rho^{\star}(\bV)}\right] - E\left[\frac{\sigma^{2}(\bV)}{\rho(\bV; \sigma, \tau_{\cS})}\right]
				&\geq E\left[\frac{\sigma^{2}(\bV)}{\rho^{2}(\bV; \sigma, \tau_{\cS})}\left(\rho(\bV; \sigma, \tau_{\cS}) - \rho^{\star}(\bV)\right)\right]\\
				&= E\left[\sigma^{2}(\bV)\left(\rho(\bV; \sigma, \tau_{\cS}) - \rho^{\star}(\bV)\right)1\{\sigma(\bV) > \tau_{\cS}\}\right]\\
				&\quad + \tau_{\cS}^{2}E\left[\left(\rho(\bV; \sigma, \tau_{\cS}) - \rho^{\star}(\bV)\right)1\{\sigma(\bV) \leq \tau_{\cS}\}\right]\\
				&\geq \tau_{\cS}^{2}E\left[\rho(\bV; \sigma, \tau_{\cS}) - \rho^{\star}(\bV)\right] \\
				& = \tau_{\cS}^{2}(\varpi - E\left[\rho^{\star}(\bV)\right]) \geq 0,
			\end{aligned}
			\]
			where the first inequality is because $1/z_{1} - 1/z_{2} \geq (z_{2} - z_{1}) / z_{2}^{2}$ for any $z_{1}$, $z_{2} > 0$. This completes the proof.
		\end{proof}
		
		\subsection{Proof of Theorem \ref{thm: finite-dimensional equiv}}\label{subsec: proof finite equiv}
		\begin{proof}
			Recall that problem \eqref{eq: maximin} is
			\begin{equation*}
				\max_{\rho \in \cP_{\cG}} \min_{j=1,\dots,d} \left\{b_{j}^{-1}\left(\xi_{j} - E\left[\frac{\sigma_{j}^{2}(\bV)}{\rho(\bV)}\right]\right)\right\}.
			\end{equation*}
			By Lemma 1.15 in \citet{rigollet2015high}, for any $\rho(\cdot)$, we have
			\[
			\begin{aligned}
				&\min_{j=1,\dots,d} \left\{b_{j}^{-1}\left(\xi_{j} - E\left[\frac{\sigma_{j}^{2}(\bV)}{\rho(\bV)}\right]\right)\right\}\\
				& = \min_{w\in \cW^{\dag}} \left\{\sum_{j=1}^{d}w_{j}b_{j}^{-1}\xi_{j} - E\left[\frac{\sum_{j=1}^{d}w_{j}b_{j}^{-1}\sigma_{j}^{2}(\bV)}{\rho(\bV)}\right]\right\},
			\end{aligned}
			\]
			where $\cW^{\dag}=\{w=(w_{1}, \dots, w_{d}) :  \sum_{j=1}^{d}w_{j} = 1, \ 0\leq w_{j} \leq 1, \ \text{for} \ j=1,\dots,d\}$.
			Hence \eqref{eq: maximin} is equivalent to
			\begin{equation}\label{eq: equivalent problem}
				\max_{\rho \in \cP_{\cG}}\min_{w\in \cW^{\dag}} \left\{\sum_{j=1}^{d}w_{j}b_{j}^{-1}\xi_{j} - E\left[\frac{\sum_{j=1}^{d}w_{j}b_{j}^{-1}\sigma_{j}^{2}(\bV)}{\rho(\bV)}\right]\right\}.
			\end{equation}
			
			Recall that $\cP_{\cG} \colonequals \{\rho(\cdot): 0\leq \rho(\cdot)\leq 1, \ E[\rho(\bV)] \leq \varpi\}$ and $\cW^{\dag}=\{w=(w_{1}, \dots, w_{d}) :  \sum_{j=1}^{d}w_{j} = 1, \ 0\leq w_{j} \leq 1, \ \text{for} \ j=1,\dots,d\}$. Let 
			\[h(\rho,w) = \sum_{j=1}^{d}w_{j}b_{j}^{-1}\xi_{j} - E\left[\frac{\sum_{j=1}^{d}w_{j}b_{j}^{-1}\sigma_{j}^{2}(\bV)}{\rho(\bV)}\right].\]
			Take the $L_{2}$ norm and the Euclidean norm as the norm in $\cP$ and $\cW^{\dag}$, respectively.
			Then, $\cP$, $\cW^{\dag}$ are compact and $h(\rho,w)$ is continuous with respect to $\rho$ and $q$. Moreover, $h(\rho,w)$ is convex with respect to $\rho$ and linear (hence concave) w.r.t. $w$. Thus, the solution of the optimization problem does not change if we change the order of $\max$ and $\min$ in \eqref{eq: maximin} according to Theorem 3.4 in \cite{sion1958general}. Thus, the dual problem
			\begin{equation}\label{eq: minimax}
				\min_{w\in \cW^{\dag}}\max_{\rho \in \cP_{\cG}} \left\{\sum_{j=1}^{d}w_{j}b_{j}^{-1}\xi_{j} - E\left[\frac{\sum_{j=1}^{d}w_{j}b_{j}^{-1}\sigma_{j}^{2}(\bV)}{\rho(\bV)}\right]\right\}
			\end{equation}
			shares the same solution as \eqref{eq: equivalent problem}, which also leads to an equivalent problem of \eqref{eq: maximin}.
			
			According to the above derivations, we can focus on the problem \eqref{eq: minimax}.	
			Notice that the inner optimization problem of \eqref{eq: minimax}
			\[
			\begin{aligned}
				&\max_{\rho \in \cP_{\cG}} \left\{\sum_{j=1}^{d}w_{j}b_{j}^{-1}\xi_{j} - E\left[\frac{\sum_{j=1}^{d}w_{j}b_{j}^{-1}\sigma_{j}^{2}(\bV)}{\rho(\bV)}\right]\right\}\\
				& = \sum_{j=1}^{d}w_{j}b_{j}^{-1}\xi_{j} - \min_{\rho \in \cP_{\cG}}E\left[\frac{\sum_{j=1}^{d}w_{j}b_{j}^{-1}\sigma_{j}^{2}(\bV)}{\rho(\bV)}\right].
			\end{aligned}
			\]
			Similar arguments to those in the proof of Theorem \ref{thm: optimal probability} can show that $\rho(\cdot;\sigma_{w}, \tau_{w})$ minimizes the functional
			$E\left[\sum_{j=1}^{d}w_{j}b_{j}^{-1}\sigma_{j}^{2}(\bV)/\rho(\bV)\right]$ over $\cP_{\cG}$ and the minimum value is 
			\[
			\min_{\rho \in \cP_{\cG}}E\left[\frac{\sum_{j=1}^{d}w_{j}b_{j}^{-1}\sigma_{j}^{2}(\bV)}{\rho(\bV)}\right] = E\left[\sigma_{w}(\bV)\max\{\sigma_{w}(\bV), \tau_{w} \}\right],
			\]
			where
			$\sigma_{w}(\bV) = \sqrt{\sum_{j=1}^{d}w_{j}b_{j}^{-1}\sigma_{j}^{2}(\bV)}$
			and $\tau_{w}$  is the unique solution of $E[\rho(\bV;\sigma_{w},\tau)] = \varpi$ with respect to $\tau$. This completes the proof of Theorem \ref{thm: finite-dimensional equiv}. 
		\end{proof}  
		
		\subsection{Proof of Theorem \ref{thm: exact budget constraint}}\label{subsec: proof exact budget}
		\begin{proof}
			We prove the result for $\tilde{\rho}_{j}(\cdot)$ for $j = 1,\dots, d$. The result for $\tilde{\rho}_{\cC}(\cdot)$ and $\tilde{\rho}_{\cG}(\cdot)$ can be established similarly.
			For $i = 1,\dots, n$, the expectation of $R_{2i}$ is $(1 - R_{1i})\tilde{\rho}_{j}(\bV_{i})$
			conditional on $(R_{11}, \bV_{1}),\dots,$ $(R_{1n}, \bV_{n})$ and $U_{j}$ for $j$ with $R_{1j} = 1$. Thus conditional on the same variables, the expectation of $\sum_{i=1}^{n}(R_{1i} + R_{2i})$ is $\sum_{i=1}^{n}R_{1i} + \sum_{i=1}^{n}(1 - R_{1i})\tilde{\rho}_{j}(\bV_{i})$. Because $\tilde{\tau}_{j}$ is the solution of \eqref{eq: threshold estimating equation}, we have $\sum_{i=1}^{n}(1 - R_{1i})\tilde{\rho}_{j}(\bV_{i}) = (\varpi - \kappa_{n})n$. According to the law of iterated conditional expectation, we have $E\left[\sum_{i=1}^{n}(R_{1i} + R_{2i})\right] = \kappa_{n} n + (\varpi - \kappa_{n})n = \varpi n$ which proves Theorem \ref{thm: exact budget constraint}.
		\end{proof}
		
		\subsection{Proof of Theorem \ref{thm: convergence rule comp}}\label{subsec: proof of convergence rule comp}
		In this and the following proofs, we use $M$ to denote generic positive constants whose values may be different in different places. 
		We first get down to the required regularity conditions. Recall that $\tau_{j}$ is the solution of $E[\rho(V; \sigma_{j}, \tau)] = \varpi$ for $j = 1,\dots, d$. 
		\begin{condition}\label{cond: id comp}
			There is some constants $r_{j},L_{j} >0$ such that $r_{j} < \tau_{j}$ and $|E[\rho(\bV; \sigma_{j}, \tau_{1})] - E[\rho(\bV;\sigma_{j}, \tau_{2})]| > L_{j} |\tau_{1} - \tau_{2}|$ for any $\tau_{1}, \tau_{2} \in [\tau_{j} - r_{j}, \tau_{j} + r_{j}]$ and $j=1,\dots,d$, where $\rho(\cdot; \sigma_{j}, \tau) = 1\{\sigma_{j}(\bV) > \tau\} + \sigma_{j}(\bV)/\tau1\{\sigma_{j}(\bV) \leq \tau\}$.
		\end{condition}
		\begin{condition}\label{cond: bound cond mean-var}
			$\sup_{v}\Pi_{j}(v) < \infty$ and
			$0 < \inf_{v}\sigma_{j}(v) \leq \sup_{v} \sigma_{j}(v) < \infty$ for $j = 1,\dots,d$.
		\end{condition}
		Condition \ref{cond: id comp} requires that the budgets under different thresholds are different in a neighborhood of $\tau_{j}$. Condition \ref{cond: bound cond mean-var} is a mild regularity condition. Next, we give the proof of Theorem \ref{thm: convergence rule comp}.
		\begin{proof}
			We prove the results for $\tilde{\rho}_{j}(\cdot)$ for $j = 1,\dots,d$. The result for $\tilde{\rho}_{\cS}(\cdot)$ is a special case of $d = 1$.
			
			We first show $\tilde{\tau}_{j}$ converges to $\tau_{j}$ for $j = 1, \dots, d$, where $\tilde{\tau}_{j}$ is the solution of Equation \eqref{eq: threshold estimating equation} in the main text. Let $\tau_{{j},n}$ be the solution of $E[\rho(\bV; \sigma_{j}, \tau)] = (\varpi - \kappa_{n}) / (1 - \kappa_{n})$. Note that $(\varpi - \kappa_{n})/(1 - \kappa_{n}) - \varpi = O(\kappa_{n})$. Under Condition \ref{cond: id comp}, we have $|\tau_{{j},n} - \tau_{j}| = O(\kappa_{n})$. Next, we show that $|\tilde{\tau}_{j} - \tau_{{j}, n}|$ converges to zero.
			For $\tau \in [\tau_{j} - r_{j}, \tau_{j} + r_{j}]$, define
			\[h_{{j}, n}(\tau) = \frac{1}{n}\sum_{i=1}^{n}(1 - R_{1i})\left(1\{\sigma(\bV_{i})\geq \tau\} + \frac{\sigma_{j}(\bV_{i})}{\tau}1\{\sigma_{j}(\bV_{i}) < \tau\}\right)\]
			and 
			\[\tilde{h}_{{j}, n}(\tau) = \frac{1}{n}\sum_{i=1}^{n}(1 - R_{1i})\left(1\{\tilde{\sigma}_{j}(\bV_{i})\geq \tau\} + \frac{\tilde{\sigma}_{j}(\bV_{i})}{\tau}1\{\tilde{\sigma}_{j}(\bV_{i}) < \tau\}\right).\]
			By calculating the mean and variance, we have
			\begin{equation}\label{eq: bound gsg 1}
				|h_{{j}, n}(\tau) - E[h_{{j}, n}(\tau)]| = O_{P}\left(\frac{1}{\sqrt{n}}\right)
			\end{equation}
			uniformly in $\tau \in [\tau_{j} - r_{j}, \tau_{j} + r_{j}]$. Moreover, we have $E[h_{{j}, n}(\tau)] = (1-\kappa_{n})E[\rho(\bV; \sigma_{j}, \tau)]$ which implies $E[h_{{j}, n}(\tau)] = \varpi - \kappa_{n}$. By Condition \ref{cond: rate sigma}, it is not hard to verify
			\begin{equation}\label{eq: bound gsg 2}
				|\tilde{h}_{{j},n}(\tau) - h_{{j}, n}(\tau)| \leq \frac{1}{\tau_{j} - r}\|\tilde{\sigma} - \sigma\|_{\infty} = O_{P}\left\{(n\kappa_{n})^{-\delta}\right\}
			\end{equation}
			uniformly in  $\tau \in [\tau_{j} - r_{j}, \tau_{j} + r_{j}]$, where $\delta$ is a constant determined by the convergence rate of $\|\tilde{\sigma}_{j} - \sigma_{j}\|_{\infty}$ which appears in Condition \ref{cond: rate sigma} . Combining \eqref{eq: bound gsg 1} and \eqref{eq: bound gsg 2}, we have
			\[|\tilde{h}_{{j},n}(\tau) - E[h_{{j}, n}(\tau)]| = O_{P}\left\{(n\kappa_{n})^{-\delta}\right\}. \]
			Thus for any $\epsilon > 0$, there is some constant $M > 0$ such that 
			\[P\left(|\tilde{h}_{{j}, n}(\tau) - E[h_{j, n}(\tau)]| \geq M(n\kappa_{n})^{-\delta}\right) \leq \frac{\epsilon}{2}\]
			for any $\tau \in [\tau_{j} - r_{j}, \tau_{j} + r_{j}]$.
			Define $\tau_{{j}, n, 1} = \tau_{{j}, n} - M(n\kappa_{n})^{-\delta}/(L_{j}(1-\kappa_{n}))$ and $\tau_{{j}, n, 2} = \tau_{{j}, n} + M(n\kappa_{n})^{-\delta}/(L_{j}(1-\kappa_{n}))$. Because $|\tau_{{j}, n} - \tau_{j}| = O(\kappa_{n})$, $\kappa_{n} \to 0$ and $n\kappa_{n} \to \infty$, we have $\tau_{{j}, n}$, $\tau_{{j}, n, 1}$, $\tau_{{j}, n, 2} \in [\tau_{j} - r_{j}, \tau_{j} + r_{j}]$ for sufficiently large $n$. Hence
			\begin{equation}\label{eq: bound gsg 3}
				\begin{aligned}
					&P\Big(\tilde{h}_{{j}, n}(\tau_{{j},n,1}) - E[h_{{j}, n}(\tau_{{j},n,1})] < M(n\kappa_{n})^{-\delta},
					\tilde{h}_{{j}, n}(\tau_{{j},n,2}) - E[h_{{j}, n}(\tau_{{j},n,2})] > - M(n\kappa_{n})^{-\delta}\Big)\\
					&\geq
					1 - P\left(\tilde{h}_{{j}, n}(\tau_{{j},n,1}) - E[h_{{j}, n}(\tau_{{j},n,1})] \geq M(n\kappa_{n})^{-\delta}\right) \\
					&\quad - P\left(\tilde{h}_{{j}, n}(\tau_{{j},n,2}) - E[h_{{j}, n}(\tau_{{j},n,2})] \leq - M(n\kappa_{n})^{-\delta}\right)\\
					& \geq 1 - \epsilon
				\end{aligned}
			\end{equation}
			for sufficiently large $n$. According to Condition \ref{cond: id comp} and the monotonicity of $E[h_{{j}, n}(\tau)] = (1-\kappa_{n})E[\rho(\bV; \sigma_{j}, \tau)]$, we have 
			\[E[h_{{j}, n}(\tau_{{j},n,1})] \leq \varpi - \kappa_{n} - M(n\kappa_{n})^{-\delta}\]
			and 
			\[E[h_{{j}, n}(\tau_{{j},n,2})] \geq \varpi - \kappa_{n} + M(n\kappa_{n})^{-\delta}.\]
			Combining this with \eqref{eq: bound gsg 3}, we have
			\[P\left(\tilde{h}_{{j}, n}(\tau_{{j},n,1}) < \varpi - \kappa_{n} < \tilde{h}_{{j}, n}(\tau_{{j},n,2})\right) \geq 1 - \epsilon.\]
			Recall that $\tilde{\tau}_{j}$ is the solution of $\tilde{h}_{j, n}(\tau) = \varpi - \kappa_{n}$.
			We have 
			$P\left(\tilde{\tau}_{j} \in [\tau_{{j},n,1}, \tau_{{j},n,2}]\right) \geq 1 - \epsilon$ by the monotonicity of $\tilde{h}_{{j}, n}(\tau)$.
			This implies $|\tilde{\tau}_{j} - \tau_{{j}, n}| = O_{P}\{(n\kappa_{n})^{-\delta}\}$ and hence 
			\begin{equation}\label{eq: converge tau tilde}
				|\tilde{\tau}_{j} - \tau_{j}| = O_{P}\{(n\kappa_{n})^{-\delta} + \kappa_{n}\}.
			\end{equation}
			
			Under Condition \ref{cond: bound cond mean-var}, we have
			\[
			\begin{aligned}
				&|\tilde{\rho}_{j}(v) - \rho_{j}(v)|\\ 
				& = \left| 1\{\tilde{\sigma}_{j}(v) > \tilde{\tau}_{j}\} + \frac{\tilde{\sigma}_{j}(v)}{\tilde{\tau}_{j}}1\{\tilde{\sigma}_{j}(v) \leq \tilde{\tau}_{j}\}  - 1\{\sigma_{j}(v) > \tau_{j}\} - \frac{\sigma_{j}(v)}{\tau_{j}}1\{\sigma_{j}(v) \leq \tau_{j}\}\right|\\
				&\leq \left|1\{\tilde{\sigma}_{j}(v) > \tilde{\tau}_{j}\} + \frac{\tilde{\sigma}_{j}(v)}{\tilde{\tau}_{j}}1\{\tilde{\sigma}_{j}(v) \leq \tilde{\tau}_{j}\}  - 1\{\sigma_{j}(v) > \tilde{\tau}_{j}\} - \frac{\sigma_{j}(v)}{\tilde{\tau}_{j}}1\{\sigma_{j}(v) \leq \tilde{\tau}_{j}\}\right| \\
				& + \left| 1\{\sigma_{j}(v) > \tilde{\tau}_{j}\} + \frac{\sigma_{j}(v)}{\tilde{\tau}_{j}}1\{\sigma_{j}(v) \leq \tilde{\tau}_{j}\}  - 1\{\sigma_{j}(v) > \tau_{j}\} - \frac{\sigma_{j}(v)}{\tau_{j}}1\{\sigma_{j}(v) \leq \tau_{j}\}\right|\\
				& \leq M (\|\tilde{\sigma}_{j} - \sigma_{j}\|_{\infty} + |\tilde{\tau}_{j} - \tau_{j}|)
			\end{aligned}
			\]
			for some constant $M$ with probability approaching one. Combining this with Condition \ref{cond: rate sigma} and Equation \eqref{eq: converge tau tilde}, we have
			\[\|\tilde{\rho}_{j} - \rho_{j}\|_{\infty} = O_{P}\left\{(n\kappa_{n})^{-\delta} + \kappa_{n}\right\},\]
			which completes the proof.
		\end{proof}
		
		\subsection{Proof of Theorem \ref{thm: convergence rule mv}}\label{subsec: proof of thm rule mv}
		In this subsection, we turn to the convergence result of the estimated sampling rule in the multi-dimensional parameter case.
		For $w \in \cW$, define
		\[
		H_{j}(w) = - b_{j}^{-1}\left(\xi_{j} - E\left[\frac{\sigma_{j}^{2}(\bV)}{\rho_{0}(\bV) + \sum_{j=1}^{d}w_{j}(\rho_{j}(\bV) - \rho_{0}(\bV_{i}))}\right]\right)
		\]
		and
		\[
		H_{\cC}(w) = \max_{j=1,\dots,d} H_{j}(w),
		\]
		where $\xi_{j} = E[\sigma_{j}^{2}(V) / \rho_{0}(V)]$ and $b_{j} = E\left[\sigma_{j}^{2}(\bV)/\rho_{0}(\bV)\right] + \Var[\Pi_{j}(\bV)]$ for $j = 1,\dots, d$.
		Similarly, for $w\in \cW^{\star}$, let 
		\[
		H_{\cG}(w) = \sum_{j=1}^{d}w_{j}b_{j}^{-1}\xi_{j} - E\left[\sigma_{w}(\bV)\max\{\sigma_{w}(\bV), \tau_{w} \}\right].
		\]
		Then, $w_{\cC} = \mathop{\arg\min}_{w \in \cW}H_{\cC}(w)$ and $w_{\cG} = \mathop{\arg\min}_{w \in \cW^{\dag}}H_{\cG}(w)$. The following condition is required to establish the convergence rate of $\tilde{\rho}_{\cC}(\cdot)$ and $\tilde{\rho}_{\cG}(\cdot)$.
		\begin{condition}\label{cond: benchmark rule}
			The benchmark sampling rule $\rho_{0}(\cdot)$ is bounded away from zero and satisfies $E[\rho_{0}(\bV)] = \varpi$.
		\end{condition}
		
		\begin{condition}\label{cond: local convexity Gc}
			(i) $H_{\cC}(w)$ has the unique minimum point $w_{\cC}$; (ii) for some constants $r_{\cC}, L_{\cC} > 0$  and any $w\in \cW$ such that $\|w - w_{\cC}\| \leq r_{\cC}$, we have $H_{\cC}(w) - H_{\cC}(w_{\cC}) \geq L_{\cC}\|w - w_{\cC}\|^{2}$.
		\end{condition}
		
		\begin{condition}\label{cond: local convexity Gm}
			(i) $H_{\cG}(w)$ has the unique minimum point $w_{\cG}$; (ii) for some constants $r_{\cG}, L_{\cG} > 0$ and any $w\in \cW^{\dag}$ such that $\|w - w_{\cG}\| \leq r_{\cG}$, we have $H_{\cG}(w) - H_{\cG}(w_{\cG}) \geq L_{\cG}\|w - w_{\cG}\|^{2}$.
		\end{condition}
		Condition \ref{cond: benchmark rule} is a regularity condition on the benchmark sampling rule $\rho_{0}(\cdot)$. Condition \ref{cond: local convexity Gc} is a mild regularity condition, which can be satisfied if $H_{\cC}(w)$ has a continuous Hessian matrix in a neighborhood of $r_{\cC}$ and the Hessian matrix at  $r_{\cC}$ is positive definite. Condition \ref{cond: local convexity Gm} is similar to Condition \ref{cond: local convexity Gc} with $H_{\cC}(w)$ replaced by $H_{\cG}(w)$.  Now, we are ready to prove Theorem \ref{thm: convergence rule mv}.
		
		\begin{proof}
			We only prove the result for $\tilde{\rho}_{\cC}(\cdot)$. The result for $\tilde{\rho}_{\cG}(\cdot)$ can be established similarly.
			
			Recall that $\rho_{\cC}(\cdot) = \rho_{0}(\cdot) + \sum_{j=1}^{d}w_{{\cC},j}\{\rho_{j}(\cdot) - \rho_{0}(\cdot)\}$ and $\tilde{\rho}_{\cC}(\cdot) = \rho_{0}(\cdot) + \sum_{j=1}^{d}\widetilde{w}_{{\cC},j}(\tilde{\rho}_{j}(\cdot) - \rho_{0}(\cdot))$, where $w_{{\cC},j}$ and $\widetilde{w}_{{\cC}, j}$ are the $j$th component of $w_{\cC}$ and $\widetilde{w}_{\cC}$, respectively, for $j=1,\dots,d$. Recall that $\rho_{j}(v) = 1\{\sigma_{j}(v) > \tau_{j}\} + \sigma_{j}(v)1\{\sigma_{j}(v) \leq \tau_{j}\}/\tau_{j}$ for $j = 1,\dots, d$. By Condition \ref{cond: bound cond mean-var}, we have 
			\begin{equation}\label{eq: bounded rule}
				\frac{1}{M} \leq \inf_{v}\rho_{j}(v) \leq \sup_{v}\rho_{j}(v) \leq M
			\end{equation}
			for some $M > 1$. Hence 
			\[
			\begin{aligned}
				\|\tilde{\rho}_{\cC} - \rho_{\cC}\|_{\infty} \leq \sum_{j = 1}^{d}\|\tilde{\rho}_{j} - \rho_{j}\|_{\infty} + M\|\tilde{w}_{\cC} - w_{\cC}\|.
			\end{aligned}
			\]
			Theorem \ref{thm: convergence rule comp} has established the convergence rate of $\|\tilde{\rho}_{j} - \rho_{j}\|_{\infty}$. Thus, in order to establish the convergence rate of $\|\tilde{\rho}_{\cC} - \rho_{\cC}\|_{\infty}$, it suffices to establish the convergence rate of $\|\widetilde{w}_{\cC} - w_{\cC}\|$ .
			Define 
			\[\]\[
			\widetilde{H}_{j}(w) = - \tilde{b}_{j}^{-1}\left(\tilde{\xi}_{j} - \frac{1}{n}\sum_{i=1}^{n}\frac{\tilde{\sigma}_{j}^{2}(\bV)}{\rho_{0}(\bV) + \sum_{j=1}^{d}w_{j}(\tilde{\rho}_{j}(\bV) - \rho_{0}(\bV_{i}))}\right)
			\]
			and
			\[
			\widetilde{H}_{\cC}(w) = \max_{j=1,\dots,d} \widetilde{H}_{j}(w),
			\] 
			where, for $j = 1,\dots, d$, $\tilde{b}_{j} = n^{-1}\sum_{i=1}^{n}\tilde{\sigma}_{j}^{2}(\bV_{i})/\rho_{0}(\bV_{i}) + n^{-1}\sum_{i=1}^{n}\{\widetilde{\Pi}_{j}(V_{i}) - n^{-1}\sum_{i=1}^{n}\widetilde{\Pi}_{j}(V_{i})\}^{2}$ and $\tilde{\xi}_{j} = n^{-1}\sum_{i=1}^{n}\tilde{\sigma}_{j}^{2}(\bV_{i})/\rho_{0}(\bV_{i})$.
			Then $\widetilde{w}_{\cC} = \mathop{\arg\min}_{w\in \cW}\widetilde{H}_{\cC}(w)$. Define $\bar{b}_{j}$ and $\bar{H}_{j}(w)$ similarly to $\tilde{b}_{j}$ and $\widetilde{H}_{j}(w)$ with $\widetilde{\Pi}_{j}(\cdot)$, $\tilde{\sigma}_{j}(\cdot)$ and $\tilde{\rho}_{j}(\cdot)$ in $\tilde{b}_{j}$ and $\widetilde{H}_{j}(w)$ replaced by $\Pi_{j}(\cdot)$, $\sigma_{j}(\cdot)$ and $\rho_{j}(\cdot)$. Let $\bar{H}_{\cC}(w) = \max_{j=1,\dots,d}\bar{H}_{j}(w)$. 
			According to Condition \ref{cond: rate sigma}, Theorem \ref{thm: convergence rule comp}, and inequality \eqref{eq: bounded rule}, there is some constant $M>1$ such that $1/M \leq \inf_{v}\tilde{\rho}_{j}(v) \leq \sup_{v}\tilde{\rho}_{j}(v) \leq M$ with probability approaching one. Conditions \ref{cond: bound cond mean-var}, \ref{cond: benchmark rule} imply that $\max_{j = 1,\dots, d}|\tilde{b}_{j} - \bar{b}_{j}| \leq  M(\|\widetilde{\Pi}_{j} - \Pi_{j}\|_{\infty} + \|\tilde{\sigma}_{j} - \sigma_{j}\|_{\infty} + \|\tilde{\rho}_{j} - \rho_{j}\|_{\infty})$ for some constant $M$.
			Then, Conditions \ref{cond: bound cond mean-var}, \ref{cond: benchmark rule} and the mean value theorem implies that
			\[\sup_{w\in\cW}|\widetilde{H}_{j}(w) - \bar{H}_{j}(w)| \leq M(\|\widetilde{\Pi}_{j} - \Pi_{j}\|_{\infty} + \|\tilde{\sigma}_{j} - \sigma_{j}\|_{\infty} + \|\tilde{\rho}_{j} - \rho_{j}\|_{\infty})\]
			with probability approaching one for some constant $M$. This implies
			\[
			\begin{aligned}
				&\sup_{w\in\cW}|\widetilde{H}_{\cC}(w) - \bar{H}_{\cC}(w)| \leq\\
				&M\left(\max_{j=1,\dots,d}\|\widetilde{\Pi}_{j} - \Pi_{j}\|_{\infty} + \max_{j=1,\dots,d}\|\tilde{\sigma}_{j} - \sigma_{j}\|_{\infty} + \max_{j=1,\dots,d}\|\tilde{\rho}_{j} - \rho_{j}\|_{\infty}\right)
			\end{aligned}
			\]
			with probability approaching one.
			Then, according to Condition \ref{cond: rate sigma} and Theorem \ref{thm: convergence rule comp}, we have
			\begin{equation}\label{eq: est sup bound Gc}
				\sup_{w\in\cW}|\widetilde{H}_{\cC}(w) - \bar{H}_{\cC}(w)| = O_{P}\left\{(n\kappa_{n})^{-\delta}+\kappa_{n}\right\}.
			\end{equation}
			Recall that
			\[
			\bar{b}_{j} = \left(\frac{1}{n}\sum_{i=1}^{n}\left[  \frac{\sigma_{j}^{2}(\bV_{i})}{\rho_{0}(\bV_{i})} + \left\{\Pi_{j}(V_{i}) - \frac{1}{n}\sum_{i=1}^{n}\Pi_{j}(V_{i}) \right\}^{2}\right]\right)
			\]
			\[
			\begin{aligned}
				\bar{H}_{j}(w)
				& = \bar{b}_{j}^{-1} \frac{1}{n}\sum_{i=1}^{n}\left\{\frac{\sigma_{j}^{2}(\bV_{i})}{\rho_{0}(\bV_{i})} - \frac{\sigma_{j}^{2}(\bV_{i})}{\rho_{0}(\bV_{i}) + \sum_{j=1}^{d}w_{j}(\rho_{j}(\bV_{i}) - \rho_{0}(\bV_{i}))}\right\}
			\end{aligned}
			\]  
			for any $w\in \cW$. According to Condition \ref{cond: bound cond mean-var}, we have $b_{j} \geq 1 / M$ for some constant $M$ and $j = 1,\dots, d$. Notice that the function $\sigma_{j}(v)/\rho_{0}(v) + \sigma_{j}(v)/[\rho_{0}(v) + \sum_{j=1}^{d}w_{j}(\rho_{j}(v) - \rho_{0}(v))]$ is bounded and Lipschitz continuous with respect to $w$ due to \eqref{eq: bounded rule}, Condition \ref{cond: bound cond mean-var}, and Condition \ref{cond: benchmark rule}. Then, 
			similar concentration arguments as in the proof of Lemma C.1 in \cite{he2021smoothed} can show that
			\begin{equation}\label{eq: concentration sup bound Gc}
				\sup_{w\in \cW}|\bar{H}_{j}(w) - H_{j}(w)| = O_{P}\left(\frac{1}{\sqrt{n}}\right).
			\end{equation}
			Combining \eqref{eq: est sup bound Gc} with \eqref{eq: concentration sup bound Gc}, we have 
			\[
			\sup_{w\in \cW}|\widetilde{H}_{j}(w) - H_{j}(w)| = O_{P}\left\{(n\kappa_{n})^{-\delta}+\kappa_{n}\right\}.
			\]
			Thus, for any $\epsilon > 0$, there is some $M > 0$ such that $P(\sup_{w\in \cW}|\widetilde{H}_{j}(w) - H_{j}(w)| \geq M\{(n\kappa_{n})^{-\delta}+\kappa_{n}\}) \leq \epsilon$. Because $\widetilde{w}_{\cC}$ is the minimum point of $\widetilde{H}_{\cC}(w)$, we have
			$\widetilde{H}_{\cC}(\widetilde{w}_{\cC}) \leq \widetilde{H}_{\cC}(w_{\cC})$ and hence
			\[
			\begin{aligned}
				H_{\cC}(\widetilde{w}_{\cC}) - H_{\cC}(w_{\cC})  
				&\leq \widetilde{H}_{\cC}(\widetilde{w}_{\cC}) - \widetilde{H}_{\cC}(w_{\cC}) + 2 \sup_{w\in \cW}|\widetilde{H}_{j}(w) - H_{j}(w)| \\
				& \leq 2\sup_{w\in \cW}|\widetilde{H}_{j}(w) - H_{j}(w)|.
			\end{aligned}
			\]
			Thus 
			\begin{equation}\label{eq: value bound}
				H_{\cC}(\widetilde{w}_{\cC}) - H_{\cC}(w_{\cC}) \leq 2M\{(n\kappa_{n})^{-\delta}+\kappa_{n}\}
			\end{equation}
			with probability at least $1 - \epsilon$. Because $H_{\cC}(w)$ is continuous with respect to $w$, $\cW$ is a compact set and $w_{\cC}$ is the unique minimum point of $H_{\cC}(w)$, we have $\inf_{w\in \cW, \|w - w_{\cC}\|\geq r_{\cC}}\{H_{\cC}(w) - H_{\cC}(w_{\cC})\} > 0$ for $r_{\cC}$ in Condition \ref{cond: local convexity Gc}. For sufficiently large $n$ such that $2M\{(n\kappa_{n})^{-\delta}+\kappa_{n}\} < \inf_{w\in \cW, \|w - w_{\cC}\|\geq r_{\cC}}\{H_{\cC}(w) - H_{\cC}(w_{\cC})\}$, we have $\|\widetilde{w}_{\cC} - w_{\cC}\| \leq r_{\cC}$ with probability at least $1 - \epsilon$ according to \eqref{eq: value bound}. Then inequality \eqref{eq: value bound} and Condition \ref{cond: local convexity Gc} imply $\|\widetilde{w}_{\cC} - w_{\cC}\| \leq \sqrt{2M\{(n\kappa_{n})^{-\delta}+\kappa_{n}\}/L_{\cC}}$ with probability at least $1 - \epsilon$. Because $\epsilon$ is arbitrary, we have $\|\widetilde{w}_{\cC} - w_{\cC}\| = O_{P}(\sqrt{(n\kappa_{n})^{-\delta}+\kappa_{n}})$. Combining this with Condition \ref{cond: bound cond mean-var} and the fact that $\|\tilde{\rho}_{j} - \rho_{j}\|_{\infty} = O_{P}\{(n\kappa_{n})^{-\delta}+\kappa_{n}\}$, we conclude that $\|\tilde{\rho}_{\cC} - \rho_{\cC}\|_{\infty} = O_{P}(\sqrt{(n\kappa_{n})^{-\delta}+\kappa_{n}})$.
		\end{proof}

		\section{Estimation}\label{sec: est}
		\subsection{Estimate the Conditional Mean and Variance of the Efficient Influence Function}\label{subsec: estimate EIF}
		The full data efficient influence function depends on $\theta_{0}$ and may also depend on some unknown nuisance parameters, e.g., $\alpha_{0}$ and $\beta_{0}$ in Example \ref{eg: least squares} and $m_{1}(\cdot)$, $m_{0}(\cdot)$ and $\pi(\cdot)$ in Example \ref{eg: ATE}. Thus, we need to estimate these unknown quantities. We write the efficient influence function as $\psi(V, \bU; \theta_{0}, \eta_{0})$ where $\eta_{0}$ is the nuisance parameter. 
		The nuisance parameter $\eta_{0}$ can be estimated using the pilot sample. Denote the resulting estimator by $\tilde{\eta}$. Then $\theta_{0}$ can be estimated by $\tilde{\theta}$ which is the solution of the estimating equation
		\[\sum_{i=1}^{n}R_{1i}\psi(\bV_{i},\bU_{i}; \theta, \tilde{\eta}) = 0,\]
		where $R_{1i}$ is the inclusion indicator for the pilot sample.
		Then we obtain the estimates $\tilde{\psi}_{i} = \psi(\bV_{i}, \bU_{i}; \tilde{\theta}, \tilde{\eta})$ ($i: R_{1i} = 1$) for the full data efficient influence function of observations in the pilot sample.
		
		In order to estimate the optimal sampling rule and construct efficient estimator according to the efficient influence function in Lemma \ref{lem: EIF}, one needs to estimate the 
		conditional mean $\Pi_{j}(\cdot)$ and the standard deviation $\sigma_{j}(\cdot)$ for $j = 1,\dots, d$. These quantities can be estimated by fitting a heteroscedastic parametric regression model using the pilot sample and the estimated $\tilde{\psi}_{i}$'s. 
		In practice, it may be hard to model $\Pi_{j}(\cdot)$ and $\sigma_{j}(\cdot)$  for $j = 1,\dots, d$. If plausible parametric models are not available, we recommend to estimate them nonparametrically. Many methods are available for this task, including kernel smoothing \citep{fan1998efficient,fan2018local}, sieve methods \citep{huang1998projection,huang2001concave,chen2007large} and kernel ridge regression \citep{mendelson2002geometric}. A sieve method that can estimate the conditional mean and standard deviation simultaneously is proposed in the next section. The proposed method is computationally efficient and performs well even when the dimension of the first-phase variable is moderately high.
		
		\subsection{New Nonparametric Estimators for the Conditional Mean and Variance} \label{subsec: MVR}
		If a plausible model for the conditional mean function $\Pi_{j}(v)$ is unavailable for $j = 1,\dots, d$, we can
		approximate it with a linear combination of some basis functions such as polynomials, wavelets, or splines. Let $p(v) = (p_{1}(v), \dots, p_{K}(v))^{\T}$ be a vector of basis functions which can change with $n$. One can increase $K$ with $n$ to make the approximation more and more accurate as the sample size increases. Then, we approximate $\Pi_{j}(v)$ by $\gamma_{1}^{\T}p(v)$ for some $\gamma_{1}$. The conditional standard deviation $\sigma_{j}(v)$ can be approximate similarly by $\gamma_{2}^{\T}p(v)$ for some $\gamma_{2}$. However, this approximation is not guaranteed to be non-negative. An infeasible negative sampling rule is obtained if we plug the negative approximation into the expression of the optimal sampling rule in Theorem \ref{thm: optimal probability}. Using the truncated version $\max\{\gamma_{2}^{\T}p(v),0\}$ can avoid this problem but leads the function not differentiable with respect to $\gamma_{2}$ which makes optimization difficult. Hence we propose to use a transformation function $\Lambda(v) = \log(1 + \exp(v))$ and use $\Lambda(\gamma_{2}^{\T}p(v))$ to approximate $\sigma_{j}(v)$. The function $\Lambda(v)$ is a smooth approximation of the truncation function $\max\{v,0\}$. Moreover, it is convex, Lipchitz continuous, differentiable, and non-negative. These nice properties benefit the optimization.
		
		Then, the remaining problem is to determine $\gamma_{1}$ and $\gamma_{2}$. The discussion behind Theorem \ref{thm: optimal probability} can show that $\sigma_{j}(\cdot)$ minimizes $E[(\psi_{j} - \Pi_{j}(\bV))^{2}/f_{2}(\bV)]$ over all positive $f_{2}(\cdot)$ such that $E[f_{2}(\bV)] \leq E[\sigma_{j}(\bV)]$. This motivates us to consider the penalized objective formulation of the constrained optimization problem $E[(\psi_{j} - \Pi_{j}(\bV))^{2}/f_{2}(\bV)] + E[f_{2}(\bV)]$. One can verify that this objective function is minimized if $f_{2}(\cdot) = \sigma_{j}(\cdot)$. On the other hand, it is straightforward to show, for any given positive function $f_{2}(\cdot)$, the conditional mean function $\Pi_{j}(\cdot)$ minimizes the weighted least squares objective function $E[(\psi_{j} - f_{1}(\bV))^{2} / f_{2}(\bV)] + E[f_{2}(\bV)]$ over all $f_{1}(\cdot)$. This inspires us to recover the conditional mean and variance simultaneously by minimizing $E[(\psi_{j} - f_{1}(\bV))^{2} / f_{2}(\bV)] + E[f_{2}(\bV)]$ with respect to $f_{1}(\cdot)$ and $f_{2}(\cdot)$. By replacing the expectation with sample mean and plugging in the estimates and approximations, we obtain the objective function
		\begin{equation}\label{eq: MVR}
			\cL_{nj}(\gamma_{1},\gamma_{2}) = \frac{1}{n\kappa_{n}}\sum_{i=1}^{n}R_{1i}\left\{\frac{(\tilde{\psi}_{ij} - \gamma_{1}^{\T}p(V_{i}))^{2}}{\Lambda(\gamma_{2}^{\T}p(\bV_{i}))} + \Lambda(\gamma_{2}^{\T}p(\bV_{i}))\right\},
		\end{equation} 
		where $\tilde{\psi}_{ij}$ is the $j$th component of $\tilde{\psi}_{i}$ for $j = 1,\dots, d$ and $i = 1,\dots, n$.
		Let $\tilde{\gamma}_{1j},$ $\tilde{\gamma}_{2j}$ be the minimum point of \eqref{eq: MVR}. Then $\Pi_{j}(\bV)$ and $\sigma_{j}(\bV)$ can be estimated by $\widetilde{\Pi}_{j}(\bV) = \tilde{\gamma}_{1j}^{\T}p(\bV)$ and $\tilde{\sigma}_{j}(\bV) = \Lambda(\tilde{\gamma}_{2j}^{\T}p(\bV))$, respectively. 
		The proposed objective function has the following block-wise convex property.
		\begin{proposition}\label{prop: optimization MVR}
			For $j = 1,\dots, d$ and any give $\gamma_{1}$, $\cL_{nj}(\gamma_{1},\gamma_{2})$ is convex with respect to $\gamma_{2}$; for $j = 1,\dots, d$ and  any give $\gamma_{2}$, $\cL_{nj}(\gamma_{1},\gamma_{2})$ is convex with respect to $\gamma_{1}$.
		\end{proposition}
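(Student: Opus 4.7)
The plan is to establish each of the two block-wise convexity claims separately, since $\cL_{nj}$ is a nonnegative linear combination (with weights $R_{1i}/(n\kappa_n)$) of terms indexed by $i$, and a sum of convex functions is convex. It therefore suffices to analyze a single summand of the form
\[
F_i(\gamma_1,\gamma_2) = \frac{(\tilde{\psi}_{ij}-\gamma_1^\T p(V_i))^2}{\Lambda(\gamma_2^\T p(V_i))} + \Lambda(\gamma_2^\T p(V_i)),
\]
where $\Lambda(s)=\log(1+e^s)>0$.

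For fixed $\gamma_2$, the denominator $\Lambda(\gamma_2^\T p(V_i))$ is a strictly positive constant, and $\gamma_1 \mapsto (\tilde{\psi}_{ij}-\gamma_1^\T p(V_i))^2$ is the composition of the convex map $x\mapsto x^2$ with an affine function, hence convex. The additive $\Lambda$-term does not depend on $\gamma_1$. So $F_i(\cdot,\gamma_2)$ is convex in $\gamma_1$, completing the first half.

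For fixed $\gamma_1$, write $c_i=(\tilde{\psi}_{ij}-\gamma_1^\T p(V_i))^2\ge 0$ and $t=\gamma_2^\T p(V_i)$, and reduce to showing that the scalar function $g(t) = c_i/\Lambda(t) + \Lambda(t)$ is convex on $\mathbb{R}$. Convexity of $\Lambda$ (the standard softplus / log-sum-exp) is immediate, so the main step is to prove that $\phi(t):=1/\Lambda(t)$ is convex. Differentiating twice gives
\[
\phi''(t) = \frac{2\bigl(\Lambda'(t)\bigr)^2 - \Lambda(t)\Lambda''(t)}{\Lambda(t)^3},
\]
so it suffices to verify $2(\Lambda'(t))^2 \ge \Lambda(t)\Lambda''(t)$. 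Using $\Lambda'(t)=e^t/(1+e^t)$ and $\Lambda''(t)=e^t/(1+e^t)^2$, this inequality simplifies, after multiplying by $(1+e^t)^2/e^t$, to $2e^t \ge \log(1+e^t)$, i.e., $2u \ge \log(1+u)$ for $u=e^t>0$, which follows from the elementary bound $\log(1+u)\le u \le 2u$. Once $\phi$ is convex on $\mathbb{R}$, so is $\gamma_2 \mapsto \phi(\gamma_2^\T p(V_i))$ (composition with an affine map). Scaling by $c_i\ge 0$ and adding the convex term $\gamma_2\mapsto\Lambda(\gamma_2^\T p(V_i))$ preserves convexity, giving convexity of $F_i(\gamma_1,\cdot)$.

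The only nontrivial step is checking that the reciprocal of the softplus is convex, since $\Lambda$ is itself convex (not concave), so the standard rule ``reciprocal of a positive concave function is convex'' does not apply and one must work directly with $\phi''$; the elementary inequality $\log(1+u)\le 2u$ for $u\ge 0$ is what makes the computation go through. Summing over $i$ with the nonnegative weights $R_{1i}/(n\kappa_n)$ then yields the block-wise convexity of $\cL_{nj}$ claimed in the proposition.
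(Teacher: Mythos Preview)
Your proof is correct and follows essentially the same route as the paper's: both arguments reduce to checking that the scalar coefficient multiplying $p(V_i)p(V_i)^\T$ in the Hessian is nonnegative, which amounts to the inequality $2(\Lambda')^2 \ge \Lambda\,\Lambda''$ (equivalently, the paper's $2\Lambda^{-3}(\Lambda^{(1)})^2 - \Lambda^{-2}\Lambda^{(2)} > 0$). Your version is slightly more complete, since you actually verify this inequality via $\log(1+u)\le u$, whereas the paper simply asserts it; your reduction to a one-dimensional problem through the affine composition $t=\gamma_2^\T p(V_i)$ is also a clean way to avoid writing out the full matrix Hessian, but the underlying computation is the same.
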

		\begin{proof}
			For any given $\gamma_{1}$, the Hessian of $\cL_{nj}(\gamma_{1},\gamma_{2})$ with respect to $\gamma_{2}$ is
			\[
			\begin{aligned}
				&\frac{1}{n\kappa_{n}}\sum_{i=1}^{n}R_{1i}p(\bV_{i})p(\bV_{i})^{\T}
				\Big\{(\tilde{\psi}_{i} - \gamma_{1}^{\T}p(\bV_{i}))^{2}\times\\
				&\left[2\Lambda(\gamma_{2}^{\T}p(\bV_{i}))^{-3}\Lambda^{(1)}(\bV_{i})^{2} - \Lambda(\gamma_{2}^{\T}p(\bV_{i}))^{-2}\Lambda^{(2)}(\bV_{i})^{2} \right] + \Lambda^{(2)}(\gamma_{2}^{\T}p(\bV_{i}))\Big\}
			\end{aligned}
			\]
			where $\Lambda^{(1)}(v) = \exp(v)/\{1+\exp(v)\}$ and $\Lambda^{(2)}(v) = \exp(v)/\{1+\exp(v)\}^2$. This matrix is positive semi-definite because $\Lambda^{(2)}(v) > 0 $ and $2\Lambda(v)^{-3}\Lambda^{(1)}(v)^2 - \Lambda(v)^{-2}\Lambda^{(2)}(v) > 0$. Thus $\cL_{nj}(\gamma_{1},\gamma_{2})$ is convex with respect to $\gamma_{2}$. For any given $\gamma_{2}$, the Hessian of $\cL_{nj}(\gamma_{1},\gamma_{2})$ with respect to $\gamma_{1}$ is 
			\[
			\frac{1}{n\kappa_{n}}\sum_{i=1}^{n}R_{1i}p(\bV_{i})p(\bV_{i})^{\T}\frac{2}{\Lambda(\gamma_{2}^{\T}p(\bV_{i}))},
			\]
			which is also positive semi-definite. This completes the proof.
		\end{proof}
		Proposition \ref{prop: optimization MVR} shows $\cL_{nj}(\gamma_{1},\gamma_{2})$ is block-wise convex with respect to $\gamma_{1}$ and $\gamma_{2}$ for $j = 1,\dots, d$ and hence the optimization problem \eqref{eq: MVR} can be solved efficiently by routine optimization algorithms. So far we have defined a nonparametric estimator for $\Pi_{j}(\bV)$, $\sigma_{j}(\bV)$ based on the sieve method \citep{chen2007large}. For our numerical experiments, the estimators $\widetilde{\Pi}_{j}$ and $\tilde{\sigma}_{j}$ are employed. In the simulation, we introduce a small regularization  $0.1(d_{V} + 1)(\|\gamma_{1}\|^{2} + \|\gamma_{2}\|^{2})$ into the loss function \eqref{eq: MVR} to further enhance the stability of the solution, where $d_{V}$ is the dimension of the first-phase variable $\bV$. In the simulation studies and real data analysis, we normalize all first-phase variables to the range $[0, 1]$ using min-max normalization and then use a second-order polynomial with full interactions between variables as the basis functions. This simple choice performs well in our numerical experiments, and we recommend it for practical use. However, we do not claim that this choice of basis functions is optimal in any sense. Identifying the optimal basis functions for the problem considered here remains an interesting direction for future research.
		
		The idea to recover the mean and variance simultaneously also appears in parametric heterogeneous regression literature and is shown to perform well in finite sample \citep{daye2012high,spady2018simultaneous}. Our estimator is an extension of the idea to the nonparametric literature. The proposed method has several nice properties compared to other nonparametric conditional mean and variance estimators, such as kernel smoothing, sieve least squares, and kernel ridge regression. First, it is computationally efficient as the conditional mean and variance can be estimated simultaneously by solving the optimization problem \eqref{eq: MVR}. Second, heteroscedasticity is considered in fitting the conditional mean model, which can deliver efficiency gains when fitting a model with many parameters and limited observations \citep{daye2012high}. Third, the conditional variance estimator is always positive. This is also a desirable property \citep{yu2004likelihood} which is not possessed by some classic existing methods, e.g., local linear kernel smoothing, sieve least squares, and kernel ridge regression. 
		
		\subsection{Estimate the Parameters of Interest}\label{subsec: one-step}
		Define $R_{2i} = 0$ for subjects with $R_{1i} = 1$. With some abuse of notation, let $R_{i} = R_{1i} + R_{2i}$ be the overall sampling indicator for the second phase sampling.
		Under the sampling procedure proposed in Section \ref{sec: est}, $\bV_{i}$ is measured for all the $n$ subjects, and $\bU_{i}$ is measured for subjects with $R_{i} = 1$. If a nonrandom sampling rule $\rho(\cdot)$ is used to select the subsequent sample in the second phase, we have the inclusion probability $P(R_{i} = 1\mid \bV_{i}) = \kappa_{n} + (1 - \kappa_{n})\rho(\bV_{i})$ for $i = 1,\dots, n$. Denote the sampling rule adopted in the second phase by $\tilde{\rho}(\cdot)$, where $\tilde{\rho}(\cdot)$ depends on the pilot sample and hence is random. However, it converges to some nonrandom sampling rule according to Theorem \ref{thm: convergence rule mv}. Thus the inclusion probability can be approximated by $\rho_{n}(\bV_{i}) = \kappa_{n} + (1 - \kappa_{n})\tilde{\rho}(\bV_{i})$. Let the inverse probability weighted estimator $\hat{\theta}_{\rm ipw}$ be the solution of the estimating equation
		\[\sum_{i=1}^{n}\frac{R_{i}\psi(\bV_{i}, \bU_{i}; \theta, \tilde{\eta})}{\rho_{n}(\bV_{i})} = 0.\] 
		The inverse probability weighted estimator $\hat{\theta}_{\rm ipw}$ is $\sqrt{n}$-consistent under certain regularity conditions but may not be efficient \citep{tsiatis2007semiparametric}. Based on $\hat{\theta}_{\rm ipw}$, an efficient estimator can be obtained through one-step estimation \citep{bickel1982adaptive}. Let $\widetilde{\Pi}(\cdot)$ be an estimate of $\Pi(\cdot)$ based on the pilot sample. According to the efficient influence function given in Lemma \ref{lem: EIF}, the one-step estimator is defined by
		\[
		\begin{aligned}
			\hat{\theta} & = \hat{\theta}_{\rm ipw} + \sum_{i=1}^{n}\frac{R_{i}\psi(\bV_{i}, \bU_{i}; \hat{\theta}_{\rm ipw}, \tilde{\eta})}{\rho_{n}(\bV_{i})} - \sum_{i=1}^{n}\left(\frac{R_{i}}{\rho_{n}(\bV_{i})} - 1\right)\widetilde{\Pi}(\bV_{i}) \\
			& = \hat{\theta}_{\rm ipw} - \sum_{i=1}^{n}\left(\frac{R_{i}}{\rho_{n}(\bV_{i})} - 1\right)\widetilde{\Pi}(\bV_{i}),
		\end{aligned}
		\] 
		The one-step estimator $\hat{\theta}$ is asymptotically normal and efficient under appropriate empirical process conditions \citep{van2012unified,wellner1996weak}.
		
		As noted by an anonymous reviewer, the use of the pilot sample breaks the i.i.d. structure across observations because the distribution of the sampling indicator in the second phase depend on the pilot sample. This complicates the theoretical analysis of the proposed one-step estimator $\hat{\theta}$. However, we note that, conditional on the pilot sample, the remaining observations are i.i.d., and the pilot sample typically constitutes only a small fraction of the total dataset. Intuitively, this suggests that the overall data distribution does not deviate substantially from the i.i.d. setting. Although not able to rigorously prove, we conjecture that this deviation does not cause serious issues. Our numerical results indicate that, despite of the non-i.i.d. structure, the proposed estimator $\hat{\theta}$ performs well in practice, suggesting that the non-i.i.d. structure and potential overfitting are not major concerns---at least in our numerical experiments. The simulations in Section~\ref{subsec: compare alt} show that $\hat{\theta}$ has negligible bias and achieves higher finite-sample efficiency than several alternative estimators. Based on these observations, we recommend the one-step estimator $\hat{\theta}$ for practical use, while leaving a rigorous investigation of its theoretical properties to future research. 
		
		\section{Additional Simulations}\label{sec: further simulation}
		\subsection{Response Mean}\label{subsec: response mean}
		
		In this section, we consider the response mean estimation problem where covariates are inexpensive and the response is hard to obtain. As in the main text, we set $n = 5000$. Let $\bZ$ be a $q$-dimensional covariate vector  with independent $U[-2.5,2.5]$ components, where $U[-2.5, 2.5]$ is the uniform distribution on $[-2.5, 2.5]$. Suppose the response
		\[Y = \theta_{0} + \bzeta_{q}^{\T}\bZ + \nu_{1}(\bZ) \epsilon, \]
		where $\theta_{0} = 1$, $\bzeta_{q} = (0.5/\sqrt{q}, \dots, 0.5/\sqrt{q})^{\T}$, $\nu_{1}(z) = \sqrt{0.1 + (2\bzeta_{q}^{\T}z)^{4}}$, $\epsilon$ is the standard normal error. In this example, we let $\bV = \bZ$ be the vector of first-phase variables and $\bU = Y$ be the second-phase variable. The parameter of interest is the response mean $\theta_{0} = E[Y]$. We take $\kappa_{n} = \varpi/\{1 + \log(\varpi n/q)\}$ in this simulation.
		Figure~\ref{fig: mean boxplot} is the boxplot based on the results of $500$ simulations.
		\begin{figure}
			\centering
			\subcaptionbox{}{\includegraphics[scale=0.35]{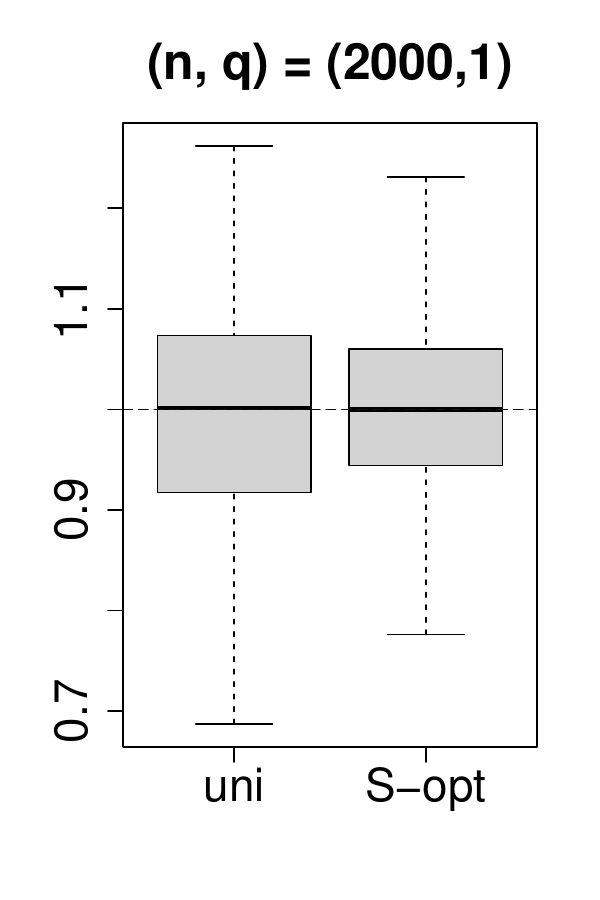}}
			\subcaptionbox{}{\includegraphics[scale=0.35]{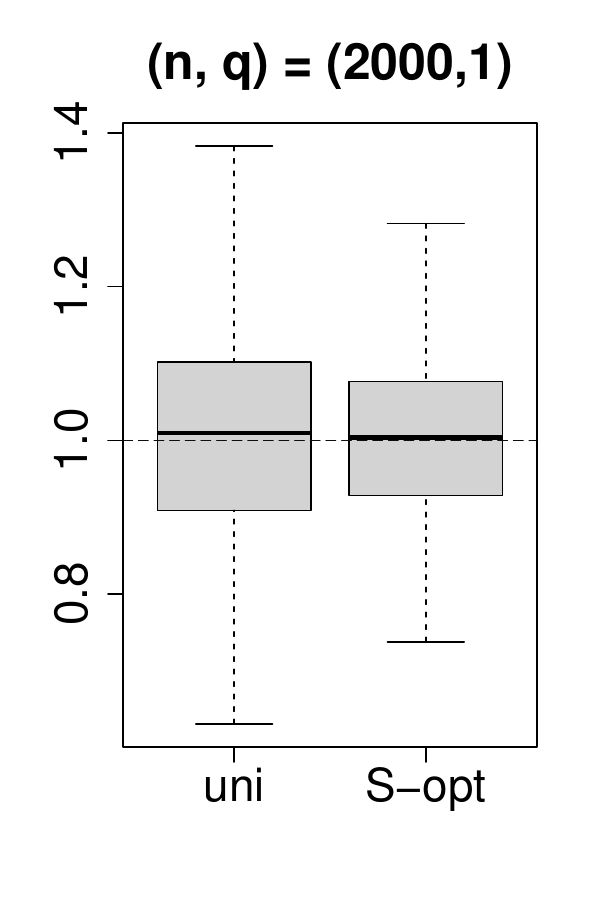}}
			\subcaptionbox{}{\includegraphics[scale=0.35]{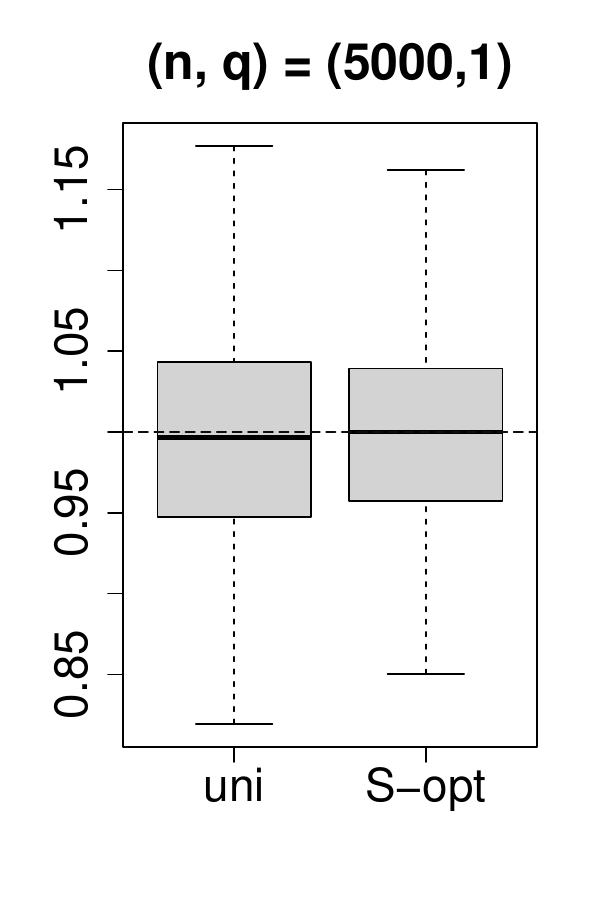}}
			\subcaptionbox{}{\includegraphics[scale=0.35]{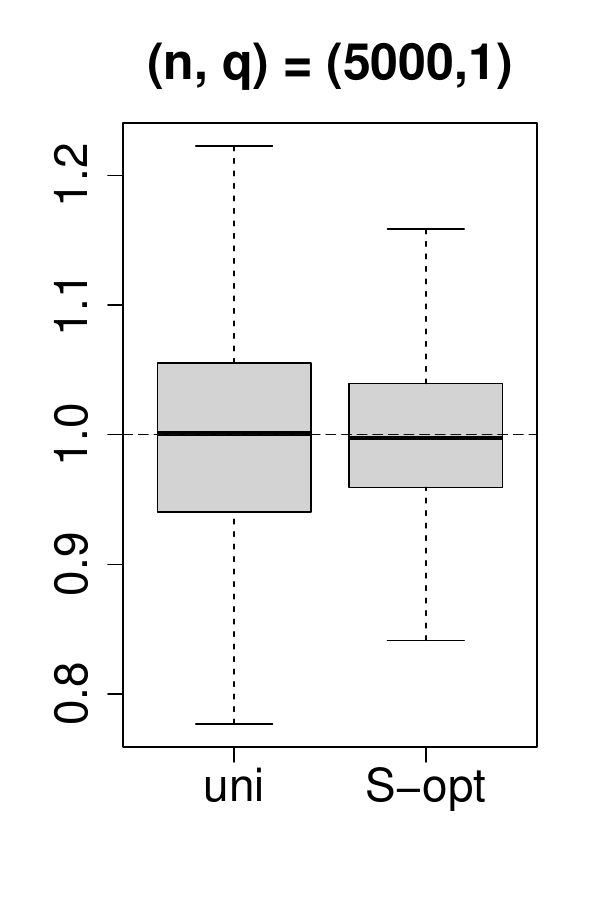}}
			\caption{Boxplots for the mean estimation with different combinations of $n$ and $q$; dashed lines are the true values.}\label{fig: mean boxplot}
		\end{figure}
		As can be seen in Fig.~\ref{fig: mean boxplot}, the estimation efficiency is improved under $\tilde{\rho}_{\cS}$ compared to the uniform rule. The improvement is observed for different combinations of $n$ and $q$, and is particularly pronounced when $q = 5$.
		The REs are $1.5038$, $1.7409$, $1.5103$, and $2.1777$ when $(n,q) = (2000, 1), (2000, 5), (5000, 1)$, and $(5000, 5)$, which indicates that the efficiency is significantly improved under the proposed optimal sampling rule compared to that under the uniform rule.
		
		In the following, we consider the problem of multi-dimensional parameter estimation and evaluate the effectiveness of the sampling rule $\tilde{\rho}_{\cC}$ and $\tilde{\rho}_{\cG}$. We consider a two-dimensional response variable $Y = (Y_{1}, Y_{2})^{\T}$. Suppose
		\[
		\begin{aligned}
			&Y_{1} = \theta_{01} - \bzeta_{q}^{\T}\bZ + \nu_{1}(\bZ)\epsilon_{1},\\
			&Y_{2} = \theta_{02} + \sin(\bzeta_{q}^{\T}\bZ) + \nu_{2}(\bZ)\epsilon_{2},
		\end{aligned}
		\]
		where $\theta_{01} = 1$, $\theta_{02} = 0$, $\nu_{1}(z)$ is introduced in the scalar case, $\nu_{2}(z) = \exp(2\bzeta_{q}^{\T}z)$, $\epsilon_{1}$ and $\epsilon_{2}$ are independent standard normal errors, and $\bZ$ is defined in the same way as in the scalar case. We consider the estimation of the two-dimensional response mean $\theta_{0} = (\theta_{01}, \theta_{02})^{\T} = (E[Y_{1}], E[Y_{2}])^{\T}$ under two-phase designs.
		Table~\ref{table: mv mean} reports the bias and standard error (SE) of the one-step estimator with different sampling rules based on $500$ simulations, and the RE compared to the uniform rule. 
		\begin{table}
			\centering
			\caption{Bias, SE, and RE in two-dimensional mean estimation}
			\begin{tabular}{llcccccc}
				\multirow{2}{*}{$(n, q)$}&\multirow{2}{*}{Rule} & \multicolumn{3}{c}{Estimate of $\theta_{01}$}  & \multicolumn{3}{c}{Estimate of $\theta_{02}$}\\
				& & Bias & SE & RE & Bias & SE& RE \\
				\specialrule{0em}{-4pt}{-4pt}\\
				\multirow{3}{*}{$(2000, 1)$}
				&uni & 0.0017 & 0.1224 & 1.0000 & -0.0012 & 0.1601 & 1.0000\\
				&C-opt & -0.0028 & 0.0897 & 1.8620 & -0.0001 & 0.1201 & 1.777\\
				&G-opt & 0.0007 & 0.0965 & 1.6088 & 0.0050 & 0.1240 & 1.667\\
				\specialrule{0em}{-4pt}{-4pt}\\
				\multirow{3}{*}{$(2000, 5)$}
				&uni & 0.0068 & 0.1441 & 1.0000 & 0.0254 & 0.2305 & 1.0000\\
				&C-opt & 0.0024 & 0.1031 & 1.9535 & 0.0053 & 0.1526 & 2.2816\\
				&G-opt & 0.0060 & 0.1068 & 1.8205 & 0.0036 & 0.1522 & 2.2936\\
				\specialrule{0em}{-4pt}{-4pt}\\
				\multirow{3}{*}{$(5000, 1)$}
				&uni & 0.0008 & 0.0719 & 1.0000 & 0.0047 & 0.0981 & 1.0000\\
				&C-opt & -0.0038 & 0.0601 & 1.4312 & 0.0019 & 0.0754 & 1.6928\\
				&G-opt & 0.0010 & 0.0599 & 1.4408 & -0.0003 & 0.0722 & 1.8461\\
				\specialrule{0em}{-4pt}{-4pt}\\			
				\multirow{3}{*}{$(5000, 5)$}
				&uni & -0.0056 & 0.0889 & 1.0000 & -0.0047 & 0.1639 & 1.0000\\
				&C-opt & -0.0006 & 0.0665 & 1.7871 & -0.0018 & 0.0947 & 2.9954\\
				&G-opt & 0.0010 & 0.0620 & 2.0560 & -0.0012 & 0.0962 & 2.9027\\
			\end{tabular}\label{table: mv mean}
		\end{table}
		In all cases, the SEs under the sampling rules $\tilde{\rho}_{\cC}$ and $\tilde{\rho}_{\cG}$ are smaller than that under the uniform rule. In some cases, the improvement is very significant with a RE close to $3$. 
		
		\subsection{Linear Regression Coefficient}
		In this subsection, we consider the problem of estimating linear regression coefficients when response variables and a part of covariates are measured in the first phase and other covariates are measured in the second phase. Define $\bZ$ in the same way as in Section \ref{subsec: response mean}. Suppose
		\[X = \sin(\bzeta_{q}^{\T}\bZ) + \nu_{2}(\bZ)\epsilon_{x}, \]
		\[Y = \bzeta_{q}^{\T}\bZ + \theta_{0}X + \epsilon_{y},\] 
		where $\theta_{0} = 1$, $\nu_{2}(z)$ is defined in the last Section \ref{subsec: response mean}, and $\epsilon_{x}$ and $\epsilon_{y}$ are independent and follow a standard normal distribution. Let $V = (Y, Z^{\T})^{\T}$ be the first-phase variable vector and $U = X$ be the second-phase variable. The estimation of the regression coefficient $\theta_{0}$ is considered in this simulation. We take $\kappa_{n} = \varpi/[1 + \log\{\varpi n/(q + 1)\}]$ in this simulation.
		Figure~\ref{fig: reg boxplot} contains boxplots of the estimator in $500$ simulations across different combinations of $n$ and $q$.
		\begin{figure}
			\centering
			\subcaptionbox{}{\includegraphics[scale=0.35]{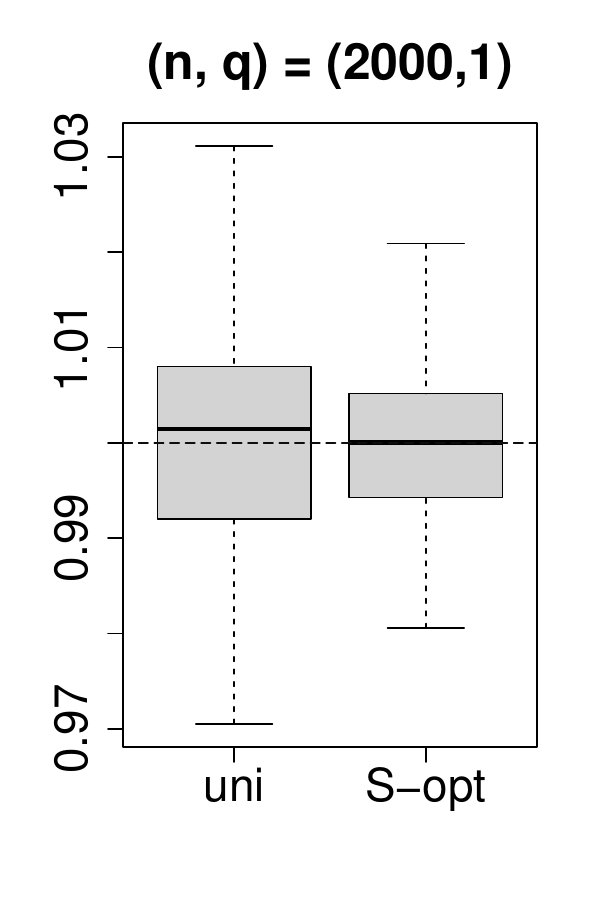}}
			\subcaptionbox{}{\includegraphics[scale=0.35]{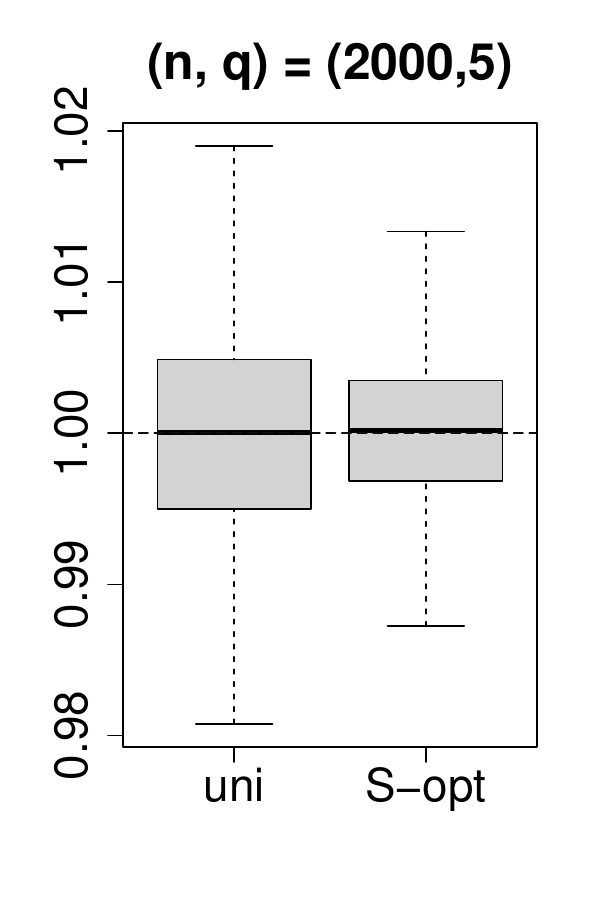}}
			\subcaptionbox{}{\includegraphics[scale=0.35]{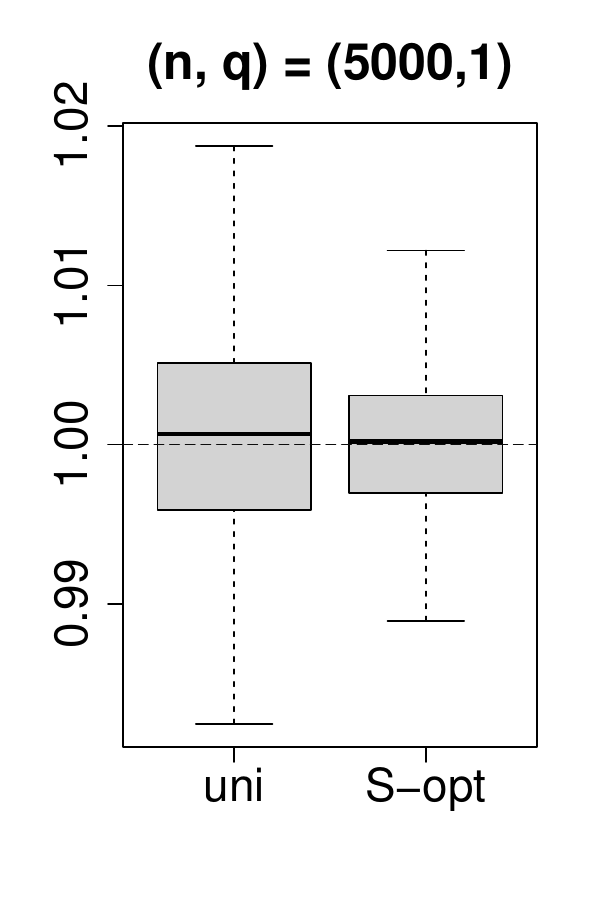}}
			\subcaptionbox{}{\includegraphics[scale=0.35]{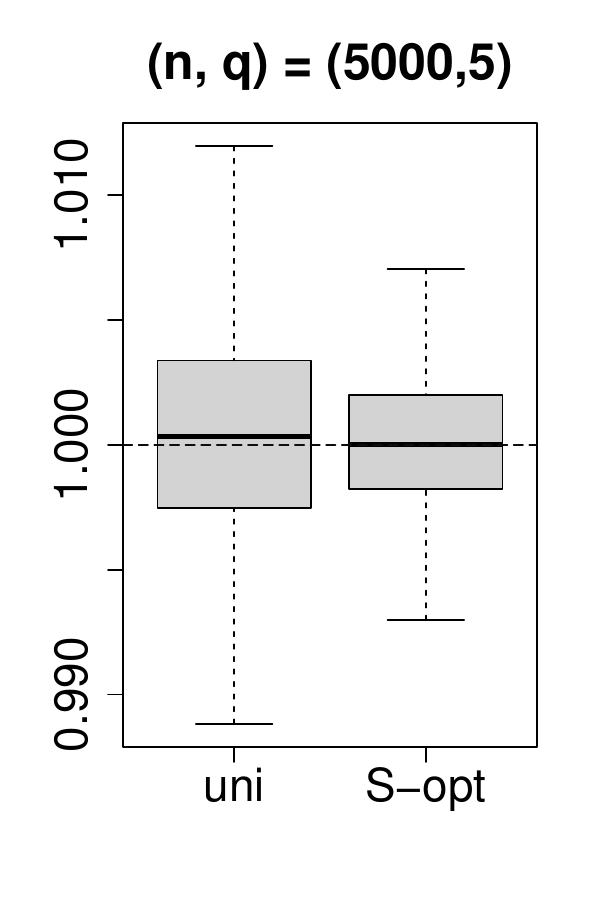}}
			\caption{Boxplots for linear regression coefficient estimation with different combinations of $n$ and $q$; dashed lines are the true values.}\label{fig: reg boxplot}
		\end{figure}
		As can be seen in Fig.~\ref{fig: reg boxplot}, the accuracy of the estimator is improved under $\tilde{\rho}_{\cS}$ compared to that under the uniform rule. The improvements are observed across different combinations of $n$ and $q$, and is particularly pronounced when $(n, q) = (5000, 5)$. The REs are $2.1175$, $2.4225$, $2.3096$, and $2.3206$ when $(n,q) = (2000, 1), (2000, 5), (5000, 1)$, and $(5000, 5)$. 
		
		Next, we consider the case with a two-dimensional regression coefficient of interest. The covariate vector $\bZ$ is defined in the same way as in Section \ref{subsec: response mean}. Let $\bX = (X_{1}, X_{2})^{\T}$  be a two-dimensional covariate vector which satisfies
		\[X_{1} = \bzeta_{q}^{\T}\bZ + \nu_{1}(\bZ)\epsilon_{x1}, \]
		\[X_{2} = - \bzeta_{q}^{\T}\bZ + \nu_{2}(\bZ)\epsilon_{x2}, \]
		where $\nu_{1}(z)$ and $\nu_{2}(z)$ are defined in Section \ref{subsec: response mean}, $\epsilon_{x1}$, $ \epsilon_{x2}$ are independent standard normal variables. The response variable $Y$ satisfies
		\[Y = \bzeta_{q}^{\T}\bZ + \theta_{0}^{\T}\bX + \epsilon_{y},\]
		where $\theta_{0}^{T}=(\theta_{01}, \theta_{02}) = (0, 1)$ and $\epsilon_{y}$ follows a standard normal distribution.
		
		Table~\ref{table: mv reg} reports the bias, SE of the one-step estimator, and the RE compared to the uniform rule based on $500$ simulations.
		\begin{table}
			\centering
			\caption{Bias, SE, and RE in two-dimension regression coefficient estimation}
			\begin{tabular}{llcccccc}
				\multirow{2}{*}{$(n, q)$}&\multirow{2}{*}{Rule} & \multicolumn{3}{c}{Estimate of $\theta_{01}$}  & \multicolumn{3}{c}{Estimate of $\theta_{02}$}\\
				& & Bias & SE & RE & Bias & SE& RE \\
				\specialrule{0em}{-4pt}{-4pt}\\
				\multirow{3}{*}{$(2000, 1)$}
				&uni & -0.0005 & 0.0145 & 1.0000 & 0.0010 & 0.0110 & 1.0000\\
				&C-opt & -0.0012 & 0.0126 & 1.3243 & 4e-04 & 0.0092 & 1.4296\\
				&G-opt &-0.0003 & 0.0133 & 1.1886 & 8e-04 & 0.0088 & 1.5625\\
				\specialrule{0em}{-4pt}{-4pt}\\
				\multirow{3}{*}{$(2000, 5)$}
				&uni & 0.0000 & 0.0126 & 1.0000 & 0.0003 & 0.0076 & 1.0000\\
				&C-opt & 0.0005 & 0.0105 & 1.4400 & 0.0004 & 0.0057 & 1.7778\\
				&G-opt & 0.0000 & 0.0103 & 1.4965 & 0.0003 & 0.0057 & 1.7778\\
				\specialrule{0em}{-4pt}{-4pt}\\
				\multirow{3}{*}{$(5000, 1)$}
				&uni & 0.0001 & 0.0094 & 1.0000 & 0.0009 & 0.0071 & 1.0000\\
				&C-opt & -0.0003 & 0.0079 & 1.4158 & 0.0005 & 0.0055 & 1.6664\\
				&G-opt & -0.0003 & 0.0078 & 1.4523 & 0.0003 & 0.0054 & 1.7287\\
				\specialrule{0em}{-4pt}{-4pt}\\
				\multirow{3}{*}{$(5000, 5)$}
				&uni & 0.0008 & 0.0075 & 1.0000 & 0.0003 & 0.0047 & 1.0000\\
				&C-opt & 0.0001 & 0.0062 & 1.4633 & 0.0001 & 0.0032 & 2.1572\\
				&G-opt & 0.0003 & 0.0060 & 1.5625 & 0.0001 & 0.0032 & 2.1572\\
			\end{tabular}\label{table: mv reg}
		\end{table}
		As seen in Table~\ref{table: mv reg}, the SEs under under the sampling rules $\tilde{\rho}_{\cC}$ and $\tilde{\rho}_{\cG}$ are smaller than those under the uniform rule in all cases. The REs are larger than two in some cases. 
		
		\subsection{Sampling Design with Different Priorities under the Multi-dimensional Setting in Section \ref{sec: sim}}\label{subsec: sim prioritize}
		In this section, under the multi-dimensional setting in Section \ref{sec: sim}, we illustrate the numerical effect of the priority parameter $a = (a_{1}, 1 - a_{1})$ discussed in Remark \ref{remark: priority}. We set $(n, q) = (5000, 1)$. Table \ref{table: mv ATE priortize} presents the bias, SE of the estimator, and the RE compared to the uniform rule, based on $500$ simulations. We consider the one-step estimator under the estimated optimal rule for $\theta_{01}$, $\theta_{02}$, and the G-opt rule with $a_{1} = 0.05, 0.5$ and $0.95$. The results under the uniform sampling rule are also reported for reference. Table \ref{table: mv ATE priortize} shows that biases of the estimator are small under all sampling rules. Notably, the SE for estimating $\theta_{01}$ decreases as $a_{1}$ increases, accompanied by a modest increase in the SE for estimating $\theta_{02}$. Notably, the SEs for both parameter components are smaller than those obtained under the uniform sampling rule, no matter what value $a_{1}$ takes. This desirable property is not achieved for the estimated optimal rule for $\theta_{01}$ or $\theta_{02}$.	
		\begin{table}
			\centering 
			\caption{Bias, SE, and RE in two-dimensional average treatment effect estimation with $(n, q) = (5000, 1)$}
			\begin{tabular}{lcccccc}
				\multirow{2}{*}{Rule} & \multicolumn{3}{c}{Estimate of $\theta_{01}$}  & \multicolumn{3}{c}{Estimate of $\theta_{02}$}\\
				& Bias & SE & RE & Bias & SE& RE \\
				\specialrule{0em}{-4pt}{-4pt}\\
				
				uni & 0.0273 & 0.3113 & 1.0000 & -0.0101 & 0.2787 & 1.0000\\
				S-opt ($\theta_{01}$) & 0.0206 & 0.2463 & 1.5975 & -0.0024 & 0.2977 & 0.8764\\
				S-opt ($\theta_{02}$) & 0.0516 & 0.3639 & 0.7318 & 0.0009 & 0.1851 & 2.2671\\
				G-opt ($a_{1} = 0.05$) & 0.0427 & 0.2814 & 1.2238 & 0.0026 & 0.1982 & 1.9773\\
				G-opt ($a_{1} = 0.5$) & 0.0254 & 0.2585 & 1.4502 & 0.0058 & 0.2222 & 1.5732\\
				G-opt ($a_{1} = 0.95$) & 0.0262 & 0.2517 & 1.5296 & -0.0053 & 0.2262 & 1.5181\\
			\end{tabular}\label{table: mv ATE priortize}
		\end{table}
		
		\subsection{Comparison with Alternative Estimators}\label{subsec: compare alt}
		
		In this section, we further investigate the numerical performance of the proposed one-step estimator $\hat{\theta}$ and compare it with several alternative estimators.
		There are multiple alternative ways to construct an efficient estimator in two-phase studies. For example, one may construct an estimator, based on the efficient influence function and the one-step estimation technique, using all observations except those in the pilot sample. Denote the resulting estimator by $\hat{\theta}_{\rm ex}$. Conditional on the pilot sample, the remaining observations are i.i.d. Thus, the asymptotic properties of $\hat{\theta}_{\rm ex}$ can be established using standard arguments conditional on the pilot sample. In addition, the efficiency loss lead by excluding the pilot sample is asymptotically negligible because $\kappa_{n} \to 0$. Another reasonable approach is to perform an inverse-variance-weighted meta-analysis combining the pilot sample estimator $\tilde{\theta}$ and the estimator $\hat{\theta}_{\rm ex}$. Denote the resulting estimator by $\hat{\theta}_{\rm ivw}$. Figure \ref{fig: ATE alt boxplot} presents boxplots of different average treatment effect estimators under the optimal sampling rule for a scalar parameter $\tilde{\rho}_{\cS}$ over $500$ simulations with $q = 1$ and $n = 2000, 5000, 20000, 50000$.  For reference, the results for the one-step estimator under the uniform sampling rule are also included.
		\begin{figure}
			\centering
			\subcaptionbox{}{\includegraphics[scale=0.35]{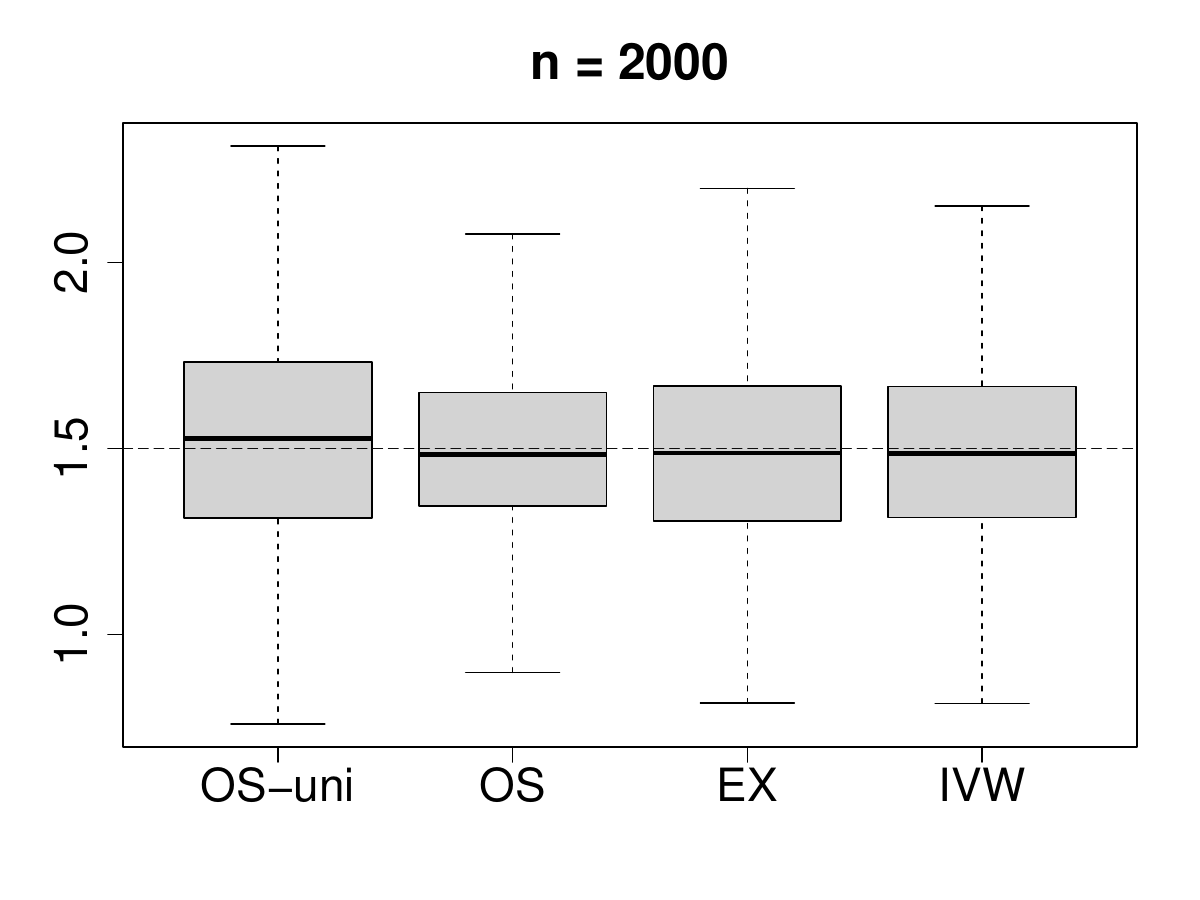}}
			\subcaptionbox{}{\includegraphics[scale=0.35]{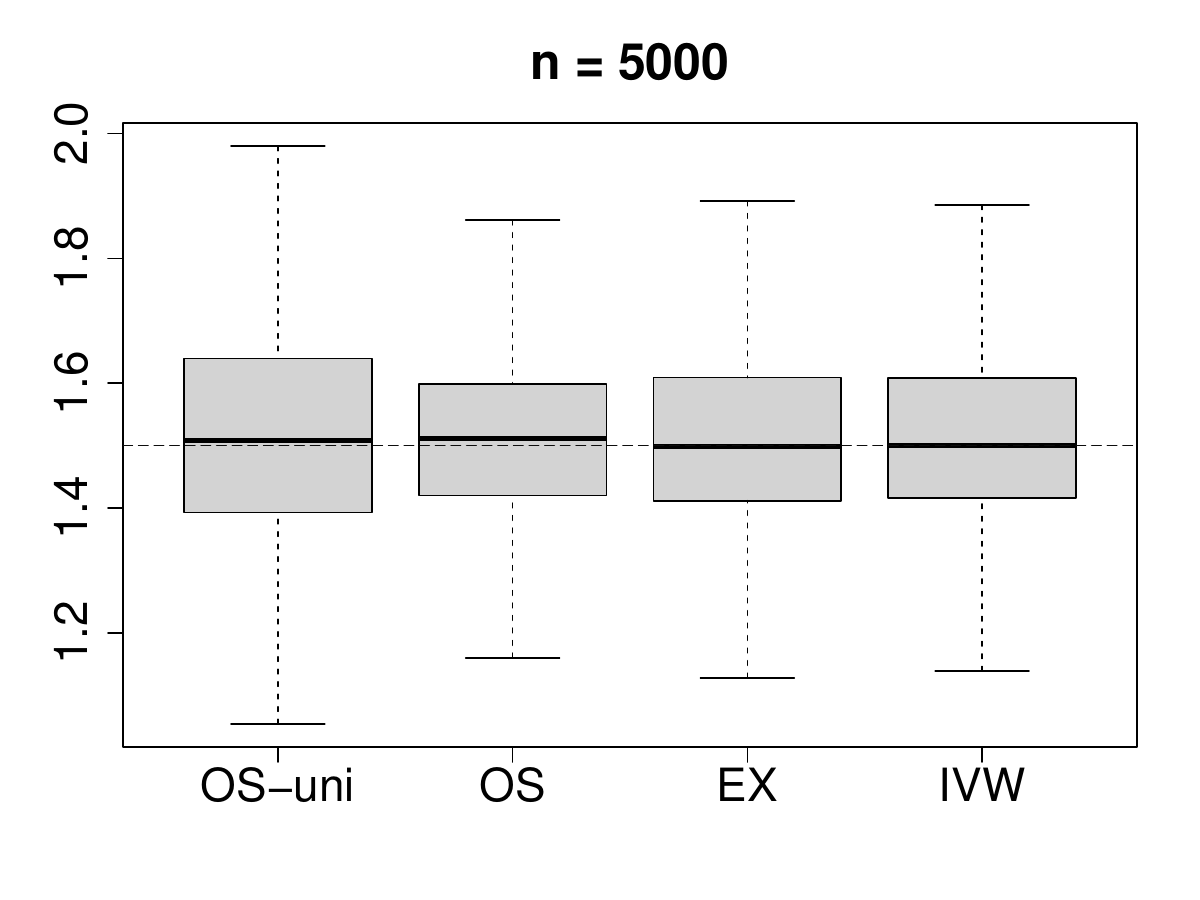}}
			\subcaptionbox{}{\includegraphics[scale=0.35]{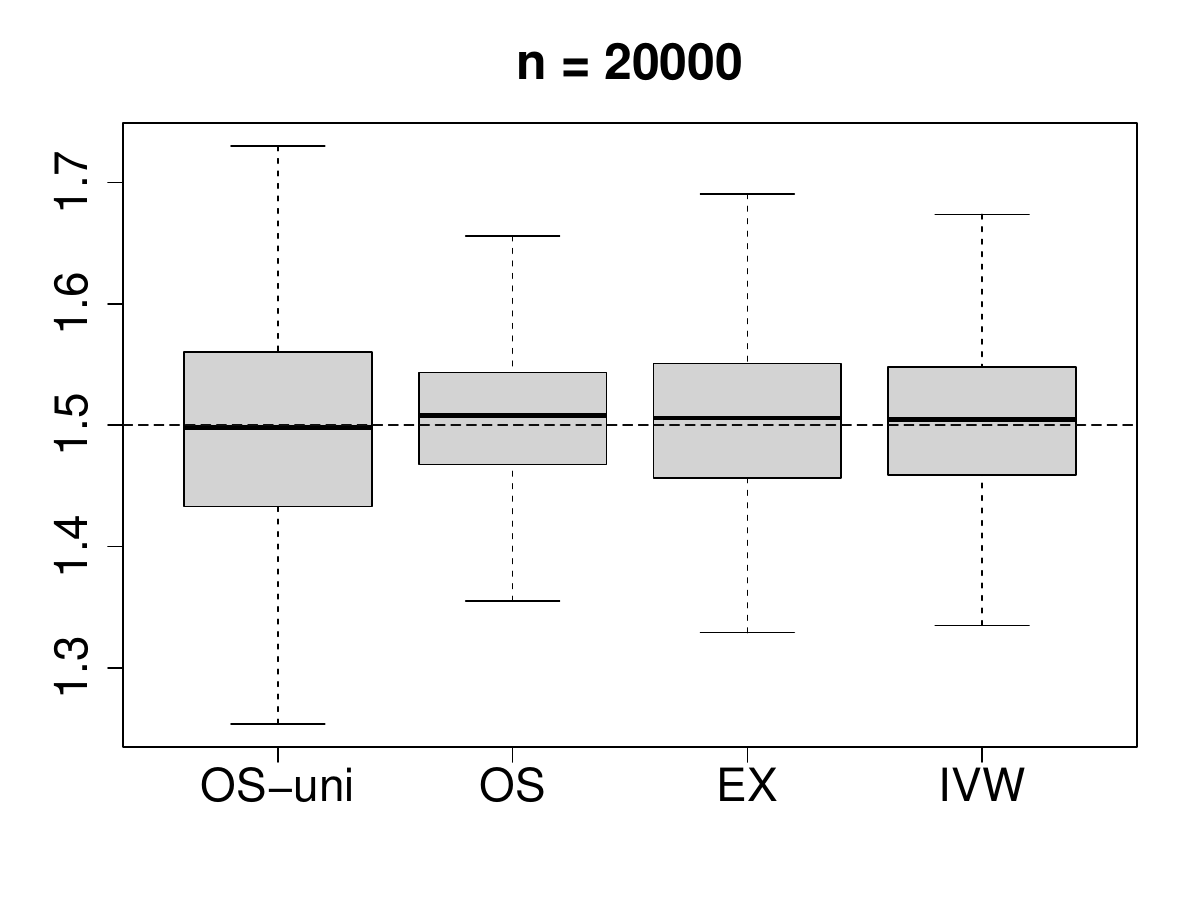}}
			\subcaptionbox{}{\includegraphics[scale=0.35]{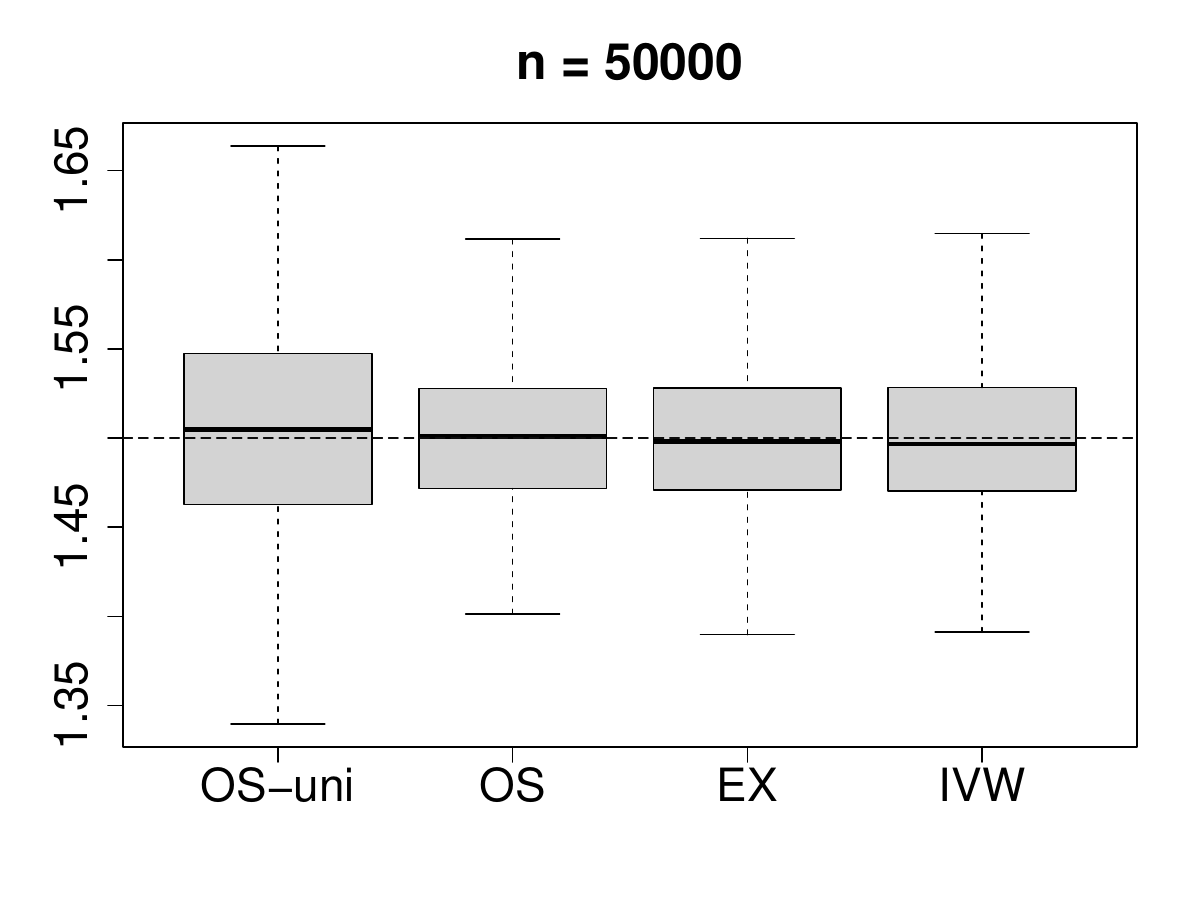}}
			\caption{Boxplots for the average treatment effect estimation under $q = 1$ and $n = 2000, 5000, 20000, 50000$; OS-uni: $\hat{\theta}$ under the uniform sampling rule; OS: $\hat{\theta}$ under $\tilde{\rho}_{\cS}$; EX: $\hat{\theta}_{\rm ex}$ under $\tilde{\rho}_{\cS}$; IVW: $\hat{\theta}_{\rm ivw}$ under $\tilde{\rho}_{\cS}$; dashed lines are the true values.}\label{fig: ATE alt boxplot}
		\end{figure}
		Figure \ref{fig: ATE alt boxplot} shows that, under $\tilde{\rho}_{\cS}$,  all three estimators have higher efficiency than the one-step estimator under the uniform sampling rule.  The performance of the three estimators under $\tilde{\rho}_{\cS}$ is similar when $n = 50000$. However, for smaller sample sizes ($n = 2000, 5000, 50000$), the proposed one-step estimator $\hat{\theta}$ demonstrates better finite-sample efficiency than both $\hat{\theta}_{\rm ex}$ and $\hat{\theta}_{\rm ivw}$ under $\tilde{\rho}_{\cS}$. In addition, the confidence interval based on $\hat{\theta}$ and normal approximation performs well in the simulation.
		Coverage rates of the confidence intervals constructed using $\hat{\theta}$ and the influence function-based standard error estimator are $94.4\%$, $93.8\%$, $96.8\%$, and $96.8\%$ when $n = 2000, 5000, 20000, 50000$, respectively.
		\bibliographystyle{asa}
		\bibliography{TwoPhase-arXiv.bib}
\end{document}